\documentclass[runngingheads]{llncs}
\usepackage[T1]{fontenc}
\usepackage{tikz}
\usepackage{pgfplots}
\usepgfplotslibrary{statistics}
\usetikzlibrary{automata, arrows.meta, positioning}
\usepackage{stmaryrd}
\usepackage{subcaption}
\usepackage{wrapfig}
\usepackage{amsmath,amsfonts,amssymb}

\newcommand{\ini}{\mathfrak{i}}
\newcommand{\sem}[1]{\llbracket #1 \rrbracket}
\newcommand\recht\operatorname

\newtheorem{conclusion}{Conclusion}

\allowdisplaybreaks

\begin{document}

\title{Towards realistic large random models of labeled transition systems and their 0-1 laws} 



\author{Milan {Lopuha\"a-Zwakenberg}}

\authorrunning{Milan LZ}

\institute{University of Twente, the Netherlands\\
\email{m.a.lopuhaa@utwente.nl}}

\maketitle              
\begin{abstract}
Model checking is the automated verification of properties (specified in some modal logic) in labeled transition systems (LTSs); it is an essential tool in ensuring software systems function as intended. State spaces of software grow exponentially, and heuristics are needed to ensure model checking remains feasible in real-world applications. Heuristics, in turn, require a good understanding on the typical behaviour of LTSs.

In this paper, we use random graph theory to create a probabilistic model of large LTSs. From a theoretical analysis of the creation of large LTSs, backed by empirical data from the Model Checking Contest, we endow these models with realistic parameter values.

Then, we analyze the asymptotic behaviour of this model under LTL and CTL, two modal logics popular in model checking. We show that, depending on the precise model, as the size grows to infinity we either have a convergence law (for every formula, the probability that it holds converges to a limit) or a 0-1 law (...and this limit is 0 or 1). We also discuss the theoretical complexity of determining these limits, and give algorithms for doing so. These results are the starting point towards a deep theoretical understanding of typical LTS behaviour, and highlight the promising applicability of random graph theory to model checking.
\keywords{Model checking \and Random graphs \and 0-1 laws}
\end{abstract}

\section{Introduction}

\subsubsection{Model checking.} Model checking is the automated verification of properties (expressed as arbitrary properties in a given logic) of labeled transition systems (LTSs). Typically, these LTSs are models of software systems, and the logics are modal logics expressing properties of the software's possible runs; this makes model checking an important tool in ensuring bug-free software \cite{bauch2014ltl,chaki2004state}.

The fundamental challenge of model checking is the \emph{state-space explosion}: the size of the LTS model of a software system grows exponentially in its number of lines of code. Therefore, computational efficiency is essential for model checkers to meet the demands of industry. An important tool in this respect are heuristics: algorithms that exploit typical properties of large LTSs, making them work well in practice even if the worst-case behaviour is still the same. Such methods include identifying more probable paths in exhaustive searches \cite{edelkamp2006large} or merging equivalent states in state spaces \cite{jensen2022automata}. Developing heuristics requires a good understanding of typical large LTSs, leading to the following question:\\

\noindent \textbf{Question.} \emph{What is the behaviour of realistic large LTSs?}\\

Unfortunately, no good theoretical framework exists for answering this question. In its absence, heuristics are usually evaluated on benchmarks, such as the Model Checking Contest benchmark \cite{amat2023behind}. This can provide solid empirical evidence of the effectiveness of heuristics, but it has the potential to lead to `overfitting': heuristics too tailored to the benchmarks to work well in more general situations. Furthermore, a good understanding of the theory behind realistic large LTS can lead to the development of new heuristics.

\noindent \textbf{Random graphs.} Random graph theory creates and studies probabilistic models of large graphs; it has been used as a model for many large real-world networks and interactions, including wireless networks \cite{lu2013exploring}, epidemiology \cite{britton2020mathematical}, and social media \cite{adriaans2018weighted}. The core philosophy is that for very large networks, the randomness of the probabilistic model cancels out, and the `typical' behaviour of the network emerges. In practice, many real-world networks can be described by relatively simple probabilistic models, which means that studying their properties is mathematically tractable. Since LTSs are directed graphs with some extra decoration, it makes sense to look for random graph models for LTSs.

\noindent \textbf{0-1 laws.} Rather than specific mathematical properties, model checking is concerned with the behaviour w.r.t. arbitrary formulas in a given logic. In random graph theory, this behaviour can often be codified in terms of \emph{0-1 laws}. A 0-1 law for a logic $\mathcal{L}$ and a random graph model $G_n$ of size $n$, states that for every $\varphi \in \mathcal{L}$, as $n \rightarrow \infty$, we either have $\mathbb{P}(G_n \models \varphi) \rightarrow 0$ or $\mathbb{P}(G_n \models \varphi) \rightarrow 1$. Hence practically every instance of $G_n$ behaves the same; the 0-1 law codifies the random model's typical behaviour. 0-1 laws were first proven for Erd\H{o}s-Rényi models \cite{fagin1976probabilities}, the most elementary random graphs, and have since been extended both to more elaborate graph models \cite{shelah1988zero} and to more general mathematical structures and logics \cite{KOLAITIS1992258}. In some cases, rather than a 0-1 law,  a weaker \emph{convergence law} holds, stating that $\mathbb{P}(G_n \models \varphi)$ has a limit as $n \rightarrow \infty$ \cite{lynch2005convergence}.

From the perspective of random graphs and 0-1 laws/convergence laws, the question above can be rephrased as follows:\\

\noindent \textbf{Question.} \emph{What is a good probabilistic model for large LTSs, and do modal logics follow 0-1 laws or convergence laws on this model?}\\

Our aim is to answer this question. Specifically, we consider \emph{Kripke structures} (KS) as our LTS model, and \emph{Linear Temporal Logic} (LTL) and \emph{Computational Tree Logic} (CTL) as our modal logics. These LTSs and logics play an important role in model checking of software and hardware systems \cite{schneider1995verifying,chaki2004state,baier2008principles,classen2010ctl}.

\noindent \textbf{Existing work.} 0-1 laws have been studied for so-called \emph{$\mathcal{S}$-structures}, a very general mathematical object that also includes KSs. 0-1 laws for $\mathcal{S}$-structures have been developed for many logics \cite{Compton,GRANDJEAN1983180,fagin1976probabilities,Kaufmann2001,KAUFMANN1985285}. In particular, random $\mathcal{S}$-structures follow a 0-1 law under iterative fixed-point logic \cite{BLASS198570}, which contains LTL and CTL. The work \cite{dong20250} further expands on this, giving new proofs tailored to the model checking setting and studying the computational complexity of deciding, for a given formula, between the `0' and `1' of a 0-1 law.

However, all these approaches only consider the basic Erd\H{o}s-Rényi model of random graphs: all transitions (and in general $\mathcal{S}$-structures, other features) have the same existence probability $p$, which is contant in $n$, and the events of their existence are all independent. This model is attractive from a theoretical perspective, but it has not been evaluated whether this model fits real-world KSs well. In fact, as we will see in Section \ref{sec:prob}, this model results in KSs that have considerably more transitions and initial states than what one finds in practice. Therefore, a more realistic probabilistic model is needed.

The failure of constant Erd\H{o}s-Rényi graphs to model real-world networks is well known in random graph theory. To account for this, many new models have been created, in order to better model e.g. sparse graphs \cite{hofstad2017random}, graphs with small distances \cite{barabasi1999emergence} and product graphs \cite{leskovec2010kronecker}. However, such models are typically only developed for graphs, not more general structures; and 0-1 laws are only studied for some of these models \cite{shelah1988zero}, and only for first-order logic.

We can conclude that model theory studies very general structures and logics, but only basic probabilistic models; and random graph theory studies a wide range of probabilistic models, but only basic structures and logics. For realistic models of KSs, a new approach is needed.

\subsubsection{Contributions.} Our contributions are twofold: we develop a realistic probabilistic model $\mathbb{K}$ for large KSs, and we prove 0-1 laws for $\mathbb{K}$ for LTL and CTL. For $\mathbb{K}$, we need to define how our probabilistic model affects the presence of state transitions (edges), initial states, and atomic propositions (vertex properties). We take the Erd\H{o}s-Rényi approach in considering all of these independent events and keeping probabilities homogenous; thus each transition is present with probability $p_{\recht{t}}$, each state is initial with probability $p_{\recht{i}}$, and each state satisfies atomic proposition $a$ with probability $p_a$. While this is an oversimplification to some extent, independence is a common feature of random graph models \cite{bollobas2007phase,leskovec2010kronecker,hofstad2024random}. The homogeneity assumption is less essential, but primarily makes computation more convenient; we will return to the impact of this assumption in the conclusion.

The main question is then, how $p_{\recht{t}},p_{\recht{i}},p_a$ behave as functions of the size $n$ of $\mathbb{K}$. To answer this question, we consider the fact that large KSs are created as the \emph{parallel composition} of many small KSs. The parallel composition is more or less equal to the product graph, except that some transitions are shared, allowing the basic KSs to communicate. We show that, under some mild assumptions on the number of shared transitions, this results in 
\[
p_{\recht{t}} \propto \tfrac{\log n}{n}, \quad \quad p_{\recht{i}} = n^{-r},\quad \quad  p_{a} = \textrm{ constant,}
\]
for some $r \in [0,1]$. This behaviour is considerably different from the KSs in \cite{dong20250}, where all $p$ are considered to be constant. Empirical data from the Model Checking Contest \cite{amat2023behind} show that these probabilities accurately model real-world KSs. We also consider the model $\mathbb{K}^1$ with only a single initial state, as this occurs often in practice.

Our main mathematical result is the existence of 0-1 laws for $\mathbb{K}$, and convergence laws for $\mathbb{K}^1$. 0-1 laws fail for $\mathbb{K}^1$, since the validity of a formula inherently depends on the random properties of the single initial vertex.

\begin{theorem}[Theorems \ref{thm:01ltl},\ref{thm:01ltlone},\ref{thm:ctl} paraphrased] $\mathbb{K}$ satisfies a 0-1 law for LTL, and $\mathbb{K}^1$ a convergence law. For large enough parameter choices the same holds for CTL. Computing the limit probabilities is NP-hard for LTL and linear for CTL (for both $\mathbb{K}$ and $\mathbb{K}^1$), and we give algorithms.
\end{theorem}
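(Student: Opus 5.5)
The plan is to study the random Kripke structure $\mathbb{K}_n$ on $n$ states with $p_{\recht{t}} = c\log n/n$, $p_{\recht{i}} = n^{-r}$ and constant $p_a$. First I establish structural facts that hold with probability tending to $1$, and then reduce formula evaluation to a finite, $n$-independent object.

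\textbf{Structural lemmas.} I will show, by first-moment and union-bound arguments:
\begin{enumerate}
\item For $c$ large enough, every state has at least one successor. If this fails, deadlocks occur with a limiting Poisson probability and must be handled separately.
\item The graph is strongly connected with diameter $O(\log n/\log\log n)$.
\item For every fixed finite pattern, i.e.\ a finite sequence of valuations $\sigma_1\cdots\sigma_k$ over the atomic propositions, the following ``extension property'' holds: from every state there are paths realizing that pattern, and realizing any fixed ultimately periodic word $uv^\omega$.
\end{enumerate}
The extension property is the crux. With degree $\approx c\log n$, each step from a given state reaches a given valuation with probability $1-n^{-\Omega(c)}$. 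A union bound over $n$ states and a fixed pattern length then gives probability $1-o(1)$. For the $\omega$-part I route the path back into a fixed cycle realizing $v$. Such a cycle exists and is reachable by expansion, because the random graph contains cycles through vertices of each valuation within short distance.

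\textbf{LTL.} A Kripke structure satisfies an LTL formula $\varphi$ iff every path from every initial state satisfies it. By the extension property, the set of traces from any state is, with high probability, ``complete enough'': every ultimately periodic word over the valuations with a positive-probability support is realized. Since $\neg\varphi$ is a nonempty $\omega$-regular language iff it contains an ultimately periodic word, $\mathbb{P}(\mathbb{K}_n\models\varphi)\to 1$ iff every word whose letters lie in the support of the valuation distribution satisfies $\varphi$, and $\to 0$ otherwise.

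The first letter deserves care. The initial state's own valuation is random. In $\mathbb{K}$ there are $n^{1-r}\to\infty$ initial states when $r<1$, so all first letters occur, which gives the 0-1 law. In $\mathbb{K}^1$ I condition on the valuation $\sigma$ of the single initial state. The limit is then $\sum_\sigma \mathbb{P}(\sigma)\,[\,\text{all words starting with }\sigma\text{ satisfy }\varphi\,]$, which gives the convergence law. For the boundary case $r=1$, the number of initial states is Poisson, so I would again get only convergence and would restrict the theorem accordingly.

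The limit is thus LTL validity over a restricted alphabet. NP-hardness follows by reducing SAT, or better, by noting that propositional validity/satisfiability already embeds when no temporal operators are used. The algorithm is standard Büchi automaton emptiness.

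\textbf{CTL.} Here branching matters, so I use a different argument. I prove by induction on the formula that, with high probability, the truth value of each subformula at a state depends only on that state's valuation. The base case is immediate. For $\mathsf{EX}\psi$ and $\mathsf{AX}\psi$, a state has $\Theta(\log n)$ successors, so it has successors of every valuation with high probability, and the truth value becomes a constant. This requires $c$ large enough for a union bound over all $n$ states; this is the ``large enough parameter'' condition. For $\mathsf{EU}$ and $\mathsf{AU}$, the same argument combined with strong connectivity and short cycles shows the truth value is again determined by the valuation.

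This yields a linear-time bottom-up labelling of the subformulas by functions from valuations to $\{0,1\}$. The 0-1 law (respectively the convergence law) then follows as in the LTL case.

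\textbf{Main obstacle.} I expect the main difficulty to be the $\mathsf{AU}$/$\mathsf{AG}$-type cases and the precise union-bound exponent needed for uniformity over all $n$ states. This is where the threshold on $c$ enters. I would first try to show that the extension property holds simultaneously for all states whenever $c\,\min_\sigma p_\sigma>1$.
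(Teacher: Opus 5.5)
Your CTL half is essentially the paper's argument. You reduce bottom-up to valuation-determined (propositional) formulas, and your union bound yields exactly the paper's threshold $c>\max_S\varrho_S^{-1}$. Under that threshold, the property ``every state has a successor of every valuation'' even makes the $\mathsf{AU}$ case a simple greedy argument, so strong connectivity is not needed there.

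\textbf{The gap is in the LTL half.} Your extension property is stated ``from every state'' and obtained by a union bound over all $n$ states. For one state and one valuation $S$, the failure probability is $(1-\varrho_S p_{\recht{t}})^n\approx n^{-c\varrho_S}$. The union bound therefore needs $c\varrho_S>1$ for every $S$, which is precisely the CTL hypothesis; your ``Main obstacle'' paragraph imposes it explicitly.

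The statement, however, claims the LTL 0-1 and convergence laws for every $c>1$. That is the point of ``for large enough parameter choices the same holds for CTL''. The range $1<c\le\max_S\varrho_S^{-1}$ is substantial, since $\min_S\varrho_S\le 2^{-|A|}$. In that range your structural lemma is not merely unproved but false: a.a.s.\ roughly $n^{1-c\varrho_S}\to\infty$ states have no $S$-labelled successor (cf.\ Example~\ref{ex:ctl}). So you prove the LTL part only under the CTL hypothesis. There the $\omega$-part is trivial anyway, because you can greedily follow successors with the required labels forever.

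What is missing for small $c$ is \emph{label-constrained} reachability from a single typical state. To refute a non-$S$-tautology with a lasso $uv^\omega$, the path from the initial state must read \emph{exactly} $u v v v\cdots$. Strong connectivity and small diameter of the whole graph, or ``short cycles through vertices of each valuation'', do not provide this, because those paths pass through arbitrarily labelled states.

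The paper's key step works as follows.
\begin{enumerate}
\item Take the set $T$ of $v_0$-labelled states and the auxiliary graph $T_{\breve{v}}$, in which an edge $x\rightarrow y$ means there is a $\breve{v}$-labelled path from $x$ to $y$.
\item Build linearly many $\breve{v}$-roads with distinct starts and distinct ends (Lemma~\ref{lem:noedges} together with Hall's theorem). Then $T_{\breve{v}}$ dominates an Erd\H{o}s--R\'enyi graph with edge probability $\Theta((\log n)^2/n)$ and is strongly connected. For empty $\breve{v}$, Lemma~\ref{lem:noedges} directly gives a strongly connected component of size $(1-\varepsilon)\varrho_{v_0}n$.
\item All but $o(n)$ of the $S$-labelled states have a $\breve{u}$-path into this set. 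Since the initial state is uniform among the $S$-labelled states and independent of the edges, it a.a.s.\ does too.
\end{enumerate}
No uniformity over all states is needed, which is why this works for every $c>1$. Your ``expansion'' remark could be developed into such an argument, namely a branching exploration of $v$-labelled paths with mean offspring $\Theta(\log n)$ from a fixed start. It must then be carried out for the specific initial state and the specific lasso, rather than through an all-states union bound.
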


This result shows that not only do we have the mathematical statement that 0-1 laws exist, we also make them operational by providing algorithms. This lays the foundation towards a better understanding of the behaviour of large, realistic KSs, paving the way for the development of new heuristics. Furthermore, our probabilistic model can serve as a method to generate large, realistic benchmarks to evaluate model checkers.

Interestingly, our results are completely analogous to those of \cite{dong20250} of a much more simplistic probabilistic model; our proofs, however, are completely different. That work crucially relies on \emph{extension axioms}, which roughly state that every KS $K$ can be found in the random structure if $n$ is large enough. This does not hold for our $\mathbb{K}$, which is too sparse for this property. Instead, considerably more intricate reasoning is needed. While existing results from random graph theory cannot be directly used, because these usually do not consider infinite paths, we use many similar strategies, especially with respect to proving connectivity or the existence of certain substructures \cite{hofstad2017random}. Thus, this work highlights the relevance of random graph theory for model checking.

All proof sketches are expanded into full proofs in the appendix.

\section{Kripke structures} \label{sec:KS}

We first recap the definition of Kripke structures (KSs): LTSs with state labels and nondeterministic transitions \cite{kripke1963semantical}. KSs are the typical models considered in software model checking \cite{baier2008principles}. We follow model checking terminology and talk about \emph{states} and \emph{transitions} rather than \emph{vertices} and \emph{edges}/\emph{arcs}.

\begin{definition}
Let $A$ be a finite set of \emph{atomic propositions}. An \emph{$A$-labeled digraph} is a tuple $(V,E,I,\ell)$ where $(V,E)$ is a directed graph (possibly with self-loops) of \emph{states} and \emph{transitions}, and $\ell$ is a map $\ell\colon V \rightarrow 2^A$ and $I \subseteq V$ is the set of \emph{initial states}. A labeled digraph is a \emph{Kripke strucure} if every state has at least one outgoing transition, and if $I$ is nonempty.
\end{definition}

From the perspective of software systems, the \emph{state space} $V$ represents the possible states the program could be in, and $E$ represents the ways in which the software can change states (modeled non-deterministically). The set of initial states $I$ is nonempty, and often has only one element. The condition that every state has at least one outgoing transition is there to prevent deadlocks.

\begin{figure}[t]
\begin{tikzpicture}
\node (v0) [state,accepting] at (1,0) {};
\node (v1) [state] at (1,-2) {$a_1,a_2$};
\path[-stealth,thick]
(v1) edge[color=blue, bend right] node[right] {$m_3$} (v0)
(v1) edge[bend left] node[left] {$m_1$} (v0)
(v1) edge[color=red,loop below] node {$m_2$} ();

\node (q0) [state, accepting] at (4,-5) {};
\node (q1) [state] at (6,-5) {$a_2$};
\node (q2) [state, accepting] at (8,-5) {$a_3$};
\path[-stealth,thick]
(q0) edge[color=red,bend right] node[below] {$m_2$} (q1)
(q1) edge[color=blue,bend right] node[above] {$m_3$} (q0)
(q1) edge node[above] {$m_4$} (q2)
(q2) edge[color=red,loop above] node{$m_2$} ();

\node (w00) [state,accepting] at (4,0) {};
\node (w01) [state] at (6,0) {$a_2$};
\node (w02) [state,accepting] at (8,0) {$a_3$};
\node (w10) [state] at (4,-2) {$a_1,a_2$};
\node (w11) [state] at (6,-2) {$a_1,a_2$};
\node (w12) [state] at (8,-2) {$a_1,a_2,a_3$};
\path[-stealth,thick]
(w10) edge node[left] {$m_1$} (w00)
(w11) edge node[right] {$m_1$} (w01)
(w12) edge node[left] {$m_1$} (w02)
(w01) edge node[above] {$m_4$} (w02)
(w11) edge node[above] {$m_4$} (w12)
(w10) edge[color=red] node[above] {$m_2$} (w11)
(w12) edge[color=red,loop right] node {$m_2$} ()
(w11) edge[color=blue] node[below left] {$m_3$} (w00);

\draw (1,-4) node {$K$};
\draw (3,-5) node {$K'$};
\draw (6,1) node {$K\mathbin{\parallel}K'$};

\draw[dashed] (3,1) -- (3,-3) -- (9,-3);
\end{tikzpicture}
\caption{Example of a parallel composition $K\mathbin{\parallel}K'$. The label sets are $A = \{a_1,a_2\}$, $A' = \{a_2,a_3\}$, $M = \{m_1,m_2,m_3\}$, $M' = \{m_2,m_3,m_4\}$. The elements of the set $\ell(v)$ are inscribed within each state $v$. Shared actions are coloured.} \label{fig:parco}
\end{figure}

State spaces are typically very large: sizes of $>10^{100}$ are not out of the ordinary. Clearly, such models are impossible to validate. Instead, one usually obtains large KSs by composing smaller KSs that are easier to check by hand. The main tool is \emph{parallel composition}. This is similar to a product graph, except that some transitions are shared between the components to represent communication. To define which transitions are shared, we have to label them:

\begin{definition}
Let $A$ and $M$ be finite sets. An \emph{$(A,M)$-bilabeled digraph} is a tuple $(V,(E_m)_{m \in M},I,\ell)$ where $V$ is a set, each $E_m \subseteq V \times V$ is a subset, $I \subseteq V$ is a subset, and $\ell\colon V \rightarrow 2^A$ is a function. Its associated $A$-labeled digraph is $(V,I,\bigcup_mE_m,\ell)$.
\end{definition}

$E_m$ is the set of transitions that can be taken via action $m$; to go from a bilabeled digraph to a labeled one simply means forgetting the action corresponding to each transition.

To get the parallel composition, we take the product state space, and we add every transition whose label occurs only in one of the two transition label sets. For the shared labels, transitions have to be taken in both components simultaneously. Formally, this is defined as follows. We write $\Delta_V = \{(v,v) \mid v \in V\} \subseteq V \times V$.

\begin{definition} \label{def:parco}
Let $A,A',M,M'$ be finite sets. Let $K = (V,(E_m)_{m\in M},I,\ell)$ be an $(A,M)$-bilabeled digraph, and let $K' = (V',I',(E'_m)_{m \in M'},\ell')$ be an $(A',M')$-bilabeled digraph. Then the \emph{parallel composition} $K \mathbin{\parallel} K'$ is the $(A \cup A',M \cup M')$-bilabeled digraph  $(V'',(E''_m)_{m \in M \cup M},I'',\ell'')$ defined as follows:
\begin{align*}
V'' &= V \times V',\\
E''_m &= \begin{cases}
E_m \times \Delta_{V'}, & \textrm{ if $m \in M \setminus M'$},\\
\Delta_V \times E'_m, & \textrm{ if $m \in M \setminus M'$},\\
E_m \times E'_m, & \textrm{ if $m \in M \cap M'$}.
\end{cases}\\
I'' &= I \times I',\\
\ell''(v,v') &= \ell(v) \cup \ell'(v').
\end{align*}
\end{definition}

The definition of $E''_m$ can be interpreted as follows: if $m$ is a non-shared action, it performs a transition only on the digraph it belongs to. If it is shared, it must operate on both digraphs simultaneously. An example is given in Fig.~\ref{fig:parco}. Note that the parallel composition of two KSs is not necessarily a KS. Furthermore, $\parallel$ is a commutative and associative operation on KSs (up to equivalence).

\section{Realistic parameter scaling} \label{sec:prob}

Before we give our probabilistic model of large KSs in Section \ref{sec:prelim}, we first discuss the behaviour of large KSs in practice. We care about three parameters, as functions of the size $n$: their average degree, their number of initial states, and the number of states satisfying each atomic proposition.

\subsection{Average degree}

\begin{figure}[t]
\begin{tikzpicture}
\node[circle,draw, fill=black!30, minimum size = 0.2cm, inner sep=0pt] (000) at (0,0) {};
\node[circle,draw, fill=black!30, minimum size = 0.2cm, inner sep=0pt] (100) at (3,0) {};
\node[circle,draw, fill=black!30, minimum size = 0.2cm, inner sep=0pt] (010) at (0,3) {};
\node[circle,draw, fill=black!30, minimum size = 0.2cm, inner sep=0pt](110) at (3,3) {};
\node[circle,draw, fill=black!30, minimum size = 0.2cm, inner sep=0pt] (001) at (2,1) {};
\node[circle,draw, fill=black!30, minimum size = 0.2cm, inner sep=0pt] (101) at (5,1) {};
\node[circle,draw, fill=black!30, minimum size = 0.2cm, inner sep=0pt] (011) at (2,4) {};
\node[circle,draw, fill=black!30, minimum size = 0.2cm, inner sep=0pt] (111) at (5,4) {};
\path[-stealth,thick]
(010) edge node[right] {$m_1$} (000)
(011) edge node[right] {$m_1$} (001)
(110) edge node[right] {$m_1$} (100)
(111) edge node[right] {$m_1$} (101)
(000) edge node[above] {$m_4$} (001)
(010) edge node[above] {$m_4$} (011)
(100) edge node[above] {$m_4$} (101)
(110) edge node[above] {$m_4$} (111)
(000) edge[color=red] node[below] {$m_2$} (100)
(011) edge[color = blue, loop above] node[above] {$m_3$} ()
(001) edge[color = blue, loop below] node[below] {$m_3$} ()
(111) edge[color = blue] node[above] {$m_3$} (011)
(101) edge[color = blue] node[above] {$m_3$} (001);
\node[circle,draw, fill=black!30, minimum size = 0.2cm, inner sep=0pt] (x0) at (2,6) {};
\node[circle,draw, fill=black!30, minimum size = 0.2cm, inner sep=0pt] (x1) at (5,6) {};
\path[-stealth,thick]
(x0) edge[color = blue, loop left] node {$m_3$} ()
(x0) edge[color = red, bend left] node[above] {$m_2$} (x1)
(x1) edge[color = blue] node[below] {$m_3$} (x0);
\node[circle,draw, fill=black!30, minimum size = 0.2cm, inner sep=0pt] (y0) at (-2,0) {};
\node[circle,draw, fill=black!30, minimum size = 0.2cm, inner sep=0pt] (y1) at (-2,3) {};
\path[-stealth,thick]
(y1) edge node[left] {$m_1$} (y0)
(y0) edge[color = red, loop below] node {$m_2$} ();
\node[circle,draw, fill=black!30, minimum size = 0.2cm, inner sep=0pt] (z0) at (5,-1) {};
\node[circle,draw, fill=black!30, minimum size = 0.2cm, inner sep=0pt] (z1) at (7,0) {};
\path[-stealth,thick]
(z0) edge[color = red, loop left] node {$m_2$} ()
(z0) edge node[above] {$m_4$} (z1)
(z1) edge[color = blue, loop right] node {$m_3$} ();
\draw (-2,4) node {$K_1$};
\draw (0,6) node {$K_2$};
\draw (7,-1) node {$K_3$};
\draw (2,-1) node {$\mathbb{K}$};
\draw[color=red,fill = red!20]  (7.5,5) arc(180:45:0.5) -- (8.3536,5.3536) -- (10.3536,3 .3536) arc(45:-90:0.5) -- (10,2.5) -- (8,2.5) arc(270:180:0.5) -- (7.5,3) -- cycle;
\draw[color=blue,fill=blue!20]  (7.7172,4.7172) arc(225:45:0.4) -- (8.2828,5.2828) -- (10.2828,3 .2828) arc(45:-135:0.4) -- (9.7172,2.7172) -- cycle;
\draw[fill=black!20] (8,3) circle (0.3);
\draw[fill=black!20] (10,3) circle (0.3);
\draw (8,3) node {$K_1$};
\draw (8,5) node {$K_2$};
\draw (10,3) node {$K_3$};
\draw (8,3.5) node {$m_1$};
\draw (9.6,3.4) node {$m_4$};
\draw (9,4) node[color=blue] {$m_3$};
\draw (9,2.9) node[color=red] {$m_2$};
\draw (9,2) node {$\mathcal{H}_{\mathbb{K}}$};
\end{tikzpicture}
\caption{A product Kripke structure $\mathbb{K} = K_1 \mathbin{\parallel} K_2 \mathbin{\parallel} K_3$ and its synchronization hypergraph $\mathcal{H}_{\mathbb{K}}$. Shared actions are coloured; state labels and initial states are not depicted.} \label{fig:hyp}
\end{figure}

Large KSs will typically be obtained via the parallel composition of small ones. More precisely, we consider a large KS $\mathbb{K}$ with the following construction and notation:
\begin{itemize}
\item There are small \emph{basic KS} $K_1,\ldots,K_s$ with $\mathbb{K} = K_1 \mathbin{\parallel}\cdots \mathbin{\parallel}K_s$. As the $K_i$ are meant to be small, we take their size to be bounded independent of $s$; for simplicity, we assume that each $K_i$ has exactly $d$ states. As a result, we have $n = d^s$, and the system's growth can be described in $s$.
\item Each $K_i$ has action label set $M_i$; the action label set of $\mathbb{K}$ is $M = \bigcup_{i=1}^s M_i$.
\item For $m \in M$, we define $\mathcal{K}_m = \{K_i \mid m \in M_i\}$, the set of basic KS that have $m$ as an action label, and we set $s_m = |\mathcal{K}_m|$.
\item The interaction between the $K_i$ can be visualized by the \emph{synchronization hypergraph} $\mathcal{H}_{\mathbb{K}}$, whose vertex set is $\{K_1,\ldots,K_s\}$, and whose edge set is $\{\mathcal{K}_m \mid m \in M\}$; see Figure \ref{fig:hyp}.
\item For $m \in M$, we consider the KS $\mathbb{K}_m = \mathbin{\parallel}_{K_i \in \mathcal{K}_m} K_i$; it has $d^{s_m}$ states. Let $c_m$ be its number of $m$-labeled transitions, and let $f_m$ be its number of $m$-labeled self-loops.
\item In this terminology, the number of $m$-labeled transitions in $\mathbb{K}$ is exactly $d^{s-s_m}c_m$, so $\mathbb{K}$ has at most $\sum_{m \in M} d^{s-s_m}c_m$ transitions.
\item Let $\mathsf{av}(\mathbb{K})$ be the average degree in $\mathbb{K}$. Since $\mathbb{K}$ has $n = d^s$ states, it follows that $\mathsf{av}(\mathbb{K}) \leq \sum_{m \in M} \tfrac{c_m}{d^{s_m}}$.
\end{itemize}

We now want to understand the behaviour of $\mathbb{K}$ as $s$ (hence $n$) increases. In order to do so, we make a number of assumptions on the typical behaviour of large KS. These are as follows:

\begin{enumerate}
\item \emph{The synchronization hypergraph $\mathcal{H}_{\mathbb{K}}$ is connected.} This is because the basic KS of any connected component of $\mathcal{H}_{\mathbb{K}}$ represents a subsystem of $\mathbb{K}$ that acts completely independent of the other KS. In such cases, the KS representing each connected component is analyzed individually.
    \item \emph{There is a $C$ independent of $s$ such that $|M_i| \leq C$ for each $i$.} In order for the $K_i$ to be understandable by humans, there is a limit to their size and complexity. In particular, the total number of actions should be bounded.
    \item \emph{There is a $D$ independent of $s$ such that  $\tfrac{c_m}{d^{s_m}} \leq D$.} The fraction $\tfrac{c_m}{d^{s_m}}$ is the average degree when only considering $m$-labeled transitions, in the product KS consisting only of the basic KS partaking in $M$. To bound this is to bound the nondeterminism present in the action $m$ itself. This is not unreasonable; usually nondeterminism in KS comes from the choice of which action to take, instead of which transition to take for a given action. For instance, in KS derived from Petri nets each action (firing a transition) has a deterministic effect at each state where it applies, so $\tfrac{c_m}{d^{s_m}} \leq 1$ \cite{ma2016basis}. Note that this automatically holds if $s_m$ is bounded independent of $s$.
    \item \emph{There is a subset $M' \subseteq M$ with $|M'| = \Omega(s)$ such that $\tfrac{c_m}{d^{s_m}}$ is bounded away from $0$ for $m \in M'$.} In other words, there should be enough action labels with a decent amount of transitions. Since $c_m \geq 0$, it is enough to assume that there are many action labels $m$ for which $s_m$ is bounded independent of $s$. Thus it holds, for example, if $\Omega(s)$ of the $K_i$ have a `private' label for internal actions. This is not the case for KS from Petri nets, as these do not have strictly internal transitions. However, if there is a bound on each $s_m$ independent of $s$, then $|M|$ needs to be $\Omega(s)$ to ensure $\mathcal{H}_{\mathbb{K}}$ is connected; and under this assumption $\tfrac{c_m}{d^{s_m}} \geq d^{-s_m}$ which is bounded away from $0$.
    \item \emph{There is a $\varrho < 1$, independent of $s$, such that $\frac{f_m}{c_m} \leq \varrho$ for all $m \in M$.} This is a weak way of expressing `most $m$-labeled transitions are not self-loops'. This is justified by the modeling assumption that action labels are introduced to model nontrivial transitions. In KS derived from Petri nets we have $f_m = 0$ for all $m$, so this is automatically true.
\end{enumerate}

From these assumptions we determine the asymptotic behaviour of $\mathsf{av}(\mathbb{K})$:

\begin{lemma} \label{lem:logdeg}
If $\mathbb{K}$ satisfies assumptions 1-5 above, then $\mathsf{av}(\mathbb{K}) = \Theta(\log n)$.
\end{lemma}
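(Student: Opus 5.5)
The plan is to bound $\mathsf{av}(\mathbb{K})$ above and below by constant multiples of $s$. Since $n = d^s$ with $d$ fixed, $s = \log n/\log d$, so $\Theta(s)$ is the same as $\Theta(\log n)$.

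\emph{Upper bound.} We start from the inequality $\mathsf{av}(\mathbb{K}) \leq \sum_{m\in M} \tfrac{c_m}{d^{s_m}}$ recorded above. Assumption 3 bounds each summand by $D$. Assumption 2 bounds the number of labels: every $m \in M$ lies in some $M_i$, so $|M| \leq \sum_i |M_i| \leq Cs$. Together these give $\mathsf{av}(\mathbb{K}) \leq CDs = O(\log n)$. Assumption 1 is not needed for either direction; it only motivates assumption 4.

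\emph{Lower bound.} Distinct labels may produce the same transition $(v,w)$ in $E = \bigcup_m E''_m$, so we cannot simply sum label counts. To handle this we count only non-self-loop transitions and control their multiplicity.

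First, fix $m \in M$. The $m$-labeled transitions of $\mathbb{K}$ are exactly the transitions of $\mathbb{K}_m$ lifted by the identity on the other $s - s_m$ components. A lifted transition is a self-loop precisely when the underlying transition of $\mathbb{K}_m$ is one. Hence there are exactly $d^{s-s_m}(c_m - f_m)$ non-loop $m$-labeled transitions, and by assumption 5 this is at least $(1-\varrho)d^{s-s_m}c_m$.

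Second, bound how many labels can produce a given non-loop transition $(v,w)$, where $v = (v_1,\ldots,v_s)$ and $w = (w_1,\ldots,w_s)$. Choose a coordinate $i$ with $v_i \neq w_i$. Any label $m$ producing $(v,w)$ must change coordinate $i$, so $K_i \in \mathcal{K}_m$, i.e.\ $m \in M_i$. By assumption 2 there are at most $C$ such labels.

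Combining the two steps, and restricting to the set $M'$ of assumption 4 (on which $\tfrac{c_m}{d^{s_m}} \geq \varepsilon > 0$),
\[
|E| \;\geq\; \frac{1}{C}\sum_{m \in M'} (1-\varrho)\, d^{s-s_m} c_m \;=\; \frac{(1-\varrho)n}{C}\sum_{m\in M'} \frac{c_m}{d^{s_m}} \;\geq\; \frac{(1-\varrho)\varepsilon\, n\,|M'|}{C}.
\]
Dividing by $n$ and using $|M'| = \Omega(s)$ gives $\mathsf{av}(\mathbb{K}) = \Omega(s) = \Omega(\log n)$.

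The main obstacle is the overlap of transitions coming from different labels in the union $E$. This is exactly why assumption 5 is needed: self-loops are where the bounded-multiplicity argument fails, since a loop $(v,v)$ has no coordinate with $v_i \neq w_i$ and so could be produced by arbitrarily many labels. Discarding them costs only the constant factor $1-\varrho$.
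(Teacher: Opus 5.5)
Your proof is correct and follows the paper's approach. The upper bound comes from assumptions 2 and 3. The lower bound counts non-self-loop transitions, each produced by at most $C$ labels (those in $M_i$ for a coordinate $i$ that the transition changes), with assumption 5 absorbing the self-loops. Your final bound $(1-\varrho)\varepsilon|M'|/C$ is what the paper's last inequality should say: it writes $C^{-1}\varrho|M|$, where $C^{-1}(1-\varrho)\sigma|M'|$ is what actually follows.
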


\begin{proof}[Sketch]
From assumption 2 it follows that $|M| = \mathcal{O}(s)$; together with assumption 4 this means that $|M| = \Theta(s)$. Combining this with assumptions 3 and 4 we get $\sum_{m \in M} \tfrac{c_m}{d^{s_m}} = \Theta(s)$. Finally, one can use assumption 5 to show that $\mathsf{av}(\mathbb{K})$ differs from this by at most a constant factor, hence $\mathsf{av}(\mathbb{K}) = \Theta(s) = \Theta (\log n)$.
\end{proof}

We consider these assumptions justified for typical large KS; or at least, that deviations are small enough to not affect the asymptotic behaviour by more than a constant factor.  We can then conclude the following:

\begin{conclusion}
A typical large KS $\mathbb{K}$ has average degree $\propto \log n$. 
\end{conclusion}

\subsection{Empirical evaluation} \label{ssec:mcc}

\begin{figure}[t]
    \centering
    \begin{subfigure}[b]{0.49\textwidth}
    \centering
    \begin{tikzpicture}
		\tikzstyle{every node}=[font=\small]
		\begin{axis}[
        width = 60mm,
        height = 60mm,
        xlabel = {$\log(\#\text{states})$},
        xmin = 0,
        xmax = 1100,
        ylabel = {avg. degree},
        ylabel near ticks,
        axis lines = left,
        ]
		\addplot[x = log, y = deg, only marks, mark = +] table[col sep = comma] {scatter.csv};
        \addplot[mark=none, red] coordinates {(0,6.23) (1100,787.55)};
		\end{axis}
	\end{tikzpicture}
    \caption{all LTS}
    \end{subfigure}
    \begin{subfigure}[b]{0.49\textwidth}
    \centering
    \begin{tikzpicture}
		\tikzstyle{every node}=[font=\small]
		\begin{axis}[xmin = 0, xmax = 100, ymin = 0, ymax = 50,
        width = 60mm,
        height = 60mm,
        xlabel = {$\log(\#\text{states})$},
        ylabel = {avg. degree},
        ylabel near ticks,
        axis lines = left,]
		\addplot[x = log, y = deg, only marks, mark = +] table[col sep = comma] {scatter.csv};
        \addplot[mark=none, blue] coordinates {(0,8.6) (100,44.99)};
        \addplot[mark=none, red] coordinates {(0,6.23) (100,77.23)};
		\end{axis}
		\end{tikzpicture}
    \caption{avg.deg. $\leq 50$, \#states $\leq 10^{100}$.}
    \end{subfigure}
    \caption{Number of states (logarithmic) vs average degree of the 258 KS from the MCC benchmark for which these values are published. (b) zooms in on KS with average degree $\leq 50$ and at most $10^{100}$ states (246 out of 281). Red line: linear regression based on all KS; Blue line: linear regression based only on small KS.} \label{fig:MCC}
\end{figure}

To confirm Conclusion 1 empirically, we consider the \emph{Model Checking Competition} \cite{amat2023behind}, an annual competition for model checking concurrent systems. The benchmark used in the competition consists of Petri nets, but the object evaluated in model checking are their underlying KS of reachable states. We consider all MCC Petri nets for which, up to the 2026 competition, the size of their KS has been computed (number of states and transitions): this results in a set of 281 KS (263 academic, 18 industrial) of size $2$--$10^{254}$.

For these KS, we plot $\tfrac{m}{n}$ vs $\log n$, where $n$ is the number of states and $m$ the number of transitions. Conclusion 1 states that these should be linearly related. The resulting plots are shown in Fig.~\ref{fig:MCC}. Fig.~\ref{fig:MCC}(a) is hard to read, even on a log scale, because of several large outliers. If we zoom in on the LTS with average degree $\leq 50$ and at most $10^{100}$ states (246 out of 281) we get a clearer picture, in which a linear relation seems reasonable. Unfortunately, linear regression does not give a good predictor of the relation, since the outliers upwards are much larger than the outliers downwards, and the trend line ends up above the majority of points.

\begin{wrapfigure}[14]{right}{0.3\linewidth}
\centering
 \vspace{-3em}
\begin{tikzpicture}
\begin{axis}[boxplot/draw direction = y,
axis x line = none,
axis y line = left,
xmin = 0.5,
xmax = 2,
ymin = -1.5,
ymax = 1,
width = 40mm,
height = 60mm,
ytick = {-1,0,1},
yticklabels = {$0.1$,$1$,$10$}
]
    \addplot+ [
        boxplot prepared={
            lower whisker=-1.15, lower quartile=-0.48,
            median=-0.35,
            upper quartile=-0.15, upper whisker=0.81,
        },
    ] coordinates {};
    \draw (axis cs:1.7,-1.15) node {$0.07$};
    \draw (axis cs:1.7,-0.49) node {$0.33$};
    \draw (axis cs:1.7,-0.31) node {$0.45$};
    \draw (axis cs:1.7,-0.12) node {$0.71$};
    \draw (axis cs:1.7,0.81) node {$6.49$};
\end{axis}
\end{tikzpicture}
\caption{Boxplot of values of $\frac{m}{n\log n}$ for MCC KS ($n$ states, $m$ transitions).} \label{fig:boxplot}
\end{wrapfigure}
More instructive is the boxplot of $\tfrac{m}{n \log n}$ of Fig.~\ref{fig:boxplot}. If our hypothesis holds, this value should be approximately constant. Indeed, we find that the first and third quartile only differ by a factor 2.15, and all values are within a factor 100 of each other. While this is not perfect, it shows that logarithmically growing degrees provide a considerably more realistic model than Erd\H{o}s-Rényi models with constant probability \cite{dong20250}: there one expects $\tfrac{m}{n^2}$ to be constant, but this takes values $0.25$--$10^{-1091}$. This is a considerably worse spread, and we conclude that taking the average degree in a large KS to be proportional to $\log n$ is a reasonable assumption.

\subsection{Number of initial states}

By Definition \ref{def:parco}, a state in the composition $\mathbb{K}$ is initial if and only if it is initial in each basic KS $K_i$. Hence, if we assume that in each $K_i$ we have $c$ initial states, the total amount of initial states is
\[
c^s = c^{\log_d n} = n^{\log_d c} := n^{1-r},
\]
where $0 \leq r \leq 1$ because $1 \leq c \leq d$.

\begin{conclusion}
There exists a $0 \leq r \leq 1$ such that a typical large KS $\mathbb{K}$ has $\approx n^{1-r}$ initial states.
\end{conclusion}

\subsection{Atomic propositions} \label{ssec:ap}

Finally, we need to understand the asymptotic behaviour of atomic propositions, i.e., as $n \rightarrow \infty$, how many states satisfy each atomic proposition. If $K_i$ has set of atomic propositions $A_i$, we see from Definition \ref{def:parco} that $\mathbb{K}$ has set of atomic propositions $\bigcup_{i=1}^s A_i$. We assume that the $A_i$ are disjoint; this holds in many cases, such as when $\mathbb{K}$ comes from a Petri net. Hence if $a \in A_i$, then the state $v = (v_1,\ldots,v_s)$ of $\mathbb{K}$ satisfies $a$ if and only if $v_i$ satisfies $a$ in $K_i$. If there are $c$ such states in $K_i$, it follows that there are $d^{s-1}c = \tfrac{c}{d}n$ satisfying states in $\mathbb{K}$. We conclude:

\begin{conclusion}
In a typical large KS $\mathbb{K}$, the number of states satisfying a given atomic proposition is proportional to $n$.
\end{conclusion}

\section{Random Kripke structures} \label{sec:prelim}

In this section, we define the random KS model that we will consider in this paper, based on the conclusions of Section \ref{sec:prob}. We take an Erd\H{o}s-Rényi style approach: all events of the form `a transition exists', `a state is initial' and `a state satisfies an atomic proposition' are considered independent. This is a simplification, but this model needs to be understood first before it makes sense to study more intricate probabilistic models that also model the relation between events. Furthermore, in practice studying models with this degree of independence is sufficient for applications in statistical physics \cite{andreanov2004large} and computer networks \cite{simonian1995large}.

Given such an independent model, the governing parameters are the probabilities of a transition being present, a state being initial, and a state satisfying an atomic proposition. Based on Conclusions 1-3, we take the first of these to scale as $\tfrac{\log n}{n}$, the second as $n^{-r}$, and the third as constant.

\begin{definition}
Let $(\vec{p},r,c) \in (0,1)^{A} \times [0,1) \times \mathbb{R}_{>1}$, and let $n \in \mathbb{Z}_{\geq 1}$. Then the \emph{random labeled digraph $\mathbb{D}_{n,\vec{p},\alpha,c} = (V,E,I,\ell)$ is defined as follows:}
\begin{enumerate}
\item $V = \{1,\ldots,n\}$;
\item Each transition $x \rightarrow y$ exists with probability $p_{\recht{t}} := \tfrac{c\log(n)}{n}$;
\item Each state $x$ is initial with probability $p_{\recht{i}} := n^{-r}$;
\item For each state $x$ and each label $a$, we have $a \in \ell(x)$ with probability $p_a$;
\item All these events are independent.
\end{enumerate}
\end{definition}

Note that compared to Section \ref{sec:prob}, we demand $r < 1$ rather than $r \leq 1$. As we will see below, this is needed to rule out the situation where there would be a nonzero probability of having no initial states. Taking $r = 1$  corresponds to only having a single initial state; we also consider such a model, but with this condition strictly enforced, see Def.~\ref{def:K1}. Finally, we demand $c > 1$ to ensure the existence of outgoing transitions, see Theorem \ref{thm:KS}.

\begin{remark} \label{rem:fixA}
In Section \ref{ssec:ap} we discussed that atomic propositions reflect properties of the basic KS. Since the number of these basic KS goes to infinity, it would make sense to consider infinite sets $A$. However, as we will see in Sections \ref{sec:ltl} and \ref{sec:ctl}, the logic formulas we will study will only have a finite number of variables, and we will only study the behaviour of one formula at a time, so taking $A$ finite is not a restriction.
\end{remark}

\begin{definition}
The \emph{random Kripke structure} $\mathbb{K}_{n,\vec{p},c}$ is defined to be $\mathbb{D}_{n,\vec{p},r,c}$ conditioned on the fact that it is a Kripke structure; for every Kripke structure with state set $V = \{1,\ldots,n\}$ we have
\[
\mathbb{P}(\mathbb{K}_{n,\vec{p},r,c} = K) = \frac{\mathbb{P}(\mathbb{D}_{n,\vec{p},r,c} = K)}{\mathbb{P}(\mathbb{D}_{n,\vec{p},r,c} \textrm{ is a Kripke structure})}.
\]
\end{definition}

Throughout this paper, $\vec{p}$, $r$ and $c$ are generally fixed, while $n$ often goes to infinity; we will omit subscripts when no confusion can arise.

The following result shows that asymptotically, the difference between $\mathbb{K}$ and $\mathbb{D}$ is immaterial:

\begin{theorem} \label{thm:KS}
As $n \rightarrow \infty$ we have $\mathbb{P}(\mathbb{D} \textrm{ is a Kripke structure}) \rightarrow 1$. 
\end{theorem}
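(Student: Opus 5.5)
Being a Kripke structure requires two things: every state has an outgoing transition, and $I \neq \emptyset$. So I will show that each failure event has probability tending to $0$, and conclude by a union bound:
\[
\mathbb{P}(\mathbb{D} \textrm{ is not a KS}) \leq \mathbb{P}(\exists x\colon x \textrm{ has no outgoing transition}) + \mathbb{P}(I = \emptyset).
\]

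\emph{No deadlocks.} For a fixed state $x$, the $n$ possible transitions $x \rightarrow y$ (self-loops included) are independent, each present with probability $p_{\recht{t}} = \tfrac{c\log n}{n}$. Hence
\[
\mathbb{P}(x \textrm{ has no outgoing transition}) = \left(1-\tfrac{c\log n}{n}\right)^n \leq \exp(-c\log n) = n^{-c}.
\]
A union bound over the $n$ states bounds the deadlock probability by $n^{1-c}$. This tends to $0$ precisely because $c > 1$, which explains that assumption in the definition. The bound $1-x\leq e^{-x}$ makes the step clean; for small $n$ we need $p_{\recht{t}}\leq 1$, but only large $n$ matters.

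\emph{Nonempty initial set.} States are initial independently with probability $n^{-r}$, so
\[
\mathbb{P}(I=\emptyset) = (1-n^{-r})^n \leq \exp(-n^{1-r}).
\]
This tends to $0$ since $r<1$, so $n^{1-r}\rightarrow\infty$. This is exactly why $r=1$ is excluded: then the limit would be $e^{-1}\neq 0$. If $r=0$, then $p_{\recht{i}}=1$ and the probability is simply $0$.

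Adding the two bounds gives $\mathbb{P}(\mathbb{D} \textrm{ is not a KS}) \leq n^{1-c} + e^{-n^{1-r}} \rightarrow 0$. The labels play no role. There is no real obstacle here; the only point requiring care is that the threshold $c>1$ is sharp for the union bound, which is the standard connectivity-type threshold $\log n/n$ for minimum out-degree at least one in random digraphs.
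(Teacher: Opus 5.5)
Your proof is correct. It uses the same decomposition as the paper: $I\neq\varnothing$ plus no state without an outgoing transition. The genuine difference is in the no-deadlock part.

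\emph{Deadlocks.} The paper gives no argument here. It cites \cite{graham2008note}, a threshold result for random digraphs. At $p_{\recht{t}} = c\log n/n$ with $c>1$, that result gives strong connectivity, which is more than needed. The paper relies on that stronger fact in the Remark following the theorem, and reuses the same threshold for $\sem{\alpha}$ in the proof of Lemma~\ref{lem:bustep}. You instead prove exactly the weaker property required, via the first-moment bound $n(1-p_{\recht{t}})^n \le n^{1-c}$. This is self-contained and shows precisely where $c>1$ enters. It also makes clear that $c>1$ is sharp, since out-degrees of distinct states are independent here.

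\emph{Combining the two events.} You use a union bound, whereas the paper multiplies the two probabilities using their independence. Either works.

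\emph{Nonempty $I$.} You use $(1-n^{-r})^n \le \exp(-n^{1-r})$. The paper instead Taylor-expands $\log(1-n^{-r})$, which is only valid for $r>0$; also, the final limit in the paper's appendix should read $-\infty$. Your one-line inequality, with the case $r=0$ treated separately, is slightly cleaner.
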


\begin{proof}[Sketch]
We have to show that 
\begin{align*}
\mathbb{P}(I \neq \varnothing) &\rightarrow 1,\\
\mathbb{P}(\textrm{each state has an outgoing transition}) &\rightarrow 1,
\end{align*}
as these events are independent. The first probability is $1-(1-n^{-r})^n$, which one can work out to converge to 1. The second one is purely a statement about random directed graphs and has been proven in \cite{graham2008note}; the assumption $c > 1$ plays an essential role here.
\end{proof}

We will make frequent use of this fact, by studying $\mathbb{K}$ using the (easier) probability distribution of $\mathbb{D}$. 

\begin{remark}
In fact, $\mathbb{D}$ (hence also $\mathbb{K}$) is strongly connected with high probability.
\end{remark}

We also consider the situation where there is just a single initial state:

\begin{definition} \label{def:K1}
The random KS $\mathbb{K}^1$ is defined to be $\mathbb{K}$ conditioned on the fact that $|I| = 1$. The single initial state is denoted $\ini$.
\end{definition}

In the next two chapters, we study the behaviour of $\mathbb{K}$ and $\mathbb{K}^1$ under two of the most prominent logics used in model checking: LTL and CTL.

\section{Behaviour under LTL} \label{sec:ltl}

In this chapter, we study the behaviour of our random KSs under \emph{Linear Temporal Logic} (LTL) \cite{pnueli1977temporal}, a logic that describes the behaviour of runs of the program (infinite paths in the directed graph). We first briefly recap the syntax and semantics of LTL before going to our results.

\subsection{Syntax}

The syntax of LTL formulas $\varphi$ is given by the following grammar:

\[\varphi ::= a\mid \varphi \wedge \varphi \mid \neg \varphi \mid \mathsf{X} \varphi\mid \mathsf{F}\varphi \mid \mathsf{G} \varphi \mid \varphi \mathsf{U}\varphi\]

Where $a \in A$. We define Boolean operators $\vee,\rightarrow,...$ as usual from $\wedge$ and $\neg$.

\subsection{Semantics}

A \emph{path} in a KS $K$ is an infinite sequence of states $\pi = (\pi_i)_{i \geq 0}$ of $K$,  such that $\pi_i\pi_{i+1}$ is a transition in $K$ for all $i \geq 0$. Such a path represents a possible execution of a program. LTL formulas are path formulas: they say something about paths in a KS. It will be convenient to also define the validity of infinite words $w = w_0w_1w_2\cdots \in (2^A)^{\omega}$ of subsets of $A$. For such a $w$, and for a KS $K$ and a path $\pi$ in $K$, we define

\begin{align*}
w &\models a &\Leftrightarrow&& a &\in w_0\\
w &\models \varphi \wedge \varphi' &\Leftrightarrow&& w &\models \varphi \textrm{ and } w \models \varphi'\\
w &\models \neg \varphi &\Leftrightarrow&& w &\not \models \varphi\\
w &\models \mathsf{X} \varphi &\Leftrightarrow&& (w_j)_{j \geq 1} &\models \varphi\\
w &\models \mathsf{F} \varphi &\Leftrightarrow&& (w_j)_{j \geq i} &\models \varphi \textrm{ for some $i \geq 0$}\\
w &\models \mathsf{G} \varphi &\Leftrightarrow&&  (w_j)_{j \geq i} &\models \varphi \textrm{ for all $i \geq 0$}\\
w &\models \varphi \mathsf{U} \varphi' &\Leftrightarrow&& (w_j)_{j \geq i} &\models \varphi' \textrm{ for some $i$ and } (w_j)_{j \geq i'} \models \varphi \textrm{ for all $i' < i$}\\
K,\pi &\models \varphi & \Leftrightarrow&& \ell(\pi_0)\ell(\pi_1)\cdots &\models \varphi\\
K &\models \varphi & \Leftrightarrow&& K,\pi &\models \varphi \textrm{ for all paths $\pi$ starting in an initial state.}
\end{align*}

Here we recall the fact that we assigned a set $\ell(v) \subseteq A$ to each state $v$. In these semantics, $\ell(v) \subseteq A$ is the set of atomic propositions satisfied by a state $v$ (i.e., by all paths starting in $v$). For what follows, it will be useful to define specific `propositional' formulas, i.e., formulas with only Boolean operators and no temporal operators $\mathsf{X},\mathsf{F},\mathsf{G},\mathsf{U}$. For $S \subseteq A$, we consider the formula $\alpha_S := \bigwedge_{a \in S} a \wedge \bigwedge_{a \in A \setminus S} \neg a$; hence $K,\pi \models \alpha_S$ if and only if $\ell(\pi_0) = S$. The probability of $\ell(v) = a$ in $\mathbb{K}$ is given by
\begin{equation}
\varrho_S := \prod_{a \in S} p_a \cdot \prod_{a \in A \setminus S} (1-p_a) \in [0,1]. \label{eq:rho}
\end{equation}

We define tautologies to be formulas that are true in all KSs. It will also be convenient to define tautologies relative to an $S \subseteq A$:

\begin{definition}
An LTL formula $\varphi$ is called a \emph{tautology} if $K,\pi \models \varphi$ for all paths $\pi$ in all KSs $K$. For $S \in 2^A$, we say that $\varphi$ is an \emph{$S$-tautology} if $\alpha_S \rightarrow \varphi$ is a tautology.
\end{definition}

If $\varphi$ is not a tautology then there exists a `lasso'-shaped counterexample (an infinite loop preceded by an initial segment):

\begin{theorem} \label{thm:lasso}
\emph{\cite{vardi1986automata}} $\varphi$ is not a tautology if and only if there exist $u,v \in (2^{A})^*$ such that $uv^{\omega} \not \models \varphi$.
\end{theorem}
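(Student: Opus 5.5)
The direction "$\Leftarrow$" is immediate. If $uv^{\omega} \not\models \varphi$ for some $u,v$ with $v$ nonempty, build a finite KS $K$ realizing this word. Take states $s_0,\dots,s_{|u|+|v|-1}$ labeled by the letters of $uv$ in order. Add transitions $s_i \to s_{i+1}$, plus one transition from the last state back to $s_{|u|}$. Let $s_0$ be the only initial state. Every state then has exactly one outgoing transition, so $K$ is a KS with a unique path $\pi$ from $s_0$, and $\ell(\pi_0)\ell(\pi_1)\cdots = uv^{\omega}$. Hence $K,\pi \not\models \varphi$, so $\varphi$ is not a tautology.

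For "$\Rightarrow$", I would pass through Büchi automata, following Vardi--Wolper. Suppose $\varphi$ is not a tautology. Then some KS $K$ and path $\pi$ have $K,\pi\models\neg\varphi$, so the trace $w=\ell(\pi_0)\ell(\pi_1)\cdots$ lies in the language $L(\neg\varphi) = \{w \in (2^A)^\omega \mid w \models \neg\varphi\}$, which is therefore nonempty. The key step is the standard translation: for every LTL formula $\psi$ there is a finite nondeterministic Büchi automaton $\mathcal{A}_\psi$ over $2^A$ with $L(\mathcal{A}_\psi) = L(\psi)$.

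I would build $\mathcal{A}_\psi$ in the usual way. Its states are the maximally consistent subsets of the closure of $\psi$ (the subformulas and their negations). Transitions enforce the local unfolding rules, for example $\mathsf{X}\chi$ in the current state iff $\chi$ in the next, and $\chi\mathsf{U}\chi'$ iff $\chi'$, or $\chi$ together with $\chi\mathsf{U}\chi'$ next (treating $\mathsf{F},\mathsf{G}$ as derived). A generalized Büchi acceptance condition, one set per until-subformula, rules out postponing an eventuality forever. Checking that accepting runs correspond exactly to models of $\psi$ is the main technical obstacle. It is a structural induction on subformulas: the acceptance condition is used precisely in the $\mathsf{U}$ case, and the generalized condition is then converted to ordinary Büchi acceptance by the usual counter construction.

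Given $\mathcal{A}=\mathcal{A}_{\neg\varphi}$ with nonempty language, take an accepting run on $w$. It visits some accepting state $q$ infinitely often, and since the automaton is finite, a pumping argument applies. Choose a finite prefix of the run reaching $q$, reading some $u$, and a later segment of the run from $q$ back to $q$, reading a nonempty $v$. Then $q$ is reachable on $u$ and lies on a cycle labeled $v$, so $uv^\omega$ has an accepting run, namely the prefix followed by the cycle repeated forever. Hence $uv^{\omega}\models\neg\varphi$, i.e.\ $uv^\omega\not\models\varphi$, which completes the argument.
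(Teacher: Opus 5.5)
Your proof is correct and follows the cited source. The paper gives no proof of its own beyond the citation to Vardi--Wolper, and your route (translate $\neg\varphi$ into a B\"uchi automaton, then pump an accepting run around a repeatedly visited accepting state) is exactly their automata-theoretic argument. Your lasso-shaped KS handles the easy direction correctly. You were also right to take $v$ nonempty: the statement leaves this implicit, but it is needed for $uv^{\omega}$ to be an infinite word.
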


For future purpose, we also state the following two results on $S$-tautologies. The first statement is obvious but nevertheless useful, the second one follows immediately from Theorem \ref{thm:lasso}.

\begin{corollary} \label{cor:counterword}
Let $\varphi$ be an LTL formula.
\begin{enumerate}
\item $\varphi$ is a tautology if and only if it is an $S$-tautology for each $S \in 2^A$.
\item Let $S \in 2^A$. Then $\varphi$ is not an $S$-tautology if and only if there exist $u,v \in (2^A)^*$ such that $uv^{\omega} \not \models \varphi$ and $|u| > 0$ and $u_1 = S$.
\end{enumerate}
\end{corollary}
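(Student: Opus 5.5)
Part 1 is a case split on the label of the first state of a path. Part 2 follows by applying Theorem~\ref{thm:lasso} to the formula $\psi := \alpha_S \rightarrow \varphi$ and then normalising the resulting lasso word so that it has a nonempty prefix. Throughout I use that $K,\pi \models \chi$ depends only on the label word $\ell(\pi_0)\ell(\pi_1)\cdots$, and I read $u_1$ as the first letter of $u$.

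\emph{Part 1.} If $\varphi$ is a tautology, then so is $\alpha_S \rightarrow \varphi$ for every $S$, trivially. Conversely, assume $\varphi$ is an $S$-tautology for every $S \in 2^A$. Take any KS $K$ and any path $\pi$ in $K$, and put $S := \ell(\pi_0)$. Then $K,\pi \models \alpha_S$. Since $\alpha_S \rightarrow \varphi$ is a tautology, this gives $K,\pi \models \varphi$. Hence $\varphi$ is a tautology.

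\emph{Part 2.} By definition, $\varphi$ is not an $S$-tautology if and only if $\psi$ is not a tautology. By Theorem~\ref{thm:lasso}, this holds if and only if there are $u,v \in (2^A)^*$ with $uv^\omega \not\models \psi$. Since $v^\omega$ must be an infinite word, $v$ is nonempty. The condition $uv^\omega \not\models \psi$ means $uv^\omega \models \alpha_S$ and $uv^\omega \not\models \varphi$, and $uv^\omega \models \alpha_S$ says exactly that the first letter of $uv^\omega$ is $S$.

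\begin{itemize}
\item If $|u|>0$, this first letter is $u_1$, so $u_1 = S$ and we are done.
\item If $u$ is empty, write $v = v_1 v_2 \cdots v_k$. Set $u' := v_1$ and $v' := v_2 \cdots v_k v_1$. Then $u'v'^\omega = v^\omega$ is the same infinite word, so $u'v'^\omega \not\models \varphi$, with $|u'| = 1 > 0$ and $u'_1 = v_1 = S$.
\end{itemize}

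For the reverse direction, suppose $u,v$ satisfy $uv^\omega \not\models \varphi$, $|u|>0$ and $u_1 = S$. Then $uv^\omega \models \alpha_S$ and $uv^\omega \not\models \varphi$, so $uv^\omega \not\models \psi$. Theorem~\ref{thm:lasso} then says $\psi$ is not a tautology.

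If one prefers not to rely on that direction of the theorem, the word can be realised directly in a finite KS. Build a lasso-shaped KS with one state per letter of $u$ followed by one state per letter of $v$. Its transitions form a chain through these states, with an extra transition from the last $v$-state back to the first $v$-state. Label each state by its letter and make the first state initial. Every state has an outgoing transition and the initial set is nonempty, so this is a KS. Its unique path from the initial state has label word $uv^\omega$, so that path violates $\psi$.

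I expect no real obstacle. The only points needing care are the rotation trick for empty $u$ and checking that the lasso graph satisfies the KS conditions (an outgoing transition everywhere and a nonempty initial set).
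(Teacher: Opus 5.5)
Your proof is correct and follows the route the paper indicates. The paper calls Part 1 obvious and says Part 2 follows immediately from Theorem~\ref{thm:lasso} applied to $\alpha_S \rightarrow \varphi$; you supply the details it leaves implicit, notably the rotation $u' = v_1$, $v' = v_2\cdots v_k v_1$ that handles an empty $u$ and guarantees $|u|>0$.
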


\begin{example} \label{ex:taut}
Let $A = \{a_1,a_2\}$, and consider $\varphi = a_2 \wedge \mathsf{F}a_1$. An infinite word $w = w_0w_1\cdots \in (2^A)^{\omega}$ satisfies $\varphi$ if and only if $a_2 \in w_0$ and there is an $i$ such that $a \in w_i$. Then $\varphi$ is an $\{a_1,a_2\}$-tautology: Any $w$ with $w_0 = \{a_1,a_2\}$ satisfies $\varphi$, so $(a_1 \wedge a_2) \rightarrow \varphi$ is a tautology. On the other hand, $\varphi$ is not an $S$-tautology for any other $S \in 2^A$, as we have
\[
\varnothing^{\omega} \not \models \varphi, \quad \quad \{a_1\}^{\omega} \not \models \varphi, \quad \quad \{a_2\}^{\omega} \not \models \varphi;
\]
this shows that for every other $S$, we have a lasso starting in $S$ that does not satisfy $\varphi$.
\end{example}

We end these preliminaries with the complexity of LTL tautology checking:

\begin{theorem} \emph{\cite{sistla1985complexity}} \label{thm:ltlcomp}
The problem of determining whether a given LTL formula is a tautology, is PSPACE-complete.
\end{theorem}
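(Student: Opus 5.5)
We treat the two halves separately: membership in PSPACE via automata, and PSPACE-hardness via a reduction from polynomial-space Turing machines. Since PSPACE is closed under complement, it suffices to consider \emph{satisfiability} of $\neg\varphi$. The formula $\varphi$ is a tautology iff no infinite word $w \in (2^A)^\omega$ satisfies $\neg\varphi$. Every word over $2^A$ is the label sequence of a path in some KS, so this agrees with the definition of tautology above.

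\textbf{Membership.} We build the Vardi--Wolper Büchi automaton $\mathcal{B}_{\neg\varphi}$ for $\neg\varphi$. Let $\mathrm{cl}(\varphi)$ be the closure of $\varphi$ (its subformulas and their negations, with $\mathsf{F},\mathsf{G}$ rewritten via $\mathsf{U}$).
\begin{itemize}
\item The states are the maximal locally consistent subsets of $\mathrm{cl}(\varphi)$.
\item The transitions enforce the one-step unfolding $\psi\mathsf{U}\chi \leftrightarrow \chi \vee (\psi \wedge \mathsf{X}(\psi\mathsf{U}\chi))$ and the $\mathsf{X}$-obligations.
\item A generalized Büchi condition, one set per $\mathsf{U}$-subformula, ensures that eventualities are fulfilled.
\end{itemize}
This automaton has $2^{\mathcal{O}(|\varphi|)}$ states, but each state is described in $\mathcal{O}(|\varphi|)$ bits, and the transition relation is checkable in polynomial time. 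By Theorem~\ref{thm:lasso}, nonemptiness is witnessed by a lasso $uv^\omega$ whose length may be bounded by the number of (state, acceptance-index) pairs, i.e.\ $2^{\mathcal{O}(|\varphi|)}$. A nondeterministic machine guesses the lasso letter by letter. It stores only the current automaton state, the guessed loop-entry state, a binary counter of polynomial size, and the acceptance sets visited so far on the loop. This places satisfiability in NPSPACE, hence in PSPACE by Savitch's theorem, and tautology in co-PSPACE $=$ PSPACE.

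\textbf{Hardness.} We reduce from acceptance by a deterministic TM $T$ running in space $p(n)$. For an input $x$ we use atomic propositions encoding tape symbols, the head position together with the control state, and a separator marker. A configuration is a block of $p(n)$ consecutive letters. We write a formula $\psi_x$ of size polynomial in $n$ and $|T|$ saying:
\begin{itemize}
\item the word starts with the initial configuration on $x$;
\item the word is well formed (exactly one symbol per position, separators every $p(n)+1$ steps);
\item $\mathsf{G}$ of the local consistency condition: the content of position $j$ in the next block is determined by positions $j-1,j,j+1$ of the current block, expressed with the nested operators $\mathsf{X}^{p(n)}$, $\mathsf{X}^{p(n)+1}$, $\mathsf{X}^{p(n)+2}$;
\item $\mathsf{F}(\text{accepting state})$.
\end{itemize}
Then $\psi_x$ is satisfiable iff $T$ accepts $x$, so $\neg\psi_x$ is a tautology iff $T$ rejects $x$. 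Since PSPACE is closed under complement, this gives PSPACE-hardness. Halting is handled by letting the accepting configuration repeat forever.

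\textbf{Main obstacle.} I expect the main work to be writing the local transition constraint so that the formula stays polynomial. Nesting $\mathsf{X}$ $p(n)$ times is linear in $p(n)$, and the window condition must be a conjunction over the $\mathcal{O}(|T|)$ transition triples rather than over tape positions. Checking carefully that every satisfying word, including malformed ones, is excluded by the well-formedness clauses is the delicate part of the proof.
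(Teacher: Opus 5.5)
The paper gives no proof of this theorem; it imports it from Sistla and Clarke, and your sketch is a correct reconstruction of the standard argument behind that citation. For membership you use a polynomial-space nondeterministic search for an exponentially bounded lasso over consistent sets of closure formulas, packaged as the Vardi--Wolper automaton, followed by Savitch's theorem; for hardness you encode the run of a polynomial-space deterministic Turing machine using nested $\mathsf{X}$ operators. The points you flag as delicate are routine bookkeeping rather than gaps: pinning down the exact $\mathsf{X}$-offsets of the window relative to the separators, and giving every halting configuration (rejecting as well as accepting) a fixed successor so the window implications never leave the next block unconstrained.
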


LTL tautology checking (or equivalently, satisfiability checking) is a well-studied problem in model checking, with many available techniques \cite{rozier2007ltl,li2014aalta,geatti2024sat}.

\subsection{0-1 laws}

In this section we discuss the convergence behaviour of $\mathbb{P}(\mathbb{K} \models \varphi)$ and $\mathbb{P}(\mathbb{K}^1 \models \varphi)$, for LTL formulas $\varphi$. The main results are Theorem \ref{thm:01ltl} and \ref{thm:01ltlone}, in which we show that these converge for all $\varphi$; and furthermore that the limit can explicitly be determined. These two results both rely on the following result:

\begin{theorem} \label{thm:01ltlmain}
Let $\varphi$ be an LTL formula, and let $S \in 2^A$. Then
\[
\lim_{n \rightarrow \infty} \mathbb{P}(\mathbb{K}^1 \models \varphi \mid \ell(\ini) = S) = \begin{cases}
1, & \textrm{ if $\varphi$ is an $S$-tautology,}\\
0, & \textrm{ otherwise}.
\end{cases}
\]
\end{theorem}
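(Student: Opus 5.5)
The easy direction comes first. If $\varphi$ is an $S$-tautology, then every path $\pi$ starting in $\ini$ with $\ell(\ini)=S$ satisfies $\alpha_S$, and therefore satisfies $\varphi$. So the conditional probability is identically $1$ for every $n$ for which the conditioning event has positive probability. All the work lies in the case where $\varphi$ is not an $S$-tautology.

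In that case, Corollary~\ref{cor:counterword} gives a lasso word $uv^{\omega}\not\models\varphi$ with $u=u_1\cdots u_k$, $u_1=S$, and $v=v_1\cdots v_m$; these are fixed and independent of $n$. It suffices to show that, with probability tending to $1$, $\mathbb{K}^1$ contains a path starting at $\ini$ whose label sequence is $uv^{\omega}$. A convenient realization is a simple path $\ini=x_1,x_2,\ldots,x_k$ followed by a simple cycle $y_1\to\cdots\to y_m\to y_1$ entered via $x_k\to y_1$, with $\ell(x_i)=u_i$ and $\ell(y_j)=v_j$. The patterns themselves are not small in probability, since each labeled state occurs with constant probability $\varrho_{T}>0$; the problem is sparsity.

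Because $p_{\recht{t}}=c\log n/n$, a fixed-length pattern is not found by first-moment counting. Instead, the plan is to build the lasso by expansion and then close it by connectivity. Using Theorem~\ref{thm:KS}, I first replace $\mathbb{K}^1$ by $\mathbb{D}$ conditioned on $|I|=1$ and $\ell(\ini)=S$. Conditioning on $I$ is independent of edges and labels, so I can simply fix $\ini$ as a given vertex with label $S$. I then relax to a flexible lasso: it suffices to find a path from $\ini$ labeled by $u$ followed by some word $w$ with $wv^{\omega}$ still a counterexample. This is cumbersome, so instead I will look for long but structured paths. Padding is done by replacing $v$ with $v^{N}$, which changes nothing, and the question is whether the prefix can be stretched. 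It cannot in general, since labels matter at every step. So the key steps are as follows.

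\begin{enumerate}
\item \emph{Expansion from $\ini$.} Grow a breadth-first tree in which, at depth $i$, only vertices with label $u_{i+1}$ are kept (for $i<k$), and afterwards labels follow $v$ periodically. Each kept vertex has about $c\log n\cdot\varrho$ kept children, with fresh independent edges. After $O(\log n/\log\log n)$ levels, chosen with length congruent to the right residue mod $m$, the frontier $F$ has size $n^{\delta}$ with high probability. Here one must handle the early levels, where $\ini$ has only about $c\varrho\log n$ good children; the probability that none exists is $(1-p_{\recht{t}}\varrho)^{n}\approx n^{-c\varrho}\to0$.
\item \emph{Return set.} Symmetrically, grow a backward tree into a vertex set $B$ of size $n^{\delta}$, consisting of vertices from which a $v$-periodic labeled path leads back into a short $v$-labeled cycle. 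Equivalently, fix a target vertex and grow backwards along $v$-labels, then close the loop by growing backwards from the frontier to meet the forward tree of the cycle's start.
\item \emph{Meeting.} Two disjoint sets of size $n^{1/2+\epsilon}$ built on unexplored edges have an edge between them with probability $1-\exp(-n^{1+2\epsilon}\cdot c\log n/n)\to1$, and the correct label phase can be arranged by trimming. This yields the lasso $u v^{a} (\text{cycle labeled } v^{b})$, which gives a word $uv^{\omega}$, so $\mathbb{K}^1\not\models\varphi$.
\end{enumerate}

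The main obstacle is making the expansion with label constraints rigorous while keeping explored and unexplored edges independent, so that the meeting step uses fresh randomness, and ensuring the expansion reaches size $n^{1/2+\epsilon}$. With degree $\Theta(\log n)$, exploring $n^{1/2+\epsilon}$ vertices is still $o(n)$, so Chernoff bounds at each level (after the first $O(1)$ levels, where the frontier is already $\Theta(\log n)$) should give survival with probability $1-o(1)$. The one genuinely delicate point is the first few levels near $\ini$, where failure probability is only polynomially small, $n^{-\Theta(c)}$, but this still tends to $0$.
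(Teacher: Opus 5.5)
Your argument is sound in outline, but it takes a genuinely different route from the paper's.

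\textbf{The paper's route.} The paper argues globally. Let $T$ be the set of states labelled by the first letter of $v$, and write $v$ as that letter followed by $\breve v$. The paper forms the auxiliary digraph $T_{\breve v}$, with $x\to y$ iff there is a $\breve v$-labelled connection from $x$ to $y$. It then uses three ingredients: a first-moment cut lemma (no two linear-size sets without an edge between them), a Hall-matching lemma, and linearly many $w$-roads with distinct starts and distinct ends. With these it compares $T_{\breve v}$ to an Erd\H{o}s-R\'enyi graph of density $\Theta((\log n)^2/n)$ and extracts a linear-size set $T'$ that is strongly connected under $\breve v$-paths, so such paths can be chained forever. Finally it shows that almost every $S$-labelled state has a $\breve u$-path into $T'$. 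This suffices because $\ini$ is uniform among the $S$-labelled states.

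\textbf{Your route.} You argue locally. You run label-constrained breadth-first explorations from specific vertices, with growth factor $\Theta(\log n)$ per level. You then use a birthday-type meeting of two frontiers of size $n^{1/2+\epsilon}$ across edges that were never queried.

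\textbf{What each buys.} Your route:
\begin{itemize}
\item treats $\ini$ directly rather than by averaging;
\item uses only fresh independent edges in the meeting steps;
\item needs only $np_{\recht{t}}\to\infty$, apart from the Kripke-structure conditioning;
\item yields a counterexample lasso of length $O(\log n/\log\log n)$.
\end{itemize}
It also sidesteps the paper's comparison of (a subgraph of) $T_{\breve v}$ with an Erd\H{o}s-R\'enyi graph, which needs more care than the sketch gives. Edges with a common source share the edges into road starts, so they are positively correlated. Moreover, a state of $T$ with no successor carrying the next letter of $v$ has no out-edge at all, and for small $c\varrho$ there are polynomially many such states; so one cannot simply take $T'=T$. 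In exchange, the paper's route builds one structure, independent of the initial state, that serves all $S$-labelled states at once. It also absorbs the periodicity of $v$ into $T_{\breve v}$, so no phase bookkeeping is needed.

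\textbf{What to tighten in Step 2.} Drop the ``short $v$-labelled cycle'' variant, whose existence would need its own argument, and state the anchor version you hint at cleanly:
\begin{itemize}
\item Reveal all labels first, and fix $z\neq\ini$ with $\ell(z)=v_1$.
\item On pairwise disjoint vertex sets, grow three trees: the forward tree from $\ini$ (labels $u_2\cdots u_k v_1 v_2\cdots$), a forward tree from $z$, and a backward tree into $z$. The last two read labels cyclically along $v$.
\item Grow each until its frontier contains at least $n^{1/2+\epsilon}$ states in the required phase.
\end{itemize}
Forward explorations query only out-edges of processed states, and backward ones only in-edges. Hence every edge from a forward frontier into the backward frontier is unqueried. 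One such edge from the $z$-forward frontier closes a walk through $z$ labelled $v^b$. One such edge from the $\ini$-frontier gives a walk from $\ini$ to $z$ whose labels before $z$ read $uv^a$. Together these give a path labelled $uv^{\omega}$. Use $\delta=\frac12+\epsilon$ consistently in Steps~1--3. Note also that each of the three explorations pays the first-level failure probability $n^{-\Theta(c\varrho)}$, which is still $o(1)$.
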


\begin{proof}[Sketch]
The interesting case is to show that the limit is $0$ when $\varphi$ is not an $S$-tautology. To do so, consider an infinite word $uv^{\omega}$ that is a counterexample for $\varphi$, with $u_0 = S$, and let $T \subseteq \mathbb{K}^1$ be the set of vertices labeled $v_1$. We define the random graph $\hat{T}$ to have vertex set $T$, and the edge $x \rightarrow y$ exists in $\hat{T}$ iff there is a $v$-labeled path from $x$ to $y$. One can prove that asymptotically, $\hat{T}$ behaves like a directed Erd\H{o}s-Rényi graph with edge probability $\Theta((\log n)^{|v|}n^{-1})$. This is $\gg n^{-1}$, so it has a strongly connected component consisting of nearly all vertices in $T$. Within this strongly connected component, one can keep taking $v$-labeled paths indefinitely. Furthermore, this component is large enough that a $u$-labeled path from $\ini$ into it exists almost surely; if this is the case, then the counterexample path $uv^{\omega}$ exists in $\mathbb{K}^1$, so $\mathbb{P}(\mathbb{K}^1 \models \varphi \mid \ell(\ini) = S) \rightarrow 0$.
\end{proof}

From this theorem we derive our two main results on LTL. The first one states that for every formula $\varphi$, the probability $\mathbb{P}(\mathbb{K}\models \varphi)$ converges to either 0 or 1; in random graph theory, such a result is called a \emph{0-1 law}.

\begin{theorem}[0-1 law for LTL on $\mathbb{K}$] \label{thm:01ltl}
Let $\varphi$ be an LTL formula. Then $\mathbb{P}(\mathbb{K} \models \varphi) \rightarrow 1$ if $\varphi$ is a tautology, and $\mathbb{P}(\mathbb{K} \models \varphi) \rightarrow 0$ otherwise.
\end{theorem}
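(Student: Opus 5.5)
The tautology case is immediate: if $\varphi$ is a tautology, then every path of every KS satisfies it, so $\mathbb{P}(\mathbb{K}\models\varphi)=1$ for all $n$. The work is in the other case. Suppose $\varphi$ is not a tautology. By Corollary~\ref{cor:counterword}(1) there is an $S\in 2^A$ such that $\varphi$ is not an $S$-tautology. Theorem~\ref{thm:01ltlmain} then says that a \emph{single} initial state labeled $S$ fails $\varphi$ with probability tending to $1$. The plan is to reduce $\mathbb{K}$ to this situation by showing that $\mathbb{K}$ almost surely has an initial state labeled $S$, and that such a state behaves asymptotically like the initial state of $\mathbb{K}^1$ conditioned on $\ell(\ini)=S$.

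First I will pass from $\mathbb{K}$ to $\mathbb{D}$ using Theorem~\ref{thm:KS}: any event has asymptotically the same probability under both, since $\mathbb{P}(\mathbb{D}\text{ is a KS})\to 1$. In $\mathbb{D}$, the events ``$x\in I$'' and ``$\ell(x)=S$'' are independent of each other and of the transitions. So the number $N_S$ of initial states labeled $S$ is $\mathrm{Bin}(n,n^{-r}\varrho_S)$, with mean $\varrho_S n^{1-r}\to\infty$ because $r<1$ and $\varrho_S>0$ (all $p_a\in(0,1)$). By Chebyshev, or directly from $(1-\varrho_S n^{-r})^n\to 0$, we get $N_S\ge 1$ with high probability.

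Next I condition on the event $\{1\in I,\ \ell(1)=S\}$ in $\mathbb{D}$. By exchangeability of states, on $\{N_S\ge1\}$ I may assume that some fixed state, say $1$, is an initial state labeled $S$. Given this event, the transitions and the labels of all other states still have their unconditional distribution, and the membership in $I$ of other states is irrelevant to whether the paths from state $1$ violate $\varphi$. Hence
\[
\mathbb{P}(\mathbb{D}\models\varphi)\le \mathbb{P}(N_S=0)+\mathbb{P}(\text{some path from }1\text{ violates }\varphi\ \text{fails}\mid 1\in I,\ \ell(1)=S).
\]
I will then invoke Theorem~\ref{thm:01ltlmain}, or more precisely its proof. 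That proof shows that, for a fixed state labeled $S$ in the random digraph with the above edge and label distribution, a lasso path $uv^\omega$ violating $\varphi$ exists with probability tending to $1$, and this uses only the transitions and labels. To get the bound rigorously I will note that $\mathbb{K}^1$ conditioned on $\ell(\ini)=S$ is exactly $\mathbb{D}$ conditioned on $I=\{\ini\}$, $\ell(\ini)=S$ and being a KS. The event ``no violating path from $\ini$'' is independent of $I$ apart from the identity of $\ini$. Together with Theorem~\ref{thm:KS}, this transfers the limit $0$ from Theorem~\ref{thm:01ltlmain} to the conditional probability above.

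The main obstacle is exactly this transfer step. Theorem~\ref{thm:01ltlmain} is stated for $\mathbb{K}^1$, where two conditionings (on being a KS and on $|I|=1$) are present, whereas here I need a statement about a distinguished state in $\mathbb{D}$. I would handle it by the independence observation: the violation event is measurable with respect to the transitions and labels only, and $I$ is independent of those. The remaining issue is the KS conditioning, and since it has probability tending to $1$ it changes the probabilities by $o(1)$. Combining both bounds gives $\mathbb{P}(\mathbb{K}\models\varphi)\to 0$.
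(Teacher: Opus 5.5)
Your proposal is correct and follows the same route as the paper. You choose $S$ with $\varphi$ not an $S$-tautology, observe that w.h.p.\ some initial state carries label $S$, and discard all other initial states (which can only help $\varphi$), so that Theorem~\ref{thm:01ltlmain}/Proposition~\ref{prop:K1S} applies.

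Your transfer through $\mathbb{D}$, using that $I$ is independent of transitions and labels, is more explicit than the paper's one-line claim that the restricted structure ``is an instance of'' $\mathbb{K}^1_S$. The one step to tighten is the ``by exchangeability'' inequality: a state selected from $I$ is size-biased, so that inequality is not literally exact. It is cleanly justified by conditioning on transitions and labels instead. By Theorem~\ref{thm:01ltlmain} and Markov's inequality, w.h.p.\ $(1-o(1))\varrho_S n \gg n^{r}$ states labeled $S$ start a path violating $\varphi$, so $I$ hits one of them w.h.p.
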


\begin{proof}[Sketch]
If $\varphi$ is not a tautology, then there exists an $S$ such that it is not an $S$-tautology. With high probability, there exists an $x \in I$ such that $\ell(x) = S$. Shrinking $I$  can only increase $\mathbb{P}(\mathbb{K} \models \varphi)$; however, if we set $I = \{x\}$, Theorem \ref{thm:01ltlmain} tells us that $\mathbb{P}(\mathbb{K} \models \varphi) \rightarrow 0$.
\end{proof}

Our second main result is more subtle. For $\mathbb{K}^1$, we cannot expect a 0-1 law to hold: $\mathbb{K}^1 \models a$ iff $a \in \ell(\ini)$, which happens with probability $p_a$. Instead, we show that LTL ion $\mathbb{K}^1$ satisfies a \emph{convergence law}, i.e., the truth probability of every formula converges:

\begin{theorem} \label{thm:01ltlone}
Let $\varphi$ be an LTL formula, and let $\mathcal{T}_{\varphi} \subseteq 2^A$ be the set of $S$ such that $\varphi$ is an $S$-tautology. Let $\varrho_S$ be as in \eqref{eq:rho}. Then
\[
\lim_{n \rightarrow \infty} \mathbb{P}(\mathbb{K}^1 \models \varphi) = \sum_{S \in \mathcal{T}_{\varphi}} \varrho_S.
\]
\end{theorem}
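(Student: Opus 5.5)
We decompose according to the label of the initial state and apply Theorem \ref{thm:01ltlmain} to each summand. By the law of total probability,
\[
\mathbb{P}(\mathbb{K}^1 \models \varphi) = \sum_{S \in 2^A} \mathbb{P}(\ell(\ini) = S)\cdot \mathbb{P}(\mathbb{K}^1 \models \varphi \mid \ell(\ini) = S).
\]
This is a finite sum, since $A$ is finite (cf. Remark \ref{rem:fixA}). So it suffices to determine the limit of each factor separately.

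For the second factor, Theorem \ref{thm:01ltlmain} gives the limit directly: it tends to $1$ if $S \in \mathcal{T}_{\varphi}$ and to $0$ otherwise. The sum therefore converges to $\sum_{S \in \mathcal{T}_\varphi} \lim_n \mathbb{P}(\ell(\ini) = S)$, and it remains to show that $\mathbb{P}(\ell(\ini)=S) \to \varrho_S$.

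For the first factor we pass through $\mathbb{D}$. In $\mathbb{D}$, the labels are independent of the initial-state indicators and of the transitions. Hence, conditioned on $|I| = 1$ and on which state is initial, $\ell(\ini)$ has exactly the distribution \eqref{eq:rho}, and $\mathbb{P}_{\mathbb{D}}(\ell(\ini)=S \mid |I|=1) = \varrho_S$ for every $n$. The actual model $\mathbb{K}^1$ additionally conditions on the event $\mathcal{E}$ that every state has an outgoing transition; nonemptiness of $I$ is implied by $|I|=1$. Since $\mathcal{E}$ depends only on the transitions, it is independent of the labels and of $I$. Conditioning on it therefore does not change the joint law of $(I,\ell)$, and $\mathbb{P}(\ell(\ini)=S) = \varrho_S$ holds exactly, not just in the limit.

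The only delicate point is well-definedness. We need $\mathbb{P}(|I|=1)>0$ and $\mathbb{P}(\mathcal{E})>0$, which clearly hold for each fixed $n$. We also need that Theorem \ref{thm:01ltlmain} applies with conditioning on events of probability $\varrho_S > 0$. This is guaranteed because $p_a \in (0,1)$, so every $\varrho_S$ is strictly positive. The independence bookkeeping between the transition, initial-state and label coordinates is the step to verify with care, but it follows immediately from the product structure of $\mathbb{D}$. Combining the two limits gives $\lim_n \mathbb{P}(\mathbb{K}^1\models\varphi) = \sum_{S\in\mathcal{T}_\varphi}\varrho_S$.
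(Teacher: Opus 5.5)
Your proof is correct and follows the paper's argument: the law of total probability over $\ell(\ini)$, Theorem \ref{thm:01ltlmain} applied to each conditional term, and the identity $\mathbb{P}(\ell(\ini)=S)=\varrho_S$. The paper uses that last identity without comment, so your justification via the independence of the labels from $(I,E)$ in $\mathbb{D}$ is a welcome addition. One caveat: $\mathbb{P}(|I|=1)>0$ fails in the degenerate case $r=0$, where $p_{\recht{i}}=1$, but this is a defect in the definition of $\mathbb{K}^1$ itself rather than in your argument.
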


\begin{proof}
Using Theorem \ref{thm:01ltlmain}, we have
\begin{align*}
\lim_{n \rightarrow \infty} \mathbb{P}(\mathbb{K}^1 \models \varphi) &= \sum_{S \in 2^A} \mathbb{P}(\mathbb{K}^1 \models \varphi \mid \ell(\ini) = S)\mathbb{P}(\ell(\ini) = S) \\
&= \sum_{S \in \mathcal{T}_{\varphi}} \varrho_S.
\end{align*}
\end{proof}

\begin{example}
Continuing Example \ref{ex:taut}, we have $\mathcal{T}_{\varphi} = \{\{a_1,a_2\}\}$; it follows that $\lim_{n \rightarrow \infty} \mathbb{P}(\mathbb{K}^1 \models \varphi) = \varrho_{\{a_1,a_2\}} = p_{a_1}p_{a_2}$.
\end{example}

\begin{remark}
Our results are similar to the LTL results of \cite{dong20250}, which studied Erd\H{o}s-Rényi type KSs where all probabilities are constant rather than logarithmic $p_{\recht{t}}$ and power-law $p_{\recht{i}}$. Interestingly, LTL formulas behave exactly the same in that model, i.e, every LTL formula has the same limit probability in that model as it has in our $\mathbb{K}$ and $\mathbb{K}^1$. Our proofs are significantly more involved, though: the lynchpin of the results of \cite{dong20250} is the \emph{extension principle}, which roughly states that every KS $K$ can be found in $\mathbb{K}$ if $n \rightarrow \infty$. This means that for every non-tautology, one can find a counterexample KS in $\mathbb{K}$ with high probability. However, this extension principle does not hold in our case: because $p_{\recht{t}} \rightarrow 0$ quickly, we cannot expect very dense $K$ to be found in $\mathbb{K}$. Instead, our approach relies on arguing that counterexample \emph{paths} can always be found: because these are not dense (as graphs), this problem is circumvented. Arguing the existence of paths is substantially more involved, however.
\end{remark}

\begin{remark}
A closer look at our proofs show that they also hold for more general choices of the functions $p_{\recht{t}}$ and $p_{\recht{i}}$: we only use lower bounds for both, so the result holds whenever $\liminf \tfrac{p_{\recht{t}}(n)}{\log n} > 1$ and $p_{\recht{i}} \gg n^{-1}$. Hence our results generalize \cite{dong20250}. It would be interesting to further explore to what extent LTL behaviour depends on parameter scaling, and in particular the form of the \emph{threshold functions} of individual LTL formulas $\varphi$: A function $t\colon \mathbb{Z}_{\geq 1} \rightarrow \mathbb{R}_{>0}$ is a function is said to be a threshold for $\varphi$ if $\mathbb{P}(\mathbb{K}\models \varphi) \rightarrow 1$ whenever $p_{\recht{t}}(n) \ll t(n)$ as $n \rightarrow \infty$, and $\varphi$ if $\mathbb{P}(\mathbb{K}\models \varphi) \rightarrow 0$ whenever $p_{\recht{t}}(n) \gg t(n)$ (we fix the behaviour of $p_{\recht{i}}$ and the $p_a$) \cite{bollobas1987threshold}. In other words, the behaviour of $\mathbb{K}$ w.r.t. $\varphi$ undergoes a phase transition around $t$. Since satisfying $\varphi$ is a monotonically decreasing property (adding transitions can only make $\varphi$ more false, never more true), we know by \cite{bollobas1987threshold} that every $\varphi$ has such a threshold; investigating the form of these threshold is interesting work for future research.
\end{remark}

\subsection{Computing limit behaviour}

In fact, our result not only shows the existence of a 0-1 law, but also how to decide between 0 and 1. Combining Theorem \ref{thm:01ltl} and Theorem \ref{thm:ltlcomp} yields the following result:

\begin{corollary}
The problem of deciding $\lim_{n \rightarrow \infty} \mathbb{P}(\mathbb{K} \models \varphi)$ is PSPACE-complete.
\end{corollary}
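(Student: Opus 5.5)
The corollary should follow by observing that deciding the limit is literally the same problem as LTL tautology checking, and then invoking Theorem \ref{thm:ltlcomp}. By Theorem \ref{thm:01ltl}, for every LTL formula $\varphi$ the limit $\lim_{n\to\infty}\mathbb{P}(\mathbb{K}\models\varphi)$ exists and lies in $\{0,1\}$. Moreover, it equals $1$ exactly when $\varphi$ is a tautology. Hence the map $\varphi \mapsto \lim_{n}\mathbb{P}(\mathbb{K}\models\varphi)$ is the characteristic function of the set of LTL tautologies, and the decision problem ``is the limit equal to $1$?'' coincides with the tautology problem.

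For membership in PSPACE, I would take as algorithm: on input $\varphi$, run a PSPACE procedure for LTL tautology checking from Theorem \ref{thm:ltlcomp}, and output $1$ if $\varphi$ is a tautology and $0$ otherwise. Correctness is exactly Theorem \ref{thm:01ltl}.

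For PSPACE-hardness, I would reduce tautology checking to our problem via the identity reduction $\varphi\mapsto\varphi$. Again by Theorem \ref{thm:01ltl}, $\varphi$ is a tautology if and only if the limit is $1$. This reduction is trivially polynomial-time, and even logspace.

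Two small points need care. First, the random model depends on a fixed label set $A$, while the input formula may use arbitrary atomic propositions. Following Remark \ref{rem:fixA}, I would take $A$ to be the set of propositions occurring in $\varphi$, or any finite superset; tautology-hood does not depend on which superset is chosen. Second, the model has fixed parameters $(\vec p, r, c)$, and the limit does not depend on them, so they need not be part of the input. I do not expect a genuine obstacle: all the probabilistic work sits in Theorem \ref{thm:01ltl}, and the rest is bookkeeping about the encoding of the problem.
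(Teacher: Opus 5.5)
Your proposal is correct and is the paper's argument: the corollary is obtained by combining Theorem \ref{thm:01ltl} (the limit is $1$ if $\varphi$ is a tautology and $0$ otherwise) with the PSPACE-completeness of LTL tautology checking (Theorem \ref{thm:ltlcomp}), which amounts to your identity reduction in both directions. Your remarks on taking $A$ to be the propositions of $\varphi$, and on the limit not depending on $(\vec p, r, c)$, are sound details that the paper leaves implicit.
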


Because deciding this limit is equivalent to tautology checking, established LTL satisfiability checking tools \cite{rozier2007ltl,li2014aalta,geatti2024sat} can directly be used for determining the limit behaviour of LTL formulas in $\mathbb{K}$. The following result shows that the situation for $\mathbb{K}^1$ is similar:

\begin{corollary}
Computing $\lim_{n\rightarrow \infty} \mathbb{P}(\mathbb{K}^1 \models \varphi)$ is NP-hard.
\end{corollary}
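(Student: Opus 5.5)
We prove NP-hardness by a polynomial-time reduction from Boolean unsatisfiability (equivalently, from SAT, since NP-hardness is preserved under complementation for this kind of computation problem). The starting point is Theorem \ref{thm:01ltlone}: the limit equals $\sum_{S \in \mathcal{T}_{\varphi}} \varrho_S$, where $\mathcal{T}_\varphi$ is the set of $S$ for which $\varphi$ is an $S$-tautology. Since every $\varrho_S$ is strictly positive (all $p_a \in (0,1)$), the limit equals $1 = \sum_{S \in 2^A}\varrho_S$ if and only if $\mathcal{T}_\varphi = 2^A$. By Corollary \ref{cor:counterword}(1), this holds if and only if $\varphi$ is a tautology. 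So any algorithm computing the limit also decides LTL tautology: compute the value and compare it with $1$.

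This alone already gives PSPACE-hardness via Theorem \ref{thm:ltlcomp}, which is stronger than required. To keep the argument elementary and self-contained, we instead restrict to propositional formulas, where the reduction is transparent. Given a propositional CNF formula $\psi$ over variables $A$, view $\psi$ as an LTL formula without temporal operators. For a propositional $\psi$, the truth of $w \models \psi$ depends only on $w_0$. Hence $\psi$ is an $S$-tautology exactly when the assignment $S$ (true variables $=S$) satisfies $\psi$, and $\mathcal{T}_\psi$ is precisely the set of satisfying assignments. Therefore the limit equals $\sum_{S \models \psi}\varrho_S$, which is $0$ if and only if $\psi$ is unsatisfiable, again because every $\varrho_S>0$.

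The reduction is thus simply the identity map $\psi \mapsto \psi$. Computing the limit and testing whether it equals $0$ decides SAT, so computing the limit is NP-hard. One detail needs care: the $p_a$ are fixed model parameters, and the problem must specify them. We would fix, for example, $p_a = 1/2$ for all $a$. Then the limit is $\#\mathrm{SAT}(\psi)/2^{|A|}$, a rational number with polynomially many bits, so the output is well-defined and its comparison with $0$ is trivial. This formulation even shows \#P-hardness.

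The main obstacle is not the mathematics but pinning down what ``computing'' a real-valued limit means as a decision or function problem. We resolve this by taking the $p_a$ as rational inputs, or as fixed at $1/2$, and noting that deciding ``limit $>0$'' is already NP-hard.
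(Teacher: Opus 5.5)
Your proposal is correct. Its first paragraph is exactly the paper's proof: because every $\varrho_S>0$ (the paper's text has the typo $\varrho_S>1$), the limit equals $1$ iff $\varphi$ is a tautology. Theorem~\ref{thm:ltlcomp} then applies and actually gives PSPACE-hardness, of which the stated NP-hardness is a weakening.

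Your main argument is a genuinely different route. Restricting to formulas without temporal operators, $\mathcal{T}_\psi$ is just the set of satisfying assignments of $\psi$, so deciding whether the limit is positive is SAT. This buys three things:
\begin{enumerate}
\item It does not depend on the random-graph analysis behind Theorem~\ref{thm:01ltlmain}. Labels are independent of the transitions and of which state is initial, so $\ell(\ini)$ has distribution $(\varrho_S)_S$ for every $n$. Hence $\mathbb{P}(\mathbb{K}^1\models\psi)=\sum_{S\models\psi}\varrho_S$ holds exactly, not only in the limit.
\item It shows the hardness is already present in the temporal-free fragment. It comes from the $2^{|A|}$-point label distribution rather than from temporal reasoning.
\item With $p_a=1/2$ it gives \#P-hardness of computing the exact value.
\end{enumerate}

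The cost is twofold. You get only NP/\#P-hardness rather than PSPACE-hardness. And your reduction needs the set $A$ of atomic propositions to grow with the input: for fixed $A$, a propositional formula is handled by evaluating it on the $2^{|A|}$ assignments in linear time. You adopt this implicitly by taking $A$ to be the variable set of $\psi$, which matches the viewpoint of Remark~\ref{rem:fixA}. It is worth stating this explicitly, together with how the $p_a$ are supplied as part of the problem instance.
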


\begin{proof}
Suppose that $0 < p_a < 1$ for all $a \in A$. Then $\varrho_S > 1$ for all $S \in 2^A$; hence for any LTL formula $\varphi$ we have
\begin{align*}
\varphi \textrm{ is a tautology} \ \  &\Leftrightarrow \varphi \textrm{ is an $S$-tautology for all $S \in 2^A$}\\
&\Leftrightarrow \lim_{n \rightarrow \infty} \mathbb{P}(\mathbb{K}^1 \models \varphi) = 1.
\end{align*}
By Theorem \ref{thm:ltlcomp} this shows that computing $\mathbb{P}(\mathbb{K}^1 \models \varphi)$ is NP-hard.
\end{proof}

\section{CTL} \label{sec:ctl}

In this chapter, we study the behaviour of \emph{Computational Tree Logic} (CTL) \cite{baier2008principles}. We follow the same structure as Section \ref{sec:ltl}.

\subsection{Syntax}

Contrary to LTL, CTL consists of \emph{state formulas} $\psi$ and \emph{path formulas $\varphi$}:

\begin{align*}
\psi &::= a \mid \psi \wedge \psi \mid \neg \psi \mid \mathsf{A}\varphi \mid \mathsf{E}\varphi \\
\varphi &::= \mathsf{X}\psi \mid \mathsf{F}\psi \mid \mathsf{G}\psi \mid \psi \mathsf{U}\psi.
\end{align*}

CTL model checking is only concerned with state formulas, as we will see below. Therefore, one can alternatively skip $\varphi$ altogether and instead define the syntax in terms of $\mathsf{AX}, \mathsf{EU},...$ directly.

\subsection{Semantics}

For a KS $K$ with state $x$ and path $\pi$, we define $K,x \models \psi$ and $K,\pi \models \varphi$ as follows:

\begin{align*}
K,x &\models a &\Leftrightarrow&& a &\in \ell(x) \\
K,x &\models \psi \wedge \psi' &\Leftrightarrow&& K,x &\models \psi \textrm{ and } K,x \models \psi'\\
K,x &\models \neg \psi &\Leftrightarrow&& K,x &\not\models \psi\\
K,x &\models \mathsf{A}\varphi &\Leftrightarrow&& K,\pi &\models \varphi \textrm{ for all paths $\pi$ with $\pi_0 = x$}\\
K,x &\models \mathsf{E}\varphi &\Leftrightarrow&& K,\pi &\models \varphi \textrm{ for some path $\pi$ with $\pi_0 = x$}\\
K,\pi &\models \mathsf{X}\psi &\Leftrightarrow&& K,\pi_1 &\models \psi \\
K,\pi &\models \mathsf{F}\psi &\Leftrightarrow&& K,\pi_i &\models \psi \textrm{ for some $i \geq 0$}\\
K,\pi &\models \mathsf{G}\psi &\Leftrightarrow&& K,\pi_i &\models \psi \textrm{ for all $i \geq 0$}\\
K,\pi &\models \psi \mathsf{U}\psi' &\Leftrightarrow&& K,\pi_i &\models \psi' \textrm{ for some $i \geq 0$, and } K,\pi_{j} \models \psi \textrm{ for all $j < i$}\\
K &\models \psi & \Leftrightarrow&& K,x &\models \psi \textrm{ for all initial states $x$}.
\end{align*}

The name \emph{computational tree logic} refers to the fact that a CTL formula does not look at a single path, but at the entire infinite tree of possible paths from a given state; branching is explored using $\mathsf{A}$ (all outgoing paths) and $\mathsf{E}$ (a single path).

CTL does not contain LTL, nor the other way around: the CTL path formula $\mathsf{AFAG}a$ cannot be expressed in LTL, and the LTL formula $\mathsf{FG}a$ cannot be expressed in CTL \cite{baier2008principles}. An example of a logic extending both is CTL*.

As for LTL, we consider propositional formulas $\alpha$ that only have Boolean connectives and no temporal operators. We write $\alpha \equiv \alpha'$ if two propositional formulas are equivalent. Furthermore,  we define 
\begin{align}
\mathcal{S}_{\alpha} &= \{S \in 2^A \mid (\alpha_S \rightarrow \alpha) \equiv \top\}, \label{eq:Salpha}\\
\varrho_{\alpha} &= \sum_{S \in \mathcal{S}_{\alpha}} \varrho_S \in [0,1]. \nonumber
\end{align}
Hence $\mathcal{S}_{\alpha}$ is the set of all $S$ under which $\alpha$ holds, and $\mathbb{P}(\mathbb{K},x \models \alpha) = \varrho_{\alpha}$ for every $x \in V$. Note that $\varrho_{\alpha_S} = \varrho_S$. 

\subsection{0-1 laws}

Unlike for LTL, for CTL there are non-tautologies that nevertheless have limit probability $1$. In fact, the limit probability may depend on the transition probability parameter $p_{\recht{t}} = \tfrac{c \log n}{n}$ and on the initial state probability $p_{\recht{i}} = n^{-r}$, as the following example shows.

\begin{example} \label{ex:ctl}
Consider $\varphi = \mathsf{EX}a$. Then $\mathbb{K} \models \varphi$ if and only if every initial state has a successor state satisfying $a$; it is clearly not a tautology. For any given state, the probability that a given other state is an $a$-successor is $p_ap_{\recht{t}}$. Let $C_x$ be the indicator of the event that state $x$ is initial and has no $a$-successor; then $\mathbb{P}(C_x = 1) = p_{\recht{i}}(1-p_ap_{\recht{t}})^n$. If $X = \sum_x C_x$, then $\mathbb{E}[X] = np_{\recht{i}}(1-p_ap_{\recht{t}})^n$. One can show that this is $(1+o(1))n^{1-r-p_ac}$ (recall that $p_{\recht{i}} = n^{-r}$). Thus, the behaviour of $\varphi$ depends on the value of $r+p_ac$ (see Appendix \ref{app:ex} for details):
\begin{itemize}
\item If $r+p_ac > 1$, then $\mathbb{E}[X] \rightarrow 0$, and $\mathbb{P}(K \models \varphi) = \mathbb{P}(X = 0) \rightarrow 1$. Hence $\varphi$ is an example of a non-tautology with limit probability 1.
\item If $r+p_ac < 1$, then $\mathbb{E}[X] \rightarrow \infty$. With the second moment method one can show that $\mathbb{P}(X= 0) \rightarrow 0$, so $\mathbb{P}(\mathbb{K} \models \varphi) \rightarrow 0$.
\item If $r+p_ac = 1$, then $\mathbb{E}[X]$ converges to the constant $p_{\recht{i}}$. In fact, one can show that $X$ converges to a Poisson random variable with parameter $1$, so we get $\mathbb{P}(X = 0) \rightarrow \textrm{e}^{-1}$. This shows that for some values of $(\vec{p},r,c)$ a 0-1 law fails.
\end{itemize}
\end{example}

The moral of the story is that unlike for LTL, the limit probabilities of certain CTL formulas depend on the parameters $(\vec{p},r,c)$. The problem is that for values of $c > 1$ that are small, we get connectivity as every state has a successor, but not every state will have an $S$-labeled successor, for certain $S \in 2^A$. When this complication does not happen, we indeed get a 0-1 law for multiple initial states, and a convergence law for a single initial state:

\begin{theorem} \label{thm:ctl}
Suppose that $c > \max_{S \in 2^A} \varrho_S^{-1}$. Then $\mathbb{K}$ and satisfies a 0-1 law, and $\mathbb{K}^1$ satisfies a convergence law. Moreover, the limit probability of $\psi$ can be computed in time $\mathcal{O}(2^{|A|}|\psi|)$, where $|\psi|$ denotes formula length.
\end{theorem}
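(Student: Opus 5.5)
The plan is to show that, under $c > \max_S \varrho_S^{-1}$, every CTL state formula $\psi$ is asymptotically equivalent, uniformly over all states, to a propositional formula $\alpha_\psi$. Concretely, the goal is a map $\psi \mapsto \mathcal{S}(\psi) \subseteq 2^A$ such that with high probability
\[
\text{for every state } x,\qquad \mathbb{K},x \models \psi \iff \ell(x) \in \mathcal{S}(\psi).
\]
Given this, $\mathbb{K}^1$ has limit probability $\sum_{S \in \mathcal{S}(\psi)} \varrho_S$, by conditioning on $\ell(\ini)$ exactly as in Theorem \ref{thm:01ltlone}. For $\mathbb{K}$ the limit is $1$ if $\mathcal{S}(\psi) = 2^A$ and $0$ otherwise. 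The $0$ case holds because, with high probability, for every $S$ some initial state carries label $S$: the expected number of such states is $n^{1-r}\varrho_S \to \infty$, and these events are independent across states. Computing $\mathcal{S}(\psi)$ bottom-up, with each step a set operation on subsets of $2^A$, gives the $\mathcal{O}(2^{|A|}|\psi|)$ bound.

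Two high-probability structural facts about $\mathbb{D}$ drive the argument, and Theorem \ref{thm:KS} transfers them to $\mathbb{K}$ and $\mathbb{K}^1$.
\begin{itemize}
\item[(P1)] For every $S \in 2^A$, every state has an out-neighbour labelled $S$. By a first-moment bound, the expected number of violations is $2^{|A|} n (1-\varrho_S p_{\recht{t}})^{n} \approx n^{1-c\varrho_S} \to 0$, and this is precisely where $c > \varrho_S^{-1}$ is used.
\item[(P2)] For every nonempty $\mathcal{S} \subseteq 2^A$, the subgraph induced on states with labels in $\mathcal{S}$ is strongly connected. This subgraph is an Erd\H{o}s-R\'enyi digraph on about $\varrho_{\mathcal{S}} n$ vertices with edge probability $c\log n / n$, hence of the form $c' \log N/N$ with $c' = c\varrho_{\mathcal{S}} > 1$. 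Its minimum in- and out-degrees are positive by the same first-moment computation as (P1), which gives out-degree; in-degree follows symmetrically. Strong connectivity then follows from the standard connectivity argument for random digraphs above the threshold. I would take the digraph analogue of the result of \cite{graham2008note} used in Theorem \ref{thm:KS}, or reprove it by bounding the expected number of sets $U$ with $2 \le |U| \le N/2$ having no out-edges (or no in-edges), in the style of \cite{hofstad2017random}.
\end{itemize}
Since there are finitely many $S$ and $\mathcal{S}$, a union bound makes (P1) and (P2) hold simultaneously with high probability. In $\mathbb{K}^1$ the conditioning on $|I| = 1$ affects only the initial-state variables, which are independent of transitions and labels, so these facts persist there.

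Next comes structural induction on $\psi$, using the adequate basis $\mathsf{EX}$, $\mathsf{EU}$, $\mathsf{EG}$; the $\mathsf{A}$-operators follow by duality. Atoms and Boolean connectives are immediate. For $\mathsf{EX}\psi$, (P1) shows that $\mathcal{S}(\mathsf{EX}\psi)$ is all of $2^A$ if $\mathcal{S}(\psi) \neq \varnothing$, and is $\varnothing$ otherwise. For $\mathsf{E}(\psi_1 \mathsf{U} \psi_2)$, the claim is
\[
\mathcal{S} = \mathcal{S}(\psi_2) \cup \bigl(\mathcal{S}(\psi_1) \text{ if } \mathcal{S}(\psi_2) \neq \varnothing\bigr).
\]
If $\ell(x) \in \mathcal{S}(\psi_1)$ and some state satisfies $\psi_2$, then by (P1) $x$ has a successor $y$ with a $\psi_2$-label, so the path $x,y$ witnesses the until. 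Conversely, a state whose label lies in neither set cannot start a witnessing path. For $\mathsf{EG}\psi$ the claim is $\mathcal{S}(\mathsf{EG}\psi) = \mathcal{S}(\psi)$: by (P2) the $\psi$-labelled subgraph is strongly connected, and by (P1) it has positive out-degree, so it contains an infinite path from every one of its vertices.

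The main obstacle is (P2), the uniform strong connectivity of the label-restricted induced subgraphs near the $\log N/N$ threshold. (P1) alone already suffices for $\mathsf{EG}$, since every $\psi$-state has a $\psi$-successor and one can walk forever. So the fallback is to skip (P2) entirely and rely on (P1), which reduces the whole proof to first-moment bounds plus the induction.
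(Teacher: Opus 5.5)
Your proposal is correct. Its outer skeleton is the paper's: reduce each CTL formula bottom-up to a propositional one by propagating the truth table $\mathcal{S}$, then use that initial-state marks are independent of labels and transitions. Where you differ is the probabilistic core.

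\textbf{The paper's route.} It argues formula by formula. It introduces $\sim$ and a replacement lemma (Lemma \ref{lem:rep}), and proves Lemma \ref{lem:bustep} for a single temporal operator over propositional arguments. Its $\mathsf{EX}$ case is your first-moment bound. For $\mathsf{A}(\alpha'\mathsf{U}\alpha)$, and hence $\mathsf{AF}$ and $\mathsf{EG}$, it invokes strong connectivity of $\sem{\neg\alpha}$ via the digraph connectivity threshold, i.e.\ your (P2).

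\textbf{Your route.} You fix the single event (P1) and show that on it, deterministically and simultaneously for all $\psi$, $\sem{\psi}=\{x:\ell(x)\in\mathcal{S}(\psi)\}$. Your remark that (P1) alone gives every $\psi$-state a $\psi$-successor is right. It equally settles the paper's $\mathsf{A}(\alpha'\mathsf{U}\alpha)$ case: from a $\neg\alpha$-state, walk forever through $\neg\alpha$-states. So (P2) is not a fallback question; you can simply delete it.

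\textbf{Comparison.} Your route needs only first-moment bounds and yields a statement uniform over all formulas. The paper's connectivity argument proves more structure than this theorem uses. You also correctly handle a point the CTL proof leaves implicit. Since $|I|$ has mean $n^{1-r}\to\infty$, we have $\mathbb{P}(|I|=1)\to 0$. Hence high-probability statements reach $\mathbb{K}^1$ only through the independence of the initial-state variables from transitions and labels, which is exactly what you invoke.

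\textbf{Running time.} The dual form of $\mathsf{A}(\psi_1\mathsf{U}\psi_2)$ duplicates $\psi_2$. To keep the bound, either evaluate on the DAG of subformulas, or use the direct rules valid on the (P1) event:
\begin{itemize}
\item $\mathcal{S}(\mathsf{A}(\psi_1\mathsf{U}\psi_2))=\mathcal{S}(\psi_2)$;
\item $\mathcal{S}(\mathsf{AX}\psi)=2^A$ if $\mathcal{S}(\psi)=2^A$, and $\varnothing$ otherwise.
\end{itemize}
Each operator then costs $\mathcal{O}(2^{|A|})$, and the bound $\mathcal{O}(2^{|A|}|\psi|)$ follows.
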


This result is analogous to our result for CTL on Erd\H{o}s-Rényi KSs with constant transition probability \cite{dong20250}, and the proof structure is the same as well; nevertheless, the details require considerably more careful probabilistic reasoning, since the extension property fails for our sparse KS. We sketch the proof strategy below, and also explain why the case of smaller $c$ would require a radically different approach. We will leave finding such an approach to future work.

\begin{remark}
If one considers $A$ fixed, then Theorem \ref{thm:ctl} shows that 0-1 law checking for CTL has polynomial time complexity. However, we state the result like this to account for Remark \ref{rem:fixA}, where one might want $A$ to differ between formulas. From that perspective, the complexity can be expressed as $\mathcal{O}(2^{|\psi|}|\psi|)$. The exponential dependence on $|\psi|$ may seem surprising compared to regular CTL model checking: determining whether $K \models \varphi$ for a concrete KS $K$ has complexity $\mathcal{O}(|K|\cdot|\varphi|)$. This is best understood by the idea that the dependence on $K$ (does $\varphi$ hold for $K$?) is at some points replaced by satisfiability checking (can $\varphi$ hold at all?); see Lemma \ref{lem:bustep}. This satisfiability checking is only done on propositional $\varphi$, which has complexity $\mathcal{O}(2^{|A|})$.
\end{remark}

\subsection{Proving Theorem \ref{thm:ctl}.}

The core idea is to define an equivalence relation $\sim$ on CTL formulas such that each formula is equivalent to a propositional one (only Boolean operators), and such that equivalent formulas have the same limit probability. This way, we reduce the problem to propositional formulas, which are easily understood.

\begin{definition}
For a CTL formula $\varphi$, let $\sem{\varphi} = \{x\in \mathbb{K} \mid \mathbb{K},x \models \varphi\}$ be the set of vertices in $\mathbb{K}$ that satisfy $\varphi$. For two CTL formulas, we write $\varphi \sim \varphi'$ if $\mathbb{P}(\sem{\varphi} = \sem{\varphi'}) \rightarrow 1$.
\end{definition}

Clearly, if $\varphi \sim \varphi'$, then $\varphi$ and $\varphi'$ have the same limit probability, both in $\mathbb{K}$ and $\mathbb{K}^1$. Since $\sem{\varphi}$ can be computed bottom-up from the syntax tree of $\varphi$, we get the following result:

\begin{lemma}[See \cite{dong20250}] \label{lem:rep}
Let $\varphi,\varphi'$ be two CTL formulas, and let $\psi$ be a CTL formula of which $\varphi$ is a subformula. If $\varphi \sim \varphi'$, then $\psi \sim \psi[\varphi \leftarrow \varphi']$.
\end{lemma}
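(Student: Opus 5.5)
We prove Lemma~\ref{lem:rep} by structural induction on $\psi$. The proof is deterministic apart from one union bound. The key observation is that for a fixed Kripke structure $K$, the set $\sem{\psi}$ is determined by two things: the sets $\sem{\chi}$ of the immediate subformulas $\chi$ of $\psi$, and the structure $K$ itself (transitions and labels). In other words, for each top-level connective there is a deterministic operator $F$ with $\sem{\psi} = F_K(\sem{\chi_1},\sem{\chi_2})$. For instance, $\sem{\neg\chi} = V\setminus\sem{\chi}$, and $\sem{\mathsf{E}(\chi_1\mathsf{U}\chi_2)}$ is the least fixed point computed from $\sem{\chi_1},\sem{\chi_2}$ and $E$. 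This is exactly the bottom-up labeling algorithm of CTL model checking.

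Write $\psi' = \psi[\varphi\leftarrow\varphi']$. We claim that on the event $\{\sem{\varphi} = \sem{\varphi'}\}$ we have $\sem{\chi} = \sem{\chi[\varphi\leftarrow\varphi']}$ for every subformula $\chi$ of $\psi$. We prove the claim by induction on the syntax tree of $\psi$, in three cases.
\begin{itemize}
\item If $\chi$ does not contain $\varphi$, the substitution does nothing, so the claim is trivial.
\item If $\chi = \varphi$, then $\chi[\varphi\leftarrow\varphi'] = \varphi'$, and equality holds by assumption on the event.
\item Otherwise $\chi$ is obtained from subformulas $\chi_1$ (and possibly $\chi_2$) by one connective among $\neg,\wedge,\mathsf{A}\cdot,\mathsf{E}\cdot$ combined with $\mathsf{X},\mathsf{F},\mathsf{G},\mathsf{U}$. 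The substituted formula is obtained by the same connective from $\chi_i[\varphi\leftarrow\varphi']$. By induction the arguments of the operator $F_K$ coincide, and since $F_K$ depends only on these sets and on $K$, the outputs coincide.
\end{itemize}

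Applying the claim to $\chi = \psi$ gives the event inclusion $\{\sem{\varphi}=\sem{\varphi'}\}\subseteq\{\sem{\psi}=\sem{\psi'}\}$. Hence
\[
\mathbb{P}(\sem{\psi}=\sem{\psi'}) \geq \mathbb{P}(\sem{\varphi}=\sem{\varphi'}) \rightarrow 1,
\]
that is, $\psi\sim\psi'$. If $\varphi$ occurs several times in $\psi$, nothing changes: a single event $\{\sem{\varphi}=\sem{\varphi'}\}$ handles all occurrences simultaneously, so no union bound is even needed.

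The only point that needs care is that $\sim$ is defined via $\mathbb{K}$, which is a conditioned measure. Since the inclusion above is a pointwise inclusion of events on the same probability space, it holds under any measure, including $\mathbb{K}$ and $\mathbb{K}^1$. So no appeal to Theorem~\ref{thm:KS} is required.

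The main (mild) obstacle is to state precisely that each CTL operator's semantics is a function of the satisfaction sets of its arguments. For the path quantifiers combined with $\mathsf{F},\mathsf{G},\mathsf{U}$ this follows directly from the semantic clauses: whether a path $\pi$ satisfies $\mathsf{F}\chi$, $\mathsf{G}\chi$ or $\chi_1\mathsf{U}\chi_2$ depends only on which $\pi_i$ lie in $\sem{\chi}$ (respectively $\sem{\chi_1},\sem{\chi_2}$), and the set of paths from $x$ depends only on $E$.
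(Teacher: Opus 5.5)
Your proof is correct and uses the same idea as the paper. The paper gives no argument beyond noting that $\sem{\cdot}$ is computed bottom-up along the syntax tree, and it cites \cite{dong20250}; your induction turns that remark into the event inclusion $\{\sem{\varphi}=\sem{\varphi'}\}\subseteq\{\sem{\psi}=\sem{\psi[\varphi\leftarrow\varphi']}\}$, which is exactly what is needed (your opening mention of a union bound is superfluous, as you yourself note later).
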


The following result shows that we can eliminate temporal operators ($\mathsf{EX}$ etc.) in CTL formulas that have only one of them:

\begin{lemma} \label{lem:bustep}
Suppose that $c > \max_{S \in 2^A} \varrho_S^{-1}$. Let $\alpha$ and $\alpha'$ be propositional CTL formulas, i.e., formulas with only Boolean connectives. then
\begin{align*}
\mathsf{EX}\alpha \sim \mathsf{EF}\alpha &\sim \begin{cases}
\bot, & \textrm{ if $\alpha \equiv \bot$},\\
\top, & \textrm{ otherwise}
\end{cases}\\
\mathsf{AX} \alpha \sim \mathsf{AG}\alpha &\sim \begin{cases}
\top, & \textrm{ if $\alpha \equiv \top$},\\
\bot, & \textrm{ otherwise}
\end{cases}\\
\mathsf{AF}\alpha \sim \mathsf{EG}\alpha \sim \mathsf{A}(\alpha'\mathsf{U}\alpha) &\sim \alpha,\\
\mathsf{E}(\alpha'\mathsf{U}\alpha) &\sim \begin{cases}
\bot, & \textrm{ if $\alpha \equiv \bot$},\\
\alpha \vee \alpha', & \textrm{ otherwise.}
\end{cases}
\end{align*}
\end{lemma}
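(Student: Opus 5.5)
Since $\alpha$ and $\alpha'$ are propositional, membership of a state $x$ in $\sem{\alpha}$ depends only on $\ell(x)$. By Theorem \ref{thm:KS} it suffices to work with $\mathbb{D}$ and show that each claimed equality $\sem{\cdot}=\sem{\cdot}$ holds with probability $1-o(1)$. All eight equivalences follow from two high-probability structural facts about $\mathbb{D}$:
\begin{enumerate}
\item[(a)] for every $S\in 2^A$, every state has an out-neighbour $y$ with $\ell(y)=S$;
\item[(b)] $\mathbb{D}$ is strongly connected (the Remark after Theorem \ref{thm:KS}).
\end{enumerate}

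\textbf{Proving (a).} For fixed $x$ and $S$, a given other state is an $S$-labelled successor of $x$ independently with probability $\varrho_S p_{\recht{t}}$. Hence
\[
\mathbb{P}(x\text{ has no $S$-successor})\le (1-\varrho_S c\log n/n)^{n-1}=n^{-\varrho_S c+o(1)}.
\]
The hypothesis $c>\max_S\varrho_S^{-1}$ gives $\varrho_S c>1$. A union bound over the $n$ states and the finitely many $S$ then yields failure probability $n^{1-\min_S\varrho_S c+o(1)}\to 0$. This is exactly the computation of Example \ref{ex:ctl} without the initial-state factor. If needed, I would prove (b) by the standard expansion argument for $\mathbb{D}$ with $c>1$, in the same style as \cite{graham2008note}. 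I expect (a) to be where the threshold on $c$ is really needed; the rest is bookkeeping.

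\textbf{Deducing the equivalences, assuming (a) and (b).}
\begin{itemize}
\item If $\alpha\not\equiv\bot$, choose $S\in\mathcal{S}_\alpha$. By (a), every state has an $S$-successor, so $\mathsf{EX}\alpha$ holds everywhere, and therefore so does $\mathsf{EF}\alpha$. If $\alpha\equiv\bot$, both $\mathsf{EX}\alpha$ and $\mathsf{EF}\alpha$ are false everywhere.
\item $\mathsf{AX}$ and $\mathsf{AG}$ follow by duality: $\mathsf{AX}\alpha=\neg\mathsf{EX}\neg\alpha$ and $\mathsf{AG}\alpha=\neg\mathsf{EF}\neg\alpha$, combined with Lemma \ref{lem:rep}.
\item For $\mathsf{AF}\alpha$: a state satisfying $\alpha$ satisfies it trivially. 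For a state $x$ with $x\notin\sem{\alpha}$, build a path that avoids $\sem{\alpha}$ forever. Pick $S\notin\mathcal{S}_\alpha$ (one exists unless $\alpha\equiv\top$, which is trivial) and repeatedly step to an $S$-successor using (a). So $\mathsf{AF}\alpha\sim\alpha$.
\item $\mathsf{EG}\alpha$ is handled the same way. If $x\models\alpha$, repeatedly step to $S$-successors with $S\in\mathcal{S}_\alpha$ to build an infinite path inside $\sem{\alpha}$. If $x\not\models\alpha$, then $\mathsf{EG}\alpha$ fails at $x$ trivially.
\item For $\mathsf{A}(\alpha'\mathsf{U}\alpha)$: it holds at $\alpha$-states. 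At a $\neg\alpha$-state, use the same $\neg\alpha$-avoiding path as for $\mathsf{AF}$; along it, $\alpha$ never occurs, so the until-formula fails.
\item For $\mathsf{E}(\alpha'\mathsf{U}\alpha)$ with $\alpha\not\equiv\bot$: it holds at $\alpha$-states. At an $\alpha'$-state, one step to an $S$-successor with $S\in\mathcal{S}_\alpha$ makes it hold. At a state satisfying $\neg\alpha\wedge\neg\alpha'$ it fails by definition. So it equals $\sem{\alpha\vee\alpha'}$ with high probability. If $\alpha\equiv\bot$, it is false everywhere.
\end{itemize}

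Notably, (b) turns out to be unnecessary: every path constructed above uses only the single-step property (a).
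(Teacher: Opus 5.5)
Your proposal is correct. For $\mathsf{EX}$, $\mathsf{EF}$, $\mathsf{AX}$, $\mathsf{AG}$ and $\mathsf{E}(\alpha'\mathsf{U}\alpha)$ it is essentially the paper's argument:
\begin{itemize}
\item a first-moment bound showing every state has an $\alpha$-successor;
\item duality for $\mathsf{AX}$ and $\mathsf{AG}$;
\item the sandwich $\alpha\vee(\alpha'\wedge\mathsf{EX}\alpha)\Rightarrow\mathsf{E}(\alpha'\mathsf{U}\alpha)\Rightarrow\alpha\vee\alpha'$.
\end{itemize}

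The genuine difference is in the cases $\mathsf{AF}$, $\mathsf{EG}$ and $\mathsf{A}(\alpha'\mathsf{U}\alpha)$. The paper proves that the subgraphs induced by $\sem{\alpha}$ and $\sem{\neg\alpha}$ are a.a.s.\ strongly connected. It does this by re-expressing $c\log n/n$ in terms of $|\sem{\alpha}|\approx\varrho_\alpha n$ and invoking the directed strong-connectivity threshold of \cite{graham2008note}. It then uses strong connectivity to obtain an infinite path staying inside $\sem{\neg\alpha}$.

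You instead observe that an infinite path inside a set only requires every vertex to have a successor in that set. Your event (a) already gives this for a single fixed label class $S$, so you simply walk along $S$-successors.

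What your route buys is a self-contained argument resting on one union bound, with no external threshold theorem and no $\varepsilon$-rescaling of $n$. It also makes clear that the hypothesis $c>\max_S\varrho_S^{-1}$ enters only through $\varrho_S c>1$ in that single estimate. The paper's route establishes a stronger structural fact, namely strong connectivity of each label-defined set. That extra strength is never used elsewhere in the proof of Theorem \ref{thm:ctl}, so your closing remark that (b) is superfluous is right.
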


\begin{proof}[Sketch of proof]
The key cases are $\mathsf{EX}\alpha$, essentially covered in Example \ref{ex:ctl}, and $\mathsf{A}(\alpha'\mathsf{U}\alpha)$, which we discuss here. If $\alpha \not \equiv \top$, then one can show that our choice of $c$ implies that $\llbracket \neg \alpha \rrbracket$ is large enough to be likely to be strongly connected. Thus, any vertex in $\llbracket \neg \alpha \rrbracket$ has an infinite path avoiding $\alpha$; the only way for a vertex to satisfy $\mathsf{A}(\alpha'\mathsf{U}\alpha)$ is to satisfy $\alpha$ in the first place.
\end{proof}

This gives us the required ammunition to prove Theorem \ref{thm:ctl}.

\begin{proof}[Theorem \ref{thm:ctl}]
Given any formula $\psi$, we work bottom-up in its syntax tree, using Lemma \ref{lem:bustep} to get rid of temporal operators one by one. By Lemma \ref{lem:rep}, the propositional formula $\alpha$ we end up with has $\alpha \sim \psi$. Hence $\mathbb{K} \models \psi$ iff $I \subseteq \sem{\alpha}$. Since $|I| \rightarrow \infty$, and being initial is independent of satisfying $\alpha$, we have $\mathbb{P}(I \subseteq \sem{\alpha}) \rightarrow 0$ unless $\alpha$ is a tautology. For $\mathbb{K}^1$, we have $\ini \in \sem{\alpha}$ with probability $\varrho_{\alpha}$; this proves that this algorithm is correct.

To perform this algorithm, it is more effective to propagate $\mathcal{S}_{\alpha}$ (see (\ref{eq:Salpha})) rather than $\alpha$ itself; the former is equivalent to propagating the truth table of $\alpha$. The relevant operations we need to do on these truth tables at an operator are $\neg,\wedge,\vee$, and checking satisfiability/tautologies; all these take at most $2^{|A|}$ steps. Since the syntax tree has size $\mathcal{O}(|\psi|)$, we conclude that the total computation time is $\mathcal{O}(|\psi|)$.
\end{proof}

\begin{example}[Adapted from \cite{dong20250}]
Consider $\varphi = \mathsf{AG}(\mathsf{E}((\mathsf{AF} a_1)\mathsf{U}\neg a_1)) \wedge \mathsf{EG}(a_2)$. We repeatedly apply Lemma \ref{lem:bustep}, where the \textcolor{red}{red} part is the subformula being replaced.
\begin{align*}
 \mathsf{AG}(\mathsf{E}((\textcolor{red}{\mathsf{AF} a_1})\mathsf{U}\neg a_1)) \wedge \mathsf{EG}(a_2) &\sim \mathsf{AG}(\textcolor{red}{\mathsf{E}(a_1\mathsf{U}\neg a_1)}) \wedge \mathsf{EG}(a_2)  \\
&\sim \textcolor{red}{\mathsf{AG}(a_1 \vee \neg a_1)} \wedge \mathsf{EG}(a_2)  \\
&\sim \top \wedge \textcolor{red}{\mathsf{EG}(a_2)}  \\
&\sim \top \wedge a_2 \sim a_2.
\end{align*}
It follows that
\begin{align*}
\mathbb{P}(\mathbb{K} \models a_2) \rightarrow  0,& \textrm{ so } \mathbb{P}(\mathbb{K} \models \varphi) \rightarrow 0;\\
\mathbb{P}(\mathbb{K}^1 \models a_2) = p_2,& \textrm{ so } \mathbb{P}(\mathbb{K}^1 \models \varphi) \rightarrow p_2.
\end{align*}
\end{example}

\begin{remark}
Similar to LTL, our proof actually works for all choices of functions $p_{\recht{t}}(n)$ and $p_{\recht{i}}(n)$ for which $\liminf \tfrac{p_{\recht{t}}(n)}{\log n}$ is large enough and for which $p_{\recht{i}}(n) \gg n^{-1}$; hence our result generalizes that of \cite{dong20250}.
Unfortunately, our proof breaks down for small $c$: if $r+p_ac < 1$, then there is no propositional $\alpha$ such that $\mathsf{EX}a \sim \alpha$. Indeed, $|\sem{\mathsf{EX}a}| \approx n^{1-r-p_ac}$, and $\sem{\alpha}$ has linear size for all $\alpha \not \equiv \bot$. This shows that the approach of \cite{dong20250} and of this paper does not work, and that a radical new idea is needed to investigate 0-1 laws in this regime. Note also that contrary to LTL, for a CTL formula $\varphi$ the property $K \models \varphi$ is in general not monotonous: adding transitions makes it easier for $\mathsf{EX}a$ to be satisfied, but harder for $\mathsf{AX}b$, and so $K \models \mathsf{EX}a \wedge \mathsf{AX}b$ is not monotonous. Therefore, it is not clear whether CTL formulas have treshold functions.
\end{remark}

\section{Conclusion}

In this paper, we developed realistic probabilistic models for large Kripke structures, based on both theoretical knowledge of how large KSs originate, and empirical data from the Model Checking Contest \cite{amat2023behind}. These models form a stepping stone for the theoretical analysis of typical KS behaviour, and can be used to generate large, realistic benchmarks for evaluating model checkers. Furthermore, we study the asymptotic behaviour of these models under LTL and CTL, proving 0-1 laws and convergence laws, studying the theoretical complexity of determining limit probabilities, and giving algorithms for these problems. This gives a complete theoretical picture of the asymptotic behaviour of large KS under LTL and CTL.

A natural step for future research would be to study the behaviour of model checking algorithms on these models. Given a probabilistic model, one can consider the \emph{average-case complexity} of model checking algorithms. This has been studied for first-order logic model checking on random graphs \cite{dreier2020first}; it would be interesting to see to which extent these results carry over to our models.

Alternatively, our models can become more involved to model large KS even more realistically. Since large KS arise as products of small ones, it would be interesting to develop a model of KS based on Kronecker graphs \cite{leskovec2010kronecker}, a random graph model based on product graphs. Kronecker graphs behave, roughly, like Erd\H{o}s-Rényi graphs with inhomogeneity; hence this would also allow us to generalize away from the homogeneity assumption of Section \ref{sec:prelim}. Since our 0-1 laws and convergence laws do not depend on precise parameter choices, it is likely that our results would generalize to Kronecker-type models as well, though the proofs may be more tedious. 

Another important assumption to generalize away from is the Erd\H{o}s-Rényi assumption of independence, since this does not quite hold in product graphs: if one of the $s$ basic KS has an internal transition, this transition occurs $d^{s-1}$ in the product. In other words, a more realistic random model should have many `parallel' transitions. In this respect, exponential random graph models \cite{snijders2006new} are a promising tool.



\bibliography{sample}

\appendix
\appendix

\section{Proofs}

\subsection{Proof of Lemma \ref{lem:logdeg}}

By assumption 4, there is a $\sigma > 0$, independent of $s$, such that $\tfrac{c_m}{d^{s_m}} \geq \sigma$ for all $m \in M'$. It follows that
\begin{equation} \label{eq:logdeg1}
\sum_{m \in M}  \tfrac{c_m}{d^{s_m}}  \geq \sum_{m \in M'}  \tfrac{c_m}{d^{s_m}} \geq \sigma|M'| = \Theta(s).
\end{equation}
On the other hand, from assumption 2 we have 
\begin{equation*}
|M| \leq \sum_{i=1}^s M_i \leq Cs = \Theta(s);
\end{equation*}
hence, by assumption 3:
\begin{equation} \label{eq:logdeg2}
\sum_{m \in M}  \tfrac{c_m}{d^{s_m}} \leq D|M| = \mathcal{O}(s).
\end{equation}
From equations (\ref{eq:logdeg1}) and (\ref{eq:logdeg2}) we conclude $\sum_{m \in M}  \tfrac{c_m}{d^{s_m}} = \Theta(s)$, so
\begin{equation} \label{eq:logdeg3}
\mathsf{av}(\mathbb{K}) = \mathcal{O}(s).
\end{equation}
To get a lower bound for $\mathsf{av}(\mathbb{K})$, we need to know how tight the inequality $\mathsf{av}(\mathbb{K}) \leq \sum_{m \in M}  \tfrac{c_m}{d^{s_m}}$ is; hence we need to consider how many times a transition $t$ in $\mathbb{K}$ can be counted in $\sum_{m \in M} d^{s-s_m} c_m$, i.e., how many action labels $m$ can giv rise to $t$. If $t$ is not a self-loop in $\mathbb{K}$, then there is a $K_i$ in which $t$ is not a self-loop. Any action label $m$ that induces $t$ must satisfy $m \in M_i$. By assumption 2, there are at most $C$ such $m$. Furthermore, for such an $m$, we know that $t$ is not a self-loop in $\mathbb{K}_m$. It follows that the number of non-self loops in $\mathbb{K}$ is at least $C^{-1}\sum_{m \in M} d^{s-s_m}(c_m-f_m)$, hence
\[
\mathsf{av}(\mathbb{K}) \geq C^{-1}\sum_{m \in M} \frac{c_m-f_m}{d^{s_m}} \geq C^{-1}\varrho|M| = \Theta(s),
\]
where we use assumption 5. Together with \eqref{eq:logdeg3} this proves the Lemma.

\subsection{Proof of Theorem \ref{thm:KS}}

As remarked, we only need to prove that $\lim_{n \rightarrow \infty} (1-n^{-\alpha})^n = 0$. To show this, note that
\begin{align*}
\log((1-n^{-\alpha})^n) &= n \log(1-n^{-\alpha}) \\
&= n(-n^{-\alpha}+\mathcal{O}(n^{-2\alpha})) \\
&= -n^{1-\alpha}+\mathcal{O}(n^{1-2\alpha}) \\
&\rightarrow \infty,
\end{align*}
So $(1-n^{-\alpha})^n \rightarrow 0$. Note that the fact that $\alpha < 1$ is essential here.

\subsection{Proof of Theorem \ref{thm:01ltlmain}}

For $S \in 2^A$, let $\mathbb{K}^1_S$ be the random KS with the probability distribution of $\mathbb{K}^1$, conditioned on the fact that $\ell(\ini) = S$. Then $\mathbb{K}^1$ comes from a 2-stage random process, where we first select $S$ with probability $p_S$, and then generate $\mathbb{K}^1_S$. Clearly, if $\varphi$ is an $S$-tautology then $\mathbb{P}(\mathbb{K}^1_S \models \varphi) = 1$. We will show the following result:

\begin{proposition} \label{prop:K1S}
If $\varphi$ is not an $S$-tautology, then $\mathbb{P}(\mathbb{K}^1_S \models \varphi) \rightarrow 0$.
\end{proposition}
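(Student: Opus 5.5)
By Corollary \ref{cor:counterword}, since $\varphi$ is not an $S$-tautology there are words $u,v \in (2^A)^*$ with $|u|>0$, $u_1 = S$, and $uv^{\omega}\not\models\varphi$. Write $u = u_1\cdots u_k$ and $v = v_1\cdots v_m$, with $m \geq 1$; these lengths are fixed as $n\to\infty$. It suffices to show that with probability tending to $1$, $\mathbb{K}^1_S$ contains a path $\pi$ starting at $\ini$ whose label sequence is exactly $uv^\omega$. Such a path witnesses $\mathbb{K}^1_S\not\models\varphi$. By Theorem \ref{thm:KS} and the fact that conditioning on $|I|=1$ and $\ell(\ini)=S$ only fixes the initial-state data, which is independent of transitions and of the labels of the other states, it is enough to work in $\mathbb{D}$ restricted to the non-initial states. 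There, transitions are present independently with probability $p_{\recht{t}} = c\log n/n$ and labels are assigned independently.

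The first step is to build the auxiliary digraph $\hat T$. Its vertex set is $T = \{x : \ell(x) = v_1\}$; by Chernoff, $|T| = (1+o(1))\varrho_{v_1} n$ with high probability. Put an arc $x\to y$ in $\hat T$ when there is a path $x = z_1, z_2,\dots,z_m, y$ with $\ell(z_j) = v_j$ and the intermediate $z_2,\dots,z_m$ drawn from pools disjoint from $T$. If some $v_j = v_1$, I would first split the vertex set randomly into $m+k$ classes of linear size and use a separate class for each position, which restores independence. The expected number of such connecting paths for a fixed pair $(x,y)$ is about $\prod_j(\varrho_{v_j}n/(m+k))\cdot p_{\recht{t}}^{m} = \Theta((\log n)^m/n)$. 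Using Janson's inequality, together with the fact that paths through distinct pairs share few transitions, I would show that each arc of $\hat T$ is present with probability at least $q = \Theta((\log n)^m/n)$. I would also show that the arc events are positively correlated, or at least stochastically dominate an Erd\H{o}s--R\'enyi digraph $D(|T|,q')$ with $q'|T|\to\infty$, via a sprinkling/coupling argument. This comparison is the main obstacle, because different arcs of $\hat T$ share intermediate vertices. My fallback is a direct exploration argument: expand out- and in-neighbourhoods in layers, revealing only fresh vertices each time, in the style of the giant-component proofs in \cite{hofstad2017random}.

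Next, since $q'|T| \to \infty$, the directed Erd\H{o}s--R\'enyi digraph has with high probability a strongly connected component $C\subseteq T$ with $|C| \geq (1-\varepsilon)|T|$. This is the standard giant strong component, so the result can be cited. Any vertex of $C$ then starts an infinite walk in $\hat T$, which lifts to an infinite path in $\mathbb{K}$ with labels $v^\omega$. Note that the path need not be simple, so reusing vertices is harmless.

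Finally, I would connect $\ini$ to $C$ along $u$. Expose greedily a path $\ini = y_1,\dots,y_k$ with $\ell(y_j) = u_j$, using fresh vertices at each step. At each step the number of candidates is $\mathrm{Bin}(\Theta(n), p_{\recht{t}})$, with mean $\Theta(\log n)$, so every step succeeds with probability $1 - n^{-\Theta(1)}$. From $y_k$ we need a transition into some $z$ with $\ell(z)=v_1$ and $z\in C$. To keep independence, I would reveal $C$ using only transitions among the other vertices, reserving the edges out of $y_k$. Since $|C| = \Theta(n)$, the probability that $y_k$ has no successor in $C$ is $(1-p_{\recht{t}})^{|C|} = n^{-\Theta(c)} \to 0$. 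Concatenating the $u$-path, the edge into $C$, and the infinite $v$-walk inside $C$ gives the counterexample path, which proves $\mathbb{P}(\mathbb{K}^1_S\models\varphi)\to 0$.
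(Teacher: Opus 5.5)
\textbf{The gap: no proof of a giant strong component in $\hat T$.} Your overall structure is the same as the paper's: a lasso witness $uv^\omega$, an auxiliary digraph $\hat T$ on $v_1$-labelled states whose arcs are $v$-labelled connecting paths, a large strong component $C$, and a $u$-path from $\ini$ into $C$. Your last step is sound. Exposing the $u$-path on fresh vertices and building $C$ without the transitions out of $y_k$ gives failure probability $(1-p_{\recht{t}})^{|C|}=n^{-\Theta(c)}$, a clean alternative to the paper's use of the uniformity of $\ini$. The gap is the step you flag yourself: for $m\ge 2$, nothing in the proposal actually produces a linear-size strong component of $\hat T$. Janson only controls one arc at a time, and neither route you offer for the joint law works as stated.
\begin{itemize}
\item Positive correlation points the wrong way. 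Conditional on the labels, the event that no arc goes from $X$ to $T\setminus X$ is an intersection of decreasing events. Harris's inequality therefore bounds its probability from below, not above.
\item Domination by $D(|T|,q)$ with $q=\Theta((\log n)^m/n)$ is false, because all arcs out of $x$ share its first transition. With probability $n^{-\Theta(c)}$ the state $x$ has no transition into the $v_2$-pool, and then has out-degree $0$ in $\hat T$. In $D(|T|,q)$ that event has probability $\mathrm{e}^{-\Theta((\log n)^m)}$.
\end{itemize}
Any dominated Erd\H{o}s--R\'enyi density must therefore have $q'|T|=O(\log n)$. Building that coupling, or an exploration argument that copes with shared first and last transitions, is exactly the work left undone.

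\textbf{How to close it.} At density $p_{\recht{t}}=c\log n/n$ you need no comparison with Erd\H{o}s--R\'enyi. Lemma \ref{lem:noedges} is a union bound over all pairs of sets: an empty pair has probability $\mathrm{e}^{-\Theta(n\log n)}$, which beats the $\mathrm{e}^{O(n)}$ choices. So a.a.s.\ every set of $\ge\varepsilon n$ states sends a transition into every set of $\ge\delta n$ states, and vice versa.

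Now take $\ge\beta n$ roads labelled $v_2\cdots v_m$ with pairwise distinct starts and pairwise distinct ends (Lemma \ref{lem:diffroads}, via Hall's theorem), and put $\delta=\beta/2$. You can drop your pool restriction, or run everything inside the classes. For any $A,B\subseteq T$ of size $\ge\varepsilon n$:
\begin{itemize}
\item fewer than $\delta n$ starts receive no transition from $A$;
\item the remaining $>\delta n$ roads have distinct ends, so one of those ends sends a transition into $B$;
\item this gives a $\hat T$-arc from $A$ to $B$.
\end{itemize}
Suppose every strong component had size $<|T|-2\varepsilon n$. Peel components off in topological order until their union $X$ first reaches size $\varepsilon n$. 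This gives $\varepsilon n\le |X|<|T|-\varepsilon n$ with no arcs from $T\setminus X$ into $X$, which is a contradiction. So a giant strong component exists deterministically on this a.a.s.\ event.

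This is the argument the paper uses for the empty word in Lemma \ref{lem:strongcon}. For nonempty $w$, the paper instead asserts that its $\hat T_w$ is Erd\H{o}s--R\'enyi, which suffers from the same shared-transition dependence. The cut argument above is the robust way to close this step in either write-up.
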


Indeed, this proves Theorem \ref{thm:01ltlmain}, since the case that $\varphi$ is an $S$-tautology is trivial.
To state the proof of Proposition \ref{prop:K1S}, we first introduce some notation. Throughout this appendix, we fix $S \in 2^A$.

\begin{definition}
Let $w = w_1\cdots w_k \in (2^A)^*$ be a (possibly empty) word.
\begin{enumerate}
\item If $w$ is nonempty, we define a \emph{$w$-road} in $\mathbb{K}$ to be sequence $(x_1,\ldots,x_k)$ of vertices in $\mathbb{K}^1_S$ such that each transition $x_i \rightarrow x_{i+1}$ exists, and such that $\ell(x_i) = w_i$ for all $i$.
\item If $w$ is nonempty and $x,y$ are vertices, then a \emph{$w$-path from $x$ to $y$} is a $w$-road $(x_1,\ldots,x_k)$ in $\mathbb{K}^1_S$ such that the transitions $x \rightarrow x_1$ and $x_k \rightarrow y$ exists.
\item If $w$ is empty, then a \emph{$w$-path} from $x$ to $y$ is a transition $x \rightarrow y$.
\item For a set $T \subseteq \mathbb{K}^1_S$ of vertices, we define $T_{w}$ to be the random graph with state set $T$, where a transition $x \rightarrow y$ exists in $T_{w}$ if and only if a $w$-path from $x$ to $y$ exists in $\mathbb{K}$.
\end{enumerate}
\end{definition}

In particular, for the empty word $\varepsilon \in (2^A)^*$ the graph $T_{\varepsilon}$ is just the subgraph of $\mathbb{K}$ induced by $T$. We make the following claim:

\begin{proposition} \label{prop:strongcon}
Let $S' \in 2^A$, and let $w^1,w^2 \in (2^A)^*$ such that $w^2$ is a nonempty word. Let $T = \{x \in \mathbb{K}^1_S \mid \ell(x) = S'\}$. Then a.a.s. there exists a $T' \subseteq T$ such that $T'_{w^1}$ is strongly connected, and there is a $w^2$-path from $\ini$ to some state in $T'$.
\end{proposition}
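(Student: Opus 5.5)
We work throughout with the unconditioned model $\mathbb{D}$ (conditioned on $|I|=1$ and $\ell(\ini)=S$). By Theorem \ref{thm:KS} together with the independence of initial states, labels and transitions, this changes probabilities only by a factor $1+o(1)$. Fix $k=|w^1|$. The set $T$ has size $|T|\sim\varrho_{S'}n$ a.a.s. by Chernoff, so we condition on the label map $\ell$, assume $|T|\ge \tfrac12\varrho_{S'}n$, and assume every label class has linear size.

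\textbf{Step 1: $T_{w^1}$ is dense.} For $x\neq y$ in $T$, consider the collection of $w^1$-roads with vertices outside $T\cup\{x,y\}$. Compute the expected number $\mu$ of $w^1$-paths from $x$ to $y$ in this collection and the Janson quantity $\Delta$ for overlapping pairs. There are $\Theta(n^k)$ candidate roads, each present with probability $p_{\recht{t}}^{k+1}$, so
\[
\mu=\Theta(n^k)\left(\tfrac{c\log n}{n}\right)^{k+1}=\Theta\!\left(\tfrac{(\log n)^{k+1}}{n}\right).
\]
Overlapping pairs share at least one edge, which costs an extra factor $n^{-1}$ in relative terms, so $\Delta=o(\mu)$. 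Janson's inequality then gives $q:=\mathbb{P}(x\to y \text{ in } T_{w^1})\ge(1-o(1))\mu$. For $k=0$ we simply have $q=p_{\recht{t}}$. In all cases $q\,|T|\ge C\log n$ with $C$ large. Some restrictions may be needed to make $C$ large, but note that $c\varrho_{S'}$ need not exceed $1$. For $k=0$ we therefore only aim for a giant strongly connected component $T'$, not full strong connectivity. This explains why the statement asks only for some $T'\subseteq T$.

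\textbf{Step 2: a giant strongly connected $T'$.} The edges of $T_{w^1}$ are not independent, because different pairs share intermediate roads. We handle this by sprinkling and by vertex-disjointness. Split the non-$T$ vertices of each required label class into a fixed number of random groups. Restrict the $w^1$-paths that witness $x\to y$ to use intermediate vertices from prescribed groups, chosen so that the events we need in a BFS-type exploration are independent or positively correlated (Harris/FKG applies, since all events are increasing in the transitions). Then explore forward from a vertex by BFS in $T_{w^1}$. Each step is dominated from below by a Galton–Watson process with offspring mean $\ge q|T|/2\to\infty$, except on a set of vertices consumed by the exploration. Forward and backward explorations from any vertex reach $\Omega(n)$ vertices of $T$ with probability $1-o(1)$. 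A standard sprinkling argument, using one extra fresh group to connect two linear sets, has failure probability $\exp(-\Omega(n^2q/ n))=o(1)$. It yields a unique giant strongly connected component $T'$ with $|T'|\ge(1-\varepsilon)|T|$. We expect this to be the \textbf{main obstacle}: the roads underlying distinct edges of $T_{w^1}$ share vertices and transitions, so the classical Erdős–Rényi giant-SCC argument must be redone with reserved vertex pools and careful bookkeeping of exposed edges.

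\textbf{Step 3: reaching $T'$ from $\ini$.} Let $w^2=v_1\cdots v_m$ with $m\ge1$, and reserve a fresh group of vertices for the intermediate states, disjoint from those used in Step 2. Conditional on $T'$ (which is linear in size and determined by transitions not involving $\ini$ or the reserved group), expose the out-neighbourhood of $\ini$ layer by layer. The number of $v_1$-labeled successors of $\ini$ is $\mathrm{Bin}(\Theta(n),p_{\recht{t}})$, which is $\Theta(\log n)$ and at least $1$ with probability $1-o(1)$; the same holds at each subsequent layer. The last layer has $\Theta(\log n)$ vertices, each adjacent to $T'$ with probability $1-(1-p_{\recht{t}})^{|T'|}=1-n^{-\Omega(1)}$. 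A union bound over the finitely many layers shows that a $w^2$-path from $\ini$ into $T'$ exists a.a.s. If $w^2$ were allowed to be empty, this step would fail, because $\ini$ might have no edge into $T'$; this is why $w^2$ is required to be nonempty.
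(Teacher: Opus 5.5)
Your route differs from the paper's. The paper uses no Janson, exploration or sprinkling. It rests on four ingredients:
\begin{itemize}
\item the first-moment Lemma~\ref{lem:noedges} (a.a.s.\ any two vertex sets of linear size are joined by a transition);
\item Hall's theorem, to extract $\Theta(n)$ roads with distinct starts and ends;
\item an Erd\H{o}s--R\'enyi comparison for $T_{w^1}$;
\item to reach $T'$, Lemma~\ref{lem:noedges} applied to the starts of $w^2S'$-roads, plus the fact that $\ini$ is uniform among $S$-labelled states.
\end{itemize}
Your concern about dependencies, and your decision to aim only for a giant component, are well founded. In the paper's comparison graph the edges $x\to y$ and $x\to y'$ both depend on the transitions from $x$ to road starts, so it is not Erd\H{o}s--R\'enyi. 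Moreover, if $c\varrho_{w^1_1}<1$, then a.a.s.\ some $x\in T$ has no $w^1_1$-labelled successor at all, so $T_{w^1}$ is not strongly connected (contrary to the paper's ``take $T'=T$''). Your Step~3 is a valid alternative to the paper's last lemma.

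\textbf{The gap.} As written, Steps~1--2 fail whenever the letter $S'$ occurs in $w^1$. You route witnesses through vertices outside $T$, but every $S'$-labelled vertex lies in $T$, so there are no such roads and $\mu=0$. This case does arise in the application ($w^1=\breve v$ and $S'=v_0$, e.g.\ $v=S'S'$). The repair is as follows. Carve pairwise disjoint pools out of all label classes, including $T$: take $V_0\subseteq T$, $V_j\subseteq \ell^{-1}(w^1_j)$ for $1\le j\le k$, and pools for the letters of $w^2$. Keep $\ini$ out of every pool, and look for $T'$ inside $V_0$ only. The proposition needs just some $T'\subseteq T$, so losing part of $T$ costs nothing.

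\textbf{A simpler Step~2.} With the pools in place, Step~2 needs neither Janson nor Galton--Watson nor sprinkling. Let $H$ consist of the transitions from $V_j$ to $V_{j+1}$ (indices mod $k+1$); these are independent. Suppose $H$ had no strongly connected component $C$ containing all but $\varepsilon n$ of its vertices. Then the condensation argument from the proof of Lemma~\ref{lem:strongcon} gives a successor-closed set $X$ such that $X$ and its complement in $H$ both have linear size. A cyclic pigeonhole over $j$ then yields some $j$ with $|X\cap V_j|$ and $|V_{j+1}\setminus X|$ both linear but no transitions from the former to the latter, contradicting Lemma~\ref{lem:noedges}. Take $T':=C\cap V_0$. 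This works because an $H$-path between vertices of $T'$ stays inside $C$ and splits into $w^1$-paths between consecutive visits to $V_0$.

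\textbf{Two further corrections.} First, your sprinkling estimate is off. Two given linear sets fail to be joined with probability $\exp(-\Theta(n\log n))$, not $\exp(-\Omega(nq))$; for $k=0$ the latter is only $n^{-\Omega(1)}$. It is exactly the stronger bound that permits a union bound over all pairs of sets. Second, your closing remark is wrong. Since $T'$ has linear size and is independent of the transitions out of $\ini$, the probability that $\ini$ has no transition into $T'$ is $(1-p_{\recht{t}})^{|T'|}=n^{-\Theta(1)}$. So the nonemptiness of $w^2$ is not needed.
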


\begin{proof}[of Proposition \ref{prop:K1S} from Proposition \ref{prop:strongcon}] By Corollary \ref{cor:counterword}, there exist $u,v \in (2^A)^*$ such that $u$ and $v$ are both nonempty and $u_1 = S$. In fact, we may (and do) assume that $|u| > 1$, by replacing $u$ by $uv$ if necessary. Let $\breve{u},\breve{v} \in (2^A)^*$ be defined by $u = u_0\breve{u}$ and $v = v_1\breve{v}$. Take $S' = v_0$; then a.a.s. a suitable $T'$ exists, for $w^1 = \breve{v}$ and $w^2 = \breve{u}$ (note $|\breve{u}| >0$). Given such a $T'$, we can construct an infinite path in $\mathbb{K}^1_S$ from $h$, labeled $uv^{\omega}$, as follows: we start in $\ini$ (label $u_0$), take a $\breve{u}$-path into $T'$ (labels $\breve{u}v_0$). Because $T'_{\breve{v}}$ is strongly connected, we can indefinitely take $\breve{v}$-paths within $T'$ (labels $\breve{v}v_0$). This path corresponds to the infinite word $u_0\breve{u}v_0(\breve{v}v_0)^{\omega} = uv^{\omega}$. Hence a.a.s. a counterexample to $\varphi$ exists within $\mathbb{K}^1_S$, so $\mathbb{P}(\mathbb{K}^1_S \models \varphi) \rightarrow 0$.
\end{proof}

Our goal is now to prove Proposition \ref{prop:strongcon}. The idea behind the proof is to show that in $T_{w_1}$, each edge exists with probability $\Theta((\log n)^{|w_1|+1|}$. Furthermore, for large $n$, the existence of these edges can be approximated asymptotically as independent events, so $T_{w_1}$ behaves like an Erd\H{o}s-Rényi graph with $p = \Theta((\log n)^{|w^1|+1}/n)$. This is above the giant component threshold of $1/n$, so it has a giant component $T'_{w^1}$. This turns out to be large enough so that we can a.a.s. find a $w^2$-path from the initial vertex into it.

To make the intuition behind being `almost Erd\H{o}s-Rényi' more precise, we need a bit of technical work, starting with a few auxiliary lemmas.

\begin{lemma} \label{lem:noedges}
Let $\alpha,\beta > 0$. Then
\[
\mathbb{P}\left(\exists T,U \subseteq \mathbb{K}^1_S \textrm{ s.t. } \substack{ |T| = \lceil \alpha n\rceil, |U| = \lceil \beta n \rceil,\\\textrm{ and there are no transitions from $T$ to $U$}}\right) \rightarrow 0.
\]
\end{lemma}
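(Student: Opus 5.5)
The plan is a first-moment (union bound) argument over all pairs $(T,U)$. Set $t = \lceil \alpha n\rceil$ and $u = \lceil \beta n\rceil$.

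First we reduce from $\mathbb{K}^1_S$ to the unconditioned model $\mathbb{D}$. The event in question depends only on the transition set. In $\mathbb{D}$, transitions are independent of the labels and of initiality. Conditioning on $|I| = 1$ and $\ell(\ini) = S$ therefore does not change the law of the transitions. The only remaining conditioning is on $\mathbb{D}$ being a Kripke structure, which by Theorem \ref{thm:KS} has probability $1-o(1)$ (the deadlock-freeness part). Hence
\[
\mathbb{P}(\mathcal{E} \text{ in } \mathbb{K}^1_S) \le \frac{\mathbb{P}(\mathcal{E} \text{ in } \mathbb{D})}{1-o(1)},
\]
and it suffices to bound the probability in $\mathbb{D}$, where all $n^2$ transitions are independent with probability $p_{\recht{t}} = c\log n/n$.

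Next we do the union bound. For fixed $T,U$ there are exactly $tu$ ordered pairs $(x,y)\in T\times U$, with overlap and self-loops allowed, since self-loops are allowed as transitions. The probability that none of them is a transition is
\[
(1-p_{\recht{t}})^{tu} \le \exp(-p_{\recht{t}}\, tu) \le \exp(-c\,\alpha\beta\, n\log n).
\]
The number of choices of $(T,U)$ is at most $\binom{n}{t}\binom{n}{u} \le 4^n = \exp(n\log 4)$. The total probability is thus at most
\[
\exp\bigl(n\log 4 - c\alpha\beta\, n\log n\bigr) \to 0,
\]
since $n\log n$ dominates $n$ for fixed $\alpha,\beta,c>0$.

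There is no real obstacle here. The only point needing care is the reduction step: making sure the conditioning defining $\mathbb{K}^1_S$ (on $|I|=1$, on $\ell(\ini)=S$, and on being a KS) does not distort the edge distribution by more than a $1+o(1)$ factor. This holds because the first two conditions are independent of the edges and the third has probability tending to $1$.
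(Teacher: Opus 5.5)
Your proposal is correct and uses the same first-moment (union bound) argument as the paper, which likewise bounds $\mathbb{P}(C_{T,U}=1)$ by $\exp(-\Theta(n\log n))$ and the number of pairs $(T,U)$ by $\exp(O(n))$. Your cruder bounds $(1-p_{\recht{t}})^{tu}\le e^{-p_{\recht{t}}tu}$ and $\binom{n}{t}\binom{n}{u}\le 4^n$ are cleaner than the paper's asymptotic expansion and entropy bound, and your explicit reduction from $\mathbb{K}^1_S$ to $\mathbb{D}$ supplies the independence-of-conditioning step that the paper leaves implicit.
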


\begin{proof}
For two sets $T,U$ of the prescribed size, let $C_{T,U}$ be the indicator of the event that there are no transitions from $T$ to $U$, and let $X = \sum_{T,U} C_{T,U}$. Note that $T$ and $U$ are allowed to overlap. We have 
\begin{align*}
\mathbb{P}(C_{T,U} = 1) &= (1-p_{\recht{t}})^{\lceil \alpha n\rceil\lceil \beta n\rceil}\\
&= \operatorname{exp}\left[ \lceil\alpha n \rceil \lceil \beta n \rceil\log(1-p_{\recht{t}})\right]\\
&= \operatorname{exp}\left[-\alpha\beta n^2p_{\recht{t}} - \tfrac{\alpha\beta}{2}n^2p_{\recht{t}}^2  + \mathcal{O}(p_{\recht{t}}+n^2p_{\recht{t}}^3) \right]\\
&= \operatorname{exp}\left[-\alpha\beta c n \log n - \frac{\alpha \beta c^2}{2}\log(n)^2 + \mathcal{O}\left(\tfrac{\log(n)}{n}+\tfrac{\log(n)^3}{n}\right) \right]\\
&= \operatorname{exp}\left[-\alpha\beta c n \log n - \frac{\alpha \beta c^2}{2}\log(n)^2\right](1+o(1)).
\end{align*}
The number of choices for $(U,T)$ is equal to $\binom{n}{\lceil \alpha n\rceil}\binom{n}{\lceil \beta n\rceil}$, which we can bound via the entropy bound of binomial coefficients:
\begin{align*}
\binom{n}{\lceil \alpha n\rceil}\binom{n}{\lceil \beta n\rceil} &\leq \operatorname{exp}\left[nh\left(\tfrac{\lceil \alpha n\rceil}{n}\right)+nh\left(\tfrac{\lceil \beta n\rceil}{n}\right)\right]\\
&= \operatorname{exp}\left[n(h(\alpha)+h(\beta))\right](1+o(1)),
\end{align*}
where $h$ is the binary entropy function. It follows that 
\[
\mathbb{E}[X] = \operatorname{exp}\left[n(h(\alpha)+h(\beta))-\alpha\beta c n \log n - \frac{\alpha \beta c^2}{2}\log(n)^2\right].
\]
Since the leading term here is the $n \log n$, which has a negative coefficient, we conclude that $\mathbb{E}[X] \rightarrow 0$, hence $\mathbb{P}(X > 0) \rightarrow 0$. This proves the lemma.
\end{proof}

\begin{lemma} \label{lem:matching}
Let $S,T$ be (possibly random) sets of vertices of $\mathbb{K}^1_S$, such that a.a.s. $|T| = (\alpha + o(1))n$, $|U| = (\beta + o(1))n$ for some $\alpha,\beta > 0$. Then a.a.s. there are $(\min(\alpha,\beta)+o(1))n$ edges from $T$ to $U$ such that all sources are different, and all targets are different.
\end{lemma}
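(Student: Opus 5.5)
We will deduce Lemma \ref{lem:matching} from Lemma \ref{lem:noedges} via a König--Hall type argument on the bipartite graph of transitions from $T$ to $U$. (We read the statement with the sets named $T$ and $U$, matching its conclusion.) The upper bound is trivial: any set of edges with distinct sources in $T$ has at most $|T|$ elements, and one with distinct targets in $U$ has at most $|U|$. Hence such a set has at most $\min(|T|,|U|) = (\min(\alpha,\beta)+o(1))n$ edges. The content is the matching lower bound.

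Build the bipartite graph $B$ with left side a copy of $T$, right side a copy of $U$, and an edge $x \to y$ whenever the transition exists in $\mathbb{K}^1_S$. The copies make the overlap between $T$ and $U$ harmless. A set of transitions with distinct sources and distinct targets is exactly a matching in $B$.

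By König's theorem, the maximum matching size equals the minimum vertex cover size. Suppose the maximum matching has size at most $(\min(\alpha,\beta)-\varepsilon)n$ for some fixed $\varepsilon>0$. Then there is a cover $C = C_T \cup C_U$ with $|C_T|+|C_U| \leq (\min(\alpha,\beta)-\varepsilon)n$. Set $T^* = T\setminus C_T$ and $U^* = U\setminus C_U$; by the definition of a cover, $B$ has no edges from $T^*$ to $U^*$.

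We now bound their sizes. Since $|T| = (\alpha+o(1))n$, we get $|T^*| \geq |T| - |C_T| \geq (\alpha - \min(\alpha,\beta)+\varepsilon - o(1))n + |C_U|$. Similarly, $|U^*| \geq (\beta - \min(\alpha,\beta)+\varepsilon-o(1))n + |C_T|$. Adding these and using $|C_T|+|C_U|\ge 0$ is not enough, so we argue by cases. If $|C_U| \leq \tfrac{\beta}{2}n$, then $|U^*| \geq (\beta/2 - o(1))n$ and $|T^*| \geq (\varepsilon/2)n$ for large $n$. Otherwise $|C_T| \leq (\min(\alpha,\beta) - \beta/2)n \leq (\alpha-\beta/2)n$, so $|T^*| \geq (\beta/2-o(1))n$ and $|U^*| \geq (\varepsilon/2)n$ for large $n$. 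In either case, with $\delta = \min(\varepsilon,\beta)/3 > 0$, there are sets of size at least $\lceil \delta n\rceil$ with no transitions between them. Shrinking them to size exactly $\lceil \delta n\rceil$ preserves this.

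Lemma \ref{lem:noedges}, applied with $\alpha=\beta=\delta$, says this happens with probability $o(1)$. It holds for every pair of fixed-size subsets simultaneously, so the randomness of $T$ and $U$ does not matter. Intersecting with the a.a.s.\ events $|T|=(\alpha+o(1))n$ and $|U|=(\beta+o(1))n$ shows that for every fixed $\varepsilon>0$, a.a.s.\ the maximum matching has size at least $(\min(\alpha,\beta)-\varepsilon)n$. A standard diagonal argument, letting $\varepsilon\to0$ slowly, turns this into $(\min(\alpha,\beta)-o(1))n$.

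The main subtlety is that Lemma \ref{lem:noedges} is stated for $\mathbb{K}^1_S$, which is conditioned on being a Kripke structure with one initial state labeled $S$. Its proof, however, computes with independent transitions. The conditioning affects only initial-state and label events, which are independent of transitions, together with the Kripke event of probability $1-o(1)$ (Theorem \ref{thm:KS}). So $o(1)$ bounds transfer. Everything else is the routine case analysis above.
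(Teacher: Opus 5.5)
Your proof is correct and is essentially the paper's argument in dual form. The paper uses the deficiency version of Hall's theorem together with Lemma~\ref{lem:noedges} to show that every linear-size $T'\subseteq T$ has almost all of $U$ as out-neighbours. You instead use K\"onig's theorem together with Lemma~\ref{lem:noedges} to rule out a small vertex cover; this is the same obstruction, a linear-size pair of sets with no transitions between them, seen from the other side. Your version also treats $\alpha<\beta$, $\alpha=\beta$ and $\alpha>\beta$ uniformly, whereas the paper only does $\alpha<\beta$ and calls the other cases analogous.

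Your case split is unnecessary. Since $|C_T|,|C_U|\le |C|\le(\min(\alpha,\beta)-\varepsilon)n$, you directly get $|T^*|\ge(\alpha-\min(\alpha,\beta)+\varepsilon-o(1))n\ge(\varepsilon-o(1))n$, and likewise $|U^*|\ge(\varepsilon-o(1))n$.
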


\begin{proof}
Suppose that $\alpha < \beta$, so a.a.s. $|T| < |U|$ (the cases $\alpha = \beta$ and $\alpha > \beta$ are analogous). By Hall's marriage theorem, the maximum number of edges from $T$ to $U$ we can find without replicating sources or targets is
\[
|T|-\max_{T' \subseteq T}(|T'|-|N_{\text{out}}(T')|),
\]
where $N_{\text{out}}(T')$ is the number of out-neighbours of $T'$ in $U$. Let $\varepsilon > 0$. By Lemma \ref{lem:noedges}, a.a.s. each $T'$ with $|T'| \geq \varepsilon n$ has $|N_{\text{out}}(T')| = (1+o(1))|U| \geq |T'|$; hence these $T'$ do not contribute to $\max_{T' \subseteq T}(|T'|-|N_{\text{out}}(T')|)$. Since this holds for all $\varepsilon > 0$, it follows that a.a.s. $|T|-\max_{T' \subseteq T}(|T'|-|N_{\text{out}}(T')|) = (\alpha+o(1))n$, as was to be shown.
\end{proof}

\begin{definition}
Let $w \in (2^A)^*$ be nonempty. For a $w$-road $(x_1,\ldots,x_k)$ we define its \emph{start} to be $x_1$, and its \emph{end} to be $x_k$.
\end{definition}

\begin{lemma} \label{lem:diffroads}
Let $w = w_1\cdots w_k \in (2^A)^*$ be nonempty, and let $\alpha = \min_{i \leq k} \varrho_{w_i}$. Then a.a.s. there are $(\alpha+o(1))n$ $w$-roads in $\mathbb{K}^1_S$, such that the starts are all different, and the ends are all different.    
\end{lemma}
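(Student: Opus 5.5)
Induction on $k = |w|$, with Lemma~\ref{lem:matching} doing the work at each step. For $k=1$, a $w$-road is just a vertex with label $w_1$. The labels are independent of the transitions, and the conditioning on $\ell(\ini)=S$ and on $\mathbb{K}^1$ only affects $O(1)$ vertices in a way that does not matter asymptotically (Theorem~\ref{thm:KS}). So by the law of large numbers (Chebyshev suffices) the set $T_1 = \{x \mid \ell(x) = w_1\}$ has size $(\varrho_{w_1}+o(1))n$ a.a.s. Taking each such vertex as a road of length one gives at least $(\varrho_{w_1}+o(1))n \geq (\alpha+o(1))n$ roads with distinct starts and ends; we then keep an arbitrary subfamily of size $(\alpha + o(1))n$.

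For the inductive step, assume that a.a.s. we have a family $\mathcal{R}$ of $(\alpha_{k-1}+o(1))n$ $w_1\cdots w_{k-1}$-roads with distinct starts and distinct ends, where $\alpha_{k-1} = \min_{i<k}\varrho_{w_i}$. Let $E$ be the set of their ends, so $|E| = (\alpha_{k-1}+o(1))n$. Let $U = \{y \mid \ell(y) = w_k\}$, so $|U| = (\varrho_{w_k}+o(1))n$. Apply Lemma~\ref{lem:matching} to $E$ and $U$. This gives $(\min(\alpha_{k-1},\varrho_{w_k})+o(1))n = (\alpha_k + o(1))n$ transitions from $E$ to $U$ with distinct sources and distinct targets. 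Each such transition $e\to y$ extends the unique road of $\mathcal{R}$ ending in $e$ to a $w$-road ending in $y$. Starts remain distinct because distinct sources $e$ come from distinct roads, which have distinct starts. Ends are distinct because the targets are distinct.

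The main obstacle is that $E$ and $U$ are random sets that depend on the graph, so one must justify applying Lemma~\ref{lem:matching} to them. That lemma rests on Lemma~\ref{lem:noedges}, which is a uniform statement over \emph{all} pairs of linear-size sets: a.a.s. no pair of such sets has zero transitions between them. The proof of Lemma~\ref{lem:matching} only invokes this uniform event together with the sizes of the sets, so random $E,U$ are admissible, which is exactly why the lemma is phrased for possibly random sets. I would make this explicit in the write-up.

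A second subtlety is that the extended roads are sequences rather than simple paths, so $y$ may already occur earlier on its road. This is harmless, since the definition of a road does not require distinct vertices. The number of induction steps is the fixed $k$, so finitely many a.a.s. events combine into one and the $o(1)$ terms stay $o(1)$. Finally, if $U$ or $E$ overlap, Lemma~\ref{lem:matching} explicitly allows this, as Lemma~\ref{lem:noedges} does.
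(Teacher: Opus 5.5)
Your proposal is correct and follows the paper's proof. Both argue by induction on $k$: the base case is concentration of the number of $w_1$-labeled states, and the inductive step applies Lemma~\ref{lem:matching} between the endpoints of the shorter roads and a label class. The only difference is the direction of extension: you append $w_k$ to $w_1\cdots w_{k-1}$-roads via a matching out of their ends, whereas the paper prepends $w_1$ to $w_2\cdots w_k$-roads via a matching into their starts. Your remarks on applying Lemma~\ref{lem:matching} to random sets, and on why the conditioning in $\mathbb{K}^1_S$ is harmless, make explicit points the paper leaves implicit.
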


\begin{proof}
We prove this by induction on $k$. If $k = 1$, this just amounts to saying that there are $(\varrho_1+o(1))n$ vertices $x$ in $\mathbb{K}^1_S$ with $\ell(x) = w_1$, which is true a.a.s. Now suppose that $k > 1$, and the statements holds for $k-1$. Then by the induction hypothesis, a.a.s. there are $(\min_{1<i\leq k}\varrho_{w_i}+o(1))n$ ($w_2\cdots w_k$)-roads in $\mathbb{K}^1_S$ such that their starts are all different, and their ends are all different. Let $T$ be the set of these starts, and let $U$ be the set of $w_1$-labeled vertices in $\mathbb{K}^1_S$. Since a.a.s. $|T| = (\varrho_{w_1}+o(1))n$, we can use Lemma \ref{lem:matching} to a.a.s. find $(\min_{1\leq i\leq k}\varrho_{w_i})+o(1))n$ pairs $(x,y)$ of vertices in $T \times U$, such that all edges $x \rightarrow y$ exist, and such that the $x$ are all different and the $y$ are all different. If we compose these edges with the corresponding $(w_2\cdots w_k)$-road for each $y \in U$, we find the required $w$-roads.
\end{proof}
We are now in a position to prove the first part of Proposition \ref{prop:strongcon}: that there is a large subset  of $T$ that is strongly connected via $w^1$-paths.

\begin{lemma} \label{lem:strongcon}
Let $S' \in 2^A$, and let $w \in (2^A)^*$. Let $T = \{x \in \mathbb{K}^1_S \mid \ell(x) = S'\}$. Then a.a.s. there exists a $T' \subseteq T$ such that $|T'| = (\varrho_{S'}+o(1))n$ and such that $T'_{w}$ is strongly connected.
\end{lemma}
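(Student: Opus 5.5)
We need to prove: a.a.s. there is $T' \subseteq T$ with $|T'| = (\varrho_{S'}+o(1))n$ and $T'_w$ strongly connected. Here a $w$-path from $x$ to $y$ means a $w$-road $(x_1,\dots,x_k)$ together with the transitions $x \to x_1$ and $x_k \to y$; for $w$ empty it is just a transition.

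\textbf{Strategy.} Rather than an Erd\H{o}s--R\'enyi approximation of $T_w$, I will use an expansion argument built on Lemma~\ref{lem:noedges}: every linear-size set reaches every other linear-size set via a $w$-path. I then remove the few vertices with small $w$-out- or in-reach.

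\textbf{Step 1 (linear sets are linked).} Fix $\varepsilon>0$. I claim that a.a.s., for all $X,Y\subseteq T$ with $|X|,|Y|\geq \varepsilon n$, there is a $w$-path from some $x\in X$ to some $y\in Y$.
\begin{itemize}
\item \emph{Construction.} Let $W_i$ be the set of $w_i$-labelled vertices. By Lemma~\ref{lem:diffroads}, a.a.s. there are $(\alpha+o(1))n$ $w$-roads with distinct starts and distinct ends, where $\alpha>0$ is a constant.
\item \emph{Entry.} By Lemma~\ref{lem:noedges} applied with sizes $\varepsilon n$ and $\alpha n$, every $X$ of size $\geq\varepsilon n$ sends an edge into all but $o(n)$ of the road starts.
\item \emph{Exit.} By the same lemma, all but $o(n)$ road ends send an edge into $Y$. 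Otherwise we would get a linear set of ends with no edge into $Y$, which Lemma~\ref{lem:noedges} excludes a.a.s. uniformly over all pairs.
\item \emph{Matching.} Hence some road has a start hit from $X$ and an end hitting $Y$, giving the required $w$-path.
\end{itemize}
For empty $w$ this step is Lemma~\ref{lem:noedges} directly. Since Lemma~\ref{lem:noedges} is a first-moment bound over all pairs of sets, the statement holds simultaneously for all $X,Y$.

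\textbf{Step 2 (reach sets are large).} For $x\in T$, let $R^+(x)$ be the set of vertices of $T$ reachable from $x$ by iterated $w$-paths inside $T$, and define $R^-(x)$ symmetrically. Call $x$ \emph{good} if $|R^+(x)|,|R^-(x)|\geq \varepsilon n$.
\begin{itemize}
\item \emph{Goal.} Show that a.a.s. only $o(n)$ vertices of $T$ are bad, for every fixed $\varepsilon$ small enough.
\item \emph{Bound on bad vertices.} A bad vertex has a closed set $R^+(x)$ of size $<\varepsilon n$: the set has no $w$-path leaving it into $T\setminus R^+(x)$.
\item \emph{Excluding mid-sized closed sets.} By Step~1, no closed set can have size between $\varepsilon n$ and $|T|-\varepsilon n$.
\item \emph{Excluding small closed sets.} For sets of size $j<\varepsilon n$ I would use a direct first-moment count. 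A set $R$ of size $j$ whose members have no out-edge to a $w_1$-labelled vertex (outside $R$) that starts a good road has probability about $(1-p_{\recht{t}})^{j\cdot\Theta(n)}=n^{-\Theta(cj)}$. The number of such sets is $\binom{n}{j}\le n^j$.
\end{itemize}

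\textbf{Main obstacle.} The first-moment count for small closed sets only works when $c\cdot\Theta(1)>1$, which is not guaranteed by the hypotheses. I therefore expect this step to be the main difficulty.

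My fallback is not to bound the number of bad vertices at all, but to grow the reach sets directly. The expected number of $w$-paths out of a vertex is $\Theta((\log n)^{|w|+1})\to\infty$, so a breadth-first exploration survives with probability $1-o(1)$ per vertex. Linearity of expectation plus Markov's inequality then gives that $o(n)$ vertices of $T$ are bad. The delicate point is the dependence created by exploring shared roads, which I would handle with the standard exploration-process coupling until size $\varepsilon n$.

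\textbf{Step 3 (assembling $T'$).} Let $T'$ be the set of good vertices, so that $|T'|=(\varrho_{S'}+o(1))n$ by concentration of $|T|$ and Step~2.
\begin{itemize}
\item \emph{Strong connectivity.} For good $x,y$, Step~1 gives a $w$-path from $R^+(x)$ to $R^-(y)$, so $y$ is reachable from $x$.
\item \emph{Staying inside $T'$.} Intermediate vertices on reach chains may be bad. To fix this, replace $T'$ by the set of vertices of $T$ lying in the common strong component containing all good vertices. This set contains $T'$, is strongly connected in $T_w$, and has size at most $|T|$.
\item \emph{Limit of $\varepsilon$.} The size is $(\varrho_{S'}+o(1))n$; letting $\varepsilon\to0$ slowly handles the $o(1)$.
\end{itemize}
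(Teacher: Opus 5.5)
\textbf{Step 2 is a gap, but you can skip it.} Your Step 1 is correct and is the real content of the proof. Lemma~\ref{lem:diffroads}, together with the uniform first-moment statement of Lemma~\ref{lem:noedges} applied at both ends of the roads, shows that a.a.s.\ every two subsets of $T$ of size $\geq\varepsilon n$ are joined by a $w$-path, simultaneously for all such pairs. Step 3 is also fine. Step 2, however, is not yet a proof. The first-moment count over small closed sets genuinely fails, not just its bound: for nonempty $w$ with $\varrho_{w_1}c<1$ there are a.a.s.\ about $n^{1-\varrho_{w_1}c}\to\infty$ vertices of $T$ with no $w_1$-labelled out-neighbour, i.e.\ closed singletons. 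Your exploration fallback is only a sketch. It would still have to handle coinciding label classes, and the case where an intermediate layer, rather than $T$, reaches linear size first.

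You can avoid Step 2 entirely by feeding Step 1 into the condensation argument the paper uses for empty $w$. Suppose every strong component of $T_w$ has fewer than $(1-\varepsilon)|T|$ vertices. Repeatedly add to $U$ a component with no $T_w$-edge into the not-yet-added part, and stop once $|U|\geq\varepsilon|T|/2$. Then $U$ has no $w$-path to $T\setminus U$, while $|U|\geq\varepsilon|T|/2$ and $|T\setminus U|>\varepsilon|T|/2$, contradicting Step 1. So a.a.s.\ some strong component $T'$ has $|T'|\geq(1-\varepsilon)|T|$. Then $T'_w$ is strongly connected, being the subgraph of $T_w$ induced on a strong component, and letting $\varepsilon\to0$ finishes the proof. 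Many small closed sets do no harm here, because accumulating sink components automatically produces a mid-sized one.

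\textbf{Comparison with the paper.} For empty $w$ the argument above is exactly the paper's proof. For nonempty $w$ the paper instead couples $T_w$ with a graph $\hat T_w$, treats it as Erd\H{o}s--R\'enyi with edge probability $\Theta((\log n)^2/n)$, and concludes that $T_w$ itself is strongly connected, i.e.\ $T'=T$. Your Step 1 is the more robust route. The edges of $\hat T_w$ with a common source all depend on the same indicators $\hat E_{xs_i}$, so they are far from independent. In fact the conclusion $T'=T$ fails when $\varrho_{w_1}c<1$, by the closed-singleton count above. So discarding $o(n)$ vertices of $T$, as you do, is necessary. Your good/bad analysis would additionally describe $T'$: it contains every vertex with linear in- and out-reach. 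The price is the exploration coupling, whereas the condensation argument yields the lemma from Step 1 alone.
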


\begin{proof}
If $w$ is the empty word, then $T_w$ is the induced subgraph of $T$ in $\mathbb{K}^1_S$. Let $0 < \varepsilon < \tfrac{1}{2}$; we will show that 
\[
\mathbb{P}(T \textrm{ has a strongly connected component of size $>(1-\varepsilon)\varrho_{S'} n$}) \rightarrow 1.
\]
Indeed, suppose that this is not the case, i.e., the largest strongly connected component has size at most $(1-\varepsilon)\varrho_{S'} n$. Let $D$ be the directed acyclic graph of strongly connected components in $T$. Construct a subset $D' \subseteq D$ recursively as follows: we start with $D' = \varnothing$, and we keep adding minimal elements of $D \setminus D'$ to $D'$. We stop when the SCCs in $D'$ have total size $\geq \tfrac{\varepsilon\varrho_{S'}}{2}n$. Since the largest SCC has size at most $(1-\varepsilon)\varrho_{S'} n$, the total size of the SCCs in $D'$ is $< (1-\tfrac{\varepsilon}{2})\varrho_{S'} n$. Let $S$ be the union of these SCCs, and let $S'$ be its complement. Then by construction there are no edges from $S$ to $S'$. Since a.a.s. $|S'| \geq (\tfrac{\varepsilon\varrho_{S'}}{2}+o(1))n$ and $|S| \geq \tfrac{\varepsilon \varrho_{S'}}{2}n$, this a.a.s. never happens by Lemma \ref{lem:noedges}. We conclude that our assumption is a.a.s. never true, i.e. $\mathbb{P}(T \textrm{ has a strongly connected component of size $> (1-\varepsilon)\varrho_{S'} n$}) \rightarrow 1$, as we aimed to show.

Now suppose that $w = w_1\cdots w_k$ for some $k \geq 1$. There is a $\beta > 0$ such that a.a.s. there exists a set $\mathcal{I}$ of $w$-roads in $\mathbb{K}^1_S$ of size $\geq \beta n$ , such that all starts are different, and all ends are different. Let $\{s_i\}_{i \in \mathcal{I}}$ be the set of starts, and let $\{e_i\}_{i \in \mathcal{I}}$ be the set of ends. For $x,y \in T$, we let $E_{xy}$ be the indicator of the event that the transition $x \rightarrow y$ exists in $\mathbb{K}^1_S$, and we let $E'_{xy}$ to be a new Bernoulli random variable with probability $c\log n/n$, such that all $E'_{xy}$ are independent from each other and from all random variables so far. Define a new random variable $\hat{E}_{xy}$ by
\[
\hat{E}_{xy} = \begin{cases}E_{xy}, & \textrm{ if $x \rightarrow y$ is not an edge in any $w$-road in $\mathcal{I}$},\\
E'_{xy}, & \textrm{ otherwise.}
\end{cases}
\]
Note that this means that $\hat{E}_{xy} \leq E_{xy}$. Now consider the graph $\hat{T}_w$ as follows: its vertices are the vertices of $T$, and there is an edge $x \rightarrow y$ in $T_l$ if and only if there is a $w$-road $i$ in $\mathcal{I}$, and $\hat{E}_{xs_i} = \hat{E}_{t_iy} = 1$. Note that $\hat{T}_l$ is a subgraph of $T_l$, and that $\hat{T}_l$ is an Erd\H{o}s-Rényi graph with a.a.s. $(\varrho_{S'}+o(1))n$ vertices. The probability of an edge $x \rightarrow y$ is a.a.s.
\begin{align*}
\mathbb{P}(x \rightarrow y \textrm{ in $\hat{T}_l$}) &\geq 1-(1-p_{\recht{t}}^2)^{\beta n}\\
&= 1-\operatorname{exp}\left[\beta n\log(1-p_{\recht{t}}^2)\right]\\
&= 1-\operatorname{exp}\left[-\beta np_{\recht{t}}^2 + \mathcal{O}(np_{\recht{t}}^4)\right]\\
&= 1-\operatorname{exp}\left[\tfrac{-\beta c (\log n)^2}{n}+\mathcal{O}\left(\tfrac{(\log n)^4}{n^3}\right)\right]\\
&= 1- \left(1-\tfrac{\beta c (\log n)^2}{n}+\mathcal{O}\left(\tfrac{(\log n)^4}{n^3}\right)\right)+\mathcal{O}\left(\tfrac{(\log n)^4}{n^2}\right)\\
&= \tfrac{\beta c (\log n)^2}{n} +\mathcal{O}\left(\tfrac{(\log n)^4}{n^2}\right).
\end{align*}
This is above the threshold value of $\frac{\log n}{n}$ of strong connectivity for Erd\H{o}s-R\'enyi graphs, hence $\hat{T}_l$ is strongly connected a.a.s. Since $T_l$ only has more edges, this means that a.a.s. $T_l$ is strongly connected as well, so we can take $T'= T$.
\end{proof}

To finish the proof of Proposition \ref{prop:strongcon}, we just need to show that the initial state a.a.s. has a $w^2$-road into this $T'$:

\begin{lemma}
Let $S' \in 2^A$, and let $w \in (2^A)^*$. Let $T = \{x \in \mathbb{K}^1_S \mid \ell(x) = S'\}$, and let $T' \subseteq T$ be such that a.a.s. $|T'| = (\varrho_{S'}+o(1))n$. Then a.a.s. there is a $w$-path from the initial state of $\mathbb{K}^1_S$ into $T'$.
\end{lemma}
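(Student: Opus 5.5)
We need a $w$-path from $\ini$ to some vertex of $T'$, i.e. a $w$-road $(x_1,\ldots,x_k)$ with $\ini \rightarrow x_1$ and $x_k \rightarrow y$ for some $y \in T'$. When $w$ is empty we only need a single transition $\ini \rightarrow y$ with $y \in T'$.

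The plan is to use Lemma \ref{lem:diffroads} to supply linearly many candidate roads, then expose the edges out of $\ini$ and the edges from road ends into $T'$, and check that both kinds of edge a.a.s. exist.

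\textbf{Step 1 (candidate roads).} For nonempty $w=w_1\cdots w_k$, Lemma \ref{lem:diffroads} gives a.a.s. a family $\mathcal{I}$ of $w$-roads with $|\mathcal{I}| \geq \beta n$ for some $\beta>0$, with distinct starts $s_i$ and distinct ends $e_i$. We may assume $\ini$ lies on none of these roads, since removing at most one road changes nothing.

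\textbf{Step 2 (ends into $T'$).} Let $E$ be the set of ends. Apply Lemma \ref{lem:matching} to $E$ and $T'$; both have linear size a.a.s. (pass to subsets if the sizes are only bounded below). This yields a.a.s. $\gamma n$ roads, for some $\gamma>0$, each with an edge $e_i \rightarrow y_i \in T'$. Call $\mathcal{J}$ the set of their starts; then $|\mathcal{J}| \geq \gamma n$.

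\textbf{Step 3 (edge from $\ini$).} It remains to show that $\ini$ has a transition into $\mathcal{J}$. If the transitions out of $\ini$ were independent of $\mathcal{J}$, the failure probability would be $(1-p_{\recht{t}})^{\gamma n} \le n^{-c\gamma} \to 0$. The case $w$ empty is the same computation with $\mathcal{J}$ replaced by $T'$, giving $(1-p_{\recht{t}})^{(\varrho_{S'}+o(1))n} \to 0$.

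\textbf{Main obstacle: dependence.} The sets $\mathcal{I}$, $\mathcal{J}$ and possibly $T'$ were built from the same random graph, which includes the out-edges of $\ini$. I would handle this by exposure, as in the proof of Lemma \ref{lem:strongcon}:

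\begin{itemize}
\item Run Steps 1 and 2 in the graph with $\ini$ deleted. This changes $n$ by one and the conditioning $\ell(\ini)=S$ is irrelevant there, so the lemmas still apply.
\item The out-edges of $\ini$ are then independent of everything constructed, which justifies the bound in Step 3.
\end{itemize}

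For $T'$ this requires $T'$ to be determined a.a.s. without using $\ini$'s out-edges. Two routes are available:

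\begin{itemize}
\item Argue via Lemma \ref{lem:noedges} applied to $\{\ini\}$ replaced by a union bound. Specifically, the probability that $\ini$ has no edge to any fixed set of size $\gamma n$ is $n^{-c\gamma}$, but a union over all such sets is too expensive. That is why the exposure argument is preferred.
\item Alternatively, if the $T'$ in use is all of $T$, as Lemma \ref{lem:strongcon} actually delivers, then $T'$ depends only on labels and no issue arises.
\end{itemize}
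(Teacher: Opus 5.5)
You correctly identify the real difficulty: the out-edges of $\ini$ must be fresh with respect to everything used to build the target set. Your exposure argument is sound whenever $T'$ does not depend on those edges. The gap is that you secure this only for $T'=T$, and your claim that this is what Lemma~\ref{lem:strongcon} delivers holds only for nonempty $w$. For the empty word, its $T'$ is the largest strongly connected component of the subgraph induced on $T$. That case is used in Proposition~\ref{prop:K1S} whenever $|v|=1$, possibly with $S'=S$ (e.g.\ the counterexample $\varnothing^{\omega}$ of Example~\ref{ex:taut} with $S=\varnothing$). Then $\ini\in T$, so the component $T'$ depends on $\ini$'s out-edges, hence so does your $\mathcal{J}$, and the bound $(1-p_{\recht{t}})^{\gamma n}$ in Step 3 is no longer justified. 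Some restriction on $T'$ is unavoidable anyway: taking $T'$ to be $T$ minus the a.a.s.\ polylogarithmically many endpoints of $w$-paths from $\ini$ meets the size hypothesis and defeats the statement. So the question is which restriction you can afford.

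The missing idea, which is how the paper argues, is to randomize the \emph{identity} of $\ini$ rather than its edges. Since $T'$ is built from the labeled graph without reference to $I$, you may generate the graph, construct $T'$ and the roads, and only then pick $\ini$ uniformly among the $\approx\varrho_S n$ vertices labeled $S$. Let $\mathcal{X}$ be the set of $\geq\alpha n$ road starts. The paper takes $wS'$-roads from Lemma~\ref{lem:diffroads} and discards the $o(n)$ of them ending in $T\setminus T'$; your matching step serves equally well. Lemma~\ref{lem:noedges} then shows that a.a.s.\ fewer than $\tfrac{\varepsilon\varrho_S}{2}n$ vertices have no transition into $\mathcal{X}$. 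Being a union bound over all pairs of linear-size sets, that lemma does not care how $\mathcal{X}$ depends on the graph, so $\ini$ is bad with probability at most $\varepsilon+o(1)$. This handles every $T'$ that is a function of the labeled graph, including the giant component above. You could instead dodge the problematic case upstream, e.g.\ by replacing $v$ with $vv$ so that $\breve v$ is nonempty and Lemma~\ref{lem:strongcon} gives $T'=T$. But that proves a weaker lemma than the one stated.
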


\begin{proof}
In the proof of Lemma \ref{lem:strongcon}, the construction of $T'$ is independent of the set of initial states $I$, since the random set $I$ is defined independently from transitions and state labels. Therefore, we can obtain $\mathbb{K}^1_S$ by first generating $\mathbb{K}$, and then uniformly selecting a state in $\{x \in \mathbb{K} \mid \ell(x) = S\}$ to be the initial state. 

Now let $\varepsilon > 0$; we will prove that 

\begin{equation} \label{eq:liminf}
\liminf \mathbb{P}(\exists \textrm{ $w$-path from $\ini$ into $T'$}) \geq 1-\varepsilon;
\end{equation}
this proves that a.a.s. such a path exists. By Lemma \ref{lem:diffroads} there exists an $\alpha > 0$ such that a.a.s. there are $\geq \alpha n$ different $wS'$-roads in $\mathbb{K}$ such that the starts and ends are all different. These ends all lie in $T$; by discarding $o(n)$ roads ending in $T \setminus T'$ and amending $\alpha$ if necessary, we a.a.s. find a set $\mathcal{I}$ of $\geq \alpha n$ different $wS'$-roads whose starts are all different and whose ends are all different and lie in $T'$. Now consider the set $\mathcal{X}$ of the $\geq \alpha n$ starts of these roads. By Lemma \ref{lem:noedges}, a.a.s. there is no set of size $\geq \frac{\varepsilon \varrho_S}{2} n$ without transitions into $\mathcal{X}$. In particular, there are a.a.s. $> (1-\varepsilon)n\varrho_S$ states with label $S$ and with a transition into $\mathcal{X}$; these states have a $w$-path into $T'$. Since the initial state is drawn uniformly from the $\approx n\varrho_{S}$ states with label $S$, this proves \eqref{eq:liminf}.
\end{proof}

\subsection{Proof of Theorem \ref{thm:01ltl}}

As usual, we only care about the case that $\varphi$ is not a tautology; then there is an $S \in 2^A$ such that $\varphi$ is not an $S$-tautology. With high probability, there exists an $x \in I$ such that $\ell(x) = S$; suppose that this is the case. Let $\hat{\mathbb{K}}$ be the KS obtained from $\mathbb{K}$ with set of initial states $\hat{I} = \{x\}$. Then $\hat{\mathbb{K}}$ is an instance of $\mathbb{K}^1_S$, hence by Proposition \ref{prop:K1S} we have $\mathbb{P}(\hat{\mathbb{K}} \models \varphi) \rightarrow 0$. On the other hand, since $\hat{\mathbb{K}}$ and $\mathbb{K}$ are identical except for their initial sets and $\hat{I} \subseteq I$, we have $\mathbb{P}(\hat{\mathbb{K}} \models \varphi) \geq \mathbb{P}(\mathbb{K} \models \varphi)$. Together, these two results prove the Theorem.

\subsection{Example \ref{ex:ctl} explained} \label{app:ex}

First, note that
\begin{align*}
\mathbb{E}[X] &= np_{\recht{i}}(1-p_ap_{\recht{t}})^n\\
&= n^{1-r}\left(1-p_acn^{-1}\log(n)\right)^n \\
&= \operatorname{exp}\left[(1-r)\log(n) + n\log(1-p_acn^{-1}\log(n))\right]\\
&= \exp\left[(1-r)\log(n)-n\left(-p_acn^{-1}\log(n) +\mathcal{O}(n^{-2}\log(n)^2)\right)\right]\\
&= \exp\left[(1-r-p_ac)\log(n)+\mathcal{O}(n^{-1}\log(n)^2)\right]\\
&\rightarrow \begin{cases}
    0, & \textrm{ if } r+p_ac > 1,\\
    1, & \textrm{ if } r+p_ac = 1,\\
    \infty, & \textrm{ if } r+p_ac< 1.
\end{cases}
\end{align*}

The case $r+p_ac > 1$ was already handled in Example \ref{ex:ctl}. Now suppose $r+p_ac < 1$, and let $\varepsilon = 1-r-p_ac$. Then $\mathbb{E}[X] = (1+o(1))n^{\varepsilon}$. Furthermore, the $C_x$ are independent Bernoulli variables with probability $\varrho := p_{\recht{i}}(1-p_ap_{\recht{t}})^n = (1+o(1))n^{\varepsilon-1}$, so
\begin{align*}
\operatorname{Var}(X) &= n\varrho(1-\varrho)\\
&= np_{\recht{i}}(1-p_ap_{\recht{t}})^n(1-p_{\recht{i}}(1-p_ap_{\recht{t}})^n)\\
&= (1+o(1))n^{\varepsilon}(1-n^{\varepsilon-1})\\
&= o(n^{2\varepsilon}) = o(\mathbb{E}[X]^2).
\end{align*}
By the second moment method, this means that $\mathbb{P}(X = 0) \rightarrow 0$.

Now suppose $r+p_ac = 1$. We will show that $X$ converges in distribution to a Poisson random variable by showing $\mathbb{E}[(X)_k] \rightarrow 1$ for all $k \in \mathbb{Z}_{\geq 1}$, where $(X)_k = X(X-1)\cdots(X-k+1)$. Note that $\mathbb{E}[(X)_k]$ is the expected number of $k$-tuples of different vertices $(x_1,\ldots,x_k)$ such that $C_{x_1}=\cdots = C_{x_k} = 1$. Again, the $C_x$ are independent Bernoulli random variables, with probability $\varrho = p_{\recht{i}}(1-p_ap_{\recht{t}})^n = (1+o(1))n^{-1}$. For each tuple we have
\begin{align*}
\mathbb{P}(C_{x_1} = \cdots = C_{x_k} = 1) &= \varrho^k \\
&= (1+o(1))n^{-k}.
\end{align*}
Since there are $\Theta(n^k)$ of such tuples, we get $\mathbb{E}[(X)_k] = 1+o(1)$, as was to be shown. Hence $X$ converges to a Poisson distribution with parameter $1$, and $\mathbb{P}(X=0) \rightarrow \textrm{e}^{-1}$.

\subsection{Proof of Lemma \ref{lem:bustep}}

Clearly, $\sem{\mathsf{EX}\alpha} \subseteq \sem{\mathsf{EF}\alpha}$, so it suffices to consider $\mathsf{EF}\alpha$ for $\alpha \equiv \bot$ and $\mathsf{EX}\alpha$ $\alpha \not \equiv \bot$.
If $\alpha \equiv \bot$, then clearly $\sem{\mathsf{EF}\alpha} = \varnothing = \sem{\bot}$. On the other hand, suppose that $\alpha \not \equiv \bot$, i.e, $\varrho_{\alpha} > 0$. The assumption on $c$ ensures that $\varrho_{\alpha}c > 1$, so certainly $r+\varrho_{\alpha}c > 1$. The same argument as in Example \ref{ex:ctl} shows that $\sem{\mathsf{EX}\alpha} = V = \sem{\top}$ a.a.s.

The result for $\mathsf{AX}$ and $\mathsf{AG}$ follows directly, since $\mathsf{AX}\alpha \equiv \neg \mathsf{EX}\neg \alpha$ and $\mathsf{AG}\alpha \equiv \neg \mathsf{EF}\neg \alpha$.

We now consider $\mathsf{A}(\alpha'\mathsf{U}\alpha)$; this also proves the result for $\mathsf{AF}$ and $\mathsf{EG}$ since $\mathsf{AF}\alpha \equiv \mathsf{A}(\top\mathsf{U}\alpha)$ and $\mathsf{EG}\alpha \equiv \neg \mathsf{AF}\neg\alpha$. If $\alpha \equiv \top$, then clearly $\sem{\mathsf{A}(\alpha'\mathsf{U}\alpha)} = V = \sem{\top}$; and if $\alpha \equiv \bot$, then $\sem{\mathsf{A}(\alpha'\mathsf{U}\alpha)} = V = \sem{\bot}$. Now suppose $\alpha \not \equiv \top,\bot$, so $0 < \varrho_{\alpha} < 1$; we claim that a.a.s. $\sem{\alpha}$ and $\sem{\neg\alpha}$ are strongly connected. To see this for $\sem{\alpha}$, let $0 < \varepsilon < 1-(r+\varrho_{\alpha}c)^{-1}$; then
\[
\mathbb{P}\Big[(1-\varepsilon)n\varrho_{\alpha} < |\sem{\alpha}| < (1+\varepsilon)n\varrho_{\alpha}\Big] \rightarrow 1.
\]
Suppose that $|\sem{\alpha}|$ is within these bounds, so $\frac{|\sem{\alpha}|}{(1+\varepsilon)\varrho_{\alpha}} < n < \frac{|\sem{\alpha}|}{(1-\varepsilon)\varrho_{\alpha}}$; then the subgraph of $\mathbb{K}$ induced by $\sem{\alpha}$ is a random directed graph, with $\rightarrow \infty$ vertices, and edge probability
\begin{align*}
\frac{c\log n}{n} &\geq \frac{c\log\left(\frac{|\sem{\alpha}|}{\varrho_{\alpha}(1+\varepsilon)}\right)}{\frac{|\sem{\alpha}|}{(1-\varepsilon)\varrho_{\alpha}}}\\
&= \frac{(1-\varepsilon)\varrho_{\alpha}c\log(|\sem{\alpha}|)-c\log(\varrho_{\alpha}(1+\varepsilon))}{|\sem{\alpha}|}.
\end{align*}
By the assumption on $\varepsilon$ we have $(1-\varepsilon)\varrho_{\alpha}c > 1$. As a result, the edge probability is larger than the threshold for strong connectivity \cite{graham2008note}, and $\sem{\alpha}$ is strongly connected a.a.s.; the proof for $\sem{\neg \alpha}$ is analogous. 

Now let $x$ be any state. If $x \in \sem{\alpha}$, then clearly $x \in \sem{\mathsf{A}(\alpha'\mathsf{U}\alpha)}$. However, if $x \in \sem{\neg \alpha}$, then a.a.s. there is an infinite path from $x$ entirely within $\sem{\neg \alpha}$, as this is strongly connected. Hence $x \notin \sem{\mathsf{A}(\alpha'\mathsf{U}\alpha)}$. We conclude that $\mathsf{A}(\alpha'\mathsf{U}\alpha) \sim \alpha$.

Finally, consider $\mathsf{E}(\alpha'\mathsf{U}\alpha)$; the case that $\alpha \equiv \bot$ is trivial. Note that $\alpha \vee (\alpha' \wedge \mathsf{EX}\alpha)$ implies $\mathsf{E}(\alpha'\mathsf{U}\alpha)$, and $\mathsf{E}(\alpha'\mathsf{U}\alpha)$ implies $\alpha \vee \alpha'$. Hence
\[
\sem{\alpha \vee (\alpha' \wedge \mathsf{EX}\alpha)} \subseteq \sem{\mathsf{E}(\alpha'\mathsf{U}\alpha)} \subseteq \sem{\alpha \vee \alpha'}.
\]
Since $\mathsf{EX}\alpha \sim \top$, we have $\alpha \vee (\alpha' \wedge \mathsf{EX}\alpha) \sim \alpha \vee \alpha'$. This shows that a.a.s. $\sem{\mathsf{E}(\alpha'\mathsf{U}\alpha)} = \sem{\alpha \vee \alpha'}$, as was to be shown.

\end{document}